\documentclass[11pt]{article}

\usepackage{amssymb,fullpage,tikz,amsmath,amsthm,xspace,algorithm,algpseudocode,colortbl, subcaption, makecell, authblk,pgfplots,hyperref}
\usepackage[utf8]{inputenc}
\hypersetup{colorlinks,allcolors=blue}

\usepackage[url=false]{biblatex}
\addbibresource{main.bib}

%ADDED:
\usepackage{amsmath}
\usepackage{xcolor}

\usepackage{soul}

\usepackage{graphicx}
\usepackage{wrapfig}
\usepackage{amsthm}
\usepackage{booktabs}
\usepackage{algorithm}
\usepackage{xspace}
\usepackage{algpseudocode}
\usepackage{subcaption}
\usepackage{makecell}
\usepackage{tikz} 
\usepackage{amssymb}
\usepackage{needspace}

\newtheorem{theorem}{Theorem}
\newtheorem{remark}[theorem]{Remark}

\newtheorem{corollary}{Corollary}[theorem]
\newtheorem{lemma}[theorem]{Lemma}

\newcommand{\avg}{\ensuremath{{\mathtt{avg}}}\xspace}
\newcommand{\cvar}{\ensuremath{{\mathtt{CVaR}}}\xspace}
\newcommand{\perf}{\ensuremath{{\mathtt{pr}}}\xspace}

\newcommand{\OPT}{\mathrm{OPT}}
\DeclareMathOperator*{\argmin}{arg\,min}
\DeclareMathOperator*{\argmax}{arg\,max}

\usepackage{booktabs, tabularx}

\begin{document}

% \title{Decision-Theoretic Approaches in Learning-Augmented Algorithms\thanks{Partially supported by the grant ANR-23-CE48-0010 PREDICTIONS from the French National Research Agency (ANR).}}
\title{Decision-Theoretic Approaches for Improved Learning-Augmented Algorithms\thanks{This work was supported by the grant ANR-23-CE48-0010 PREDICTIONS from the French National Research Agency (ANR).}}

\author[2]{Spyros Angelopoulos}
\author[1]{Christoph Dürr}
\author[1]{Georgii Melidi}

\affil[1]{Sorbonne University, CNRS, LIP6, Paris, France}
\affil[2]{CNRS and International Laboratory on Learning Systems, Montreal, Canada}

\date{}
\maketitle

\begin{abstract}
We initiate the systematic study of decision-theoretic metrics in the design and analysis of algorithms with machine-learned predictions. We introduce approaches based on both deterministic measures such as {\em distance}-based evaluation,  
that help us quantify how close the algorithm is to an ideal solution, and stochastic measures that balance the trade-off between the algorithm's performance and the {\em risk} associated with the imperfect oracle. 
These approaches allow us to quantify the algorithm's performance across the full spectrum of the prediction error, and thus choose the {\em best} algorithm within an entire class of otherwise incomparable ones. We apply our framework to three well-known problems from online decision making, namely ski-rental, one-max search, and contract scheduling.
\end{abstract}

\section{Introduction}
\label{sec:introduction}

The field of learning-augmented computation has experienced remarkable growth recently. The focus, in this area, is on algorithms that leverage a machine-learned {\em prediction} on some key elements of the input, based on historical data. The objective is to obtain algorithms that outperform the pessimistic, worst-case guarantees that apply in the standard settings. Online algorithms with ML predictions were first studied systematically in~\cite{DBLP:journals/jacm/LykourisV21} and~\cite{NIPS2018_8174} and since then, the learning-augmented lens has been applied to numerous settings, including rent-or-buy problems~\cite{DBLP:conf/icml/GollapudiP19}, graph optimization~\cite{DBLP:conf/soda/AzarPT22}, secretaries~\cite{DBLP:journals/disopt/AntoniadisGKK23}, packing and covering~\cite{bamas2020primal}, and scheduling~\cite{DBLP:conf/soda/LattanziLMV20}. 
This is only a representative list; see the repository~\cite{predictionslist}.

A major challenge in learning-augmented algorithms is the theoretical analysis, and its interplay with the design considerations. Unlike the standard model, which focuses on the performance on worst-case inputs such as the {\em competitive ratio}~\cite{DBLP:books/daglib/0097013}, the analysis of algorithms with predictions is multi-faceted, and involves objectives in trade-off relations. Typical desiderata require that the algorithm has good {\em consistency} (informally, its performance assuming a perfect, error-free prediction) as well as {\em robustness} (i.e., its performance under an arbitrarily bad prediction of unbounded error). Beyond these two extremes, there is an additional natural requirement that the algorithm's performance degrades {\em smoothly} as a function of the prediction error. 

It is unsurprising that not all of the above objectives can always be attained and simultaneously optimized~\cite{DBLP:conf/esa/LavastidaM0X21}. Such inherent analysis limitations have an important effect on the algorithm's design. One concrete methodology is to design algorithms that optimize the trade-off between consistency and robustness, often called {\em Pareto-optimal} algorithms; e.g.~\cite{sun2021pareto,DBLP:conf/eenergy/Lee0HL24,wei2020optimal,DBLP:conf/aistats/ChristiansonSW23}. Another design approach is to enforce smoothness, without quantifying explicitly the loss in terms of consistency or robustness, e.g.,~\cite{DBLP:conf/aaai/0001KZ22,DBLP:journals/disopt/AntoniadisGKK23}. 

Each approach has its own merits, but also certain deficiencies. Pareto-optimality may lead to algorithms that are {\em brittle}, in that their performance may degrade dramatically even in the presence of imperceptive prediction error~\cite{DBLP:journals/corr/abs-2408-04122}. From a practical standpoint, this drawback renders such algorithms highly inefficient. Even if brittleness can be avoided, one may obtain an entire {\em class} of Pareto-optimal algorithms, whose members exhibit incomparable smoothness~\cite{DBLP:journals/corr/abs-2501-12770}.
On the other hand, smoothness can often be enforced by assuming an upper bound on the prediction error, 
which can be considered, informally, as the {\em confidence} on the prediction oracle or the {\em tolerance} to prediction errors. The design and the analysis are then both centered around this confidence parameter~\cite{DBLP:conf/aaai/0001KZ22,DBLP:journals/disopt/AntoniadisGKK23}. However, this approach leads to algorithms that may be inferior for a large range of the prediction error, and notably when the prediction is highly accurate (i.e., the error is small). It also requires either an explicit knowledge or an ad-hoc choice of this confidence value. 

We are thus confronted with the following central question: {\em Among the many possible algorithms, each with its own performance function, how to choose the ``best''?} Here, the performance function is the smoothness that interpolates between the extreme points of consistency and robustness. Answering this question hinges on the choice of a principled measure for the comparison of performance curves, which is typically the purview of decision theory and the focus of this work. 
%The above design methodologies are focused on extreme values of prediction error: either zero, or as high as the confidence value. Instead, one should opt for a {\em global} approach that compares algorithms on the entire spectrum of the prediction error, instead of only on extreme points. In other words, the comparison of algorithms must be based on the entirety of their performance functions, a question that is typically the purview of the field of decision theory. In this work, we initiate the systemic study of such decision-theoretic approaches within the domain of learning-augmented algorithms.  

\smallskip

{\bf Three classic problems: ski rental, one-max search, and contract scheduling.}
To demonstrate our approaches, we consider three classic problems. 
%We discuss these applications briefly and informally, and we refer to the corresponding sections for the formal definitions.
Our first problem, namely 
{\em ski-rental}, is a classic formulation of rent-or-buy settings, and has served as proving grounds for learning-based algorithmic approaches. Given an unknown horizon of days, the decision-maker must decide on which day to stop renting, and irrevocably buy the equipment.  The best deterministic competitive ratio is 2~\cite{DBLP:journals/algorithmica/KarlinMRS88} (assuming a continuous setting), however a prediction on the horizon length can help improve the competitive ratio, as has been shown in several works on this problem and its extensions~\cite{DBLP:conf/innovations/0001DJKR20, purohit2018improving, wei2020optimal, 6566944, DBLP:conf/icml/GollapudiP19, DBLP:conf/nips/WangLW20, zhao2024learning}. 
Pareto-optimal algorithms were studied in~\cite{wei2020optimal, DBLP:conf/innovations/0001DJKR20, NIPS2018_8174}. Furthermore, the recent work of~\cite{DBLP:journals/corr/abs-2501-12770} described a parameterized class of algorithms, all of which are Pareto-optimal, but exhibit different, and incomparable smoothness.

A second problem that is fundamental in sequential decision making is {\em one-max search}, in which a trader aims to sell an indivisible asset. Here, the input is a sequence $\sigma$ of {\em prices}, and the trader must accept one of the prices in $\sigma$ irrevocably. %In the standard online setting, an optimal competitive ratio was obtained in, but 
The problem and its generalizations have a long history of study, see e.g.~\cite{el2001optimal, mohr2014online,clemente2016advice,damaschke2009online,DBLP:conf/eenergy/Lee0HL24} as well as Chapter 14 in~\cite{DBLP:books/daglib/0097013}. 
The learning-augmented setting in which the algorithm leverages a prediction on the maximum price in $\sigma$ was studied in~\cite{sun2021pareto}, which gave Pareto-optimal algorithms. However, this algorithm suffers from brittleness~\cite{DBLP:journals/corr/abs-2408-04122}. An algorithm with smooth error degradation, but without consistency/robustness guarantees, was proposed in~\cite{DBLP:conf/aaai/0001KZ22}, based on a tolerance parameter $\delta$. However, this algorithm exhibits inferior performance when the prediction is highly accurate.

Last, we consider a problem that is fundamental in real-time systems and bounded-resource reasoning in AI, namely {\em contract scheduling}~\cite{RZ.1991.composing,steins,aaai06:contracts}.
%but also related to other well-studied sequencing problems such as online bidding~\cite{anand2021regression} and searching in the infinite line~\cite{beck:yet.more}. 
Here, the aim is to design a system with interruptible capabilities via executions of a non-interruptible algorithm. The performance of is measured by the {\em acceleration ratio}, which quantifies the multiplicative loss due to the repeated executions. 
%In the absence of any information, the optimal acceleration ratio is equal to 4. 
The setting in which an oracle predicts the interruption time was studied in~\cite{DBLP:journals/jair/AngelopoulosK23}, where a Pareto-optimal schedule was given.
However, all Pareto-optimal algorithms are brittle, as shown in~\cite{DBLP:journals/corr/abs-2408-04122}. Assuming a tolerance $\delta$ on the range of the prediction error, a schedule with improved smoothness was proposed in~\cite{DBLP:journals/jair/AngelopoulosK23}, though it is again inefficient for small prediction error.

%In particular, their work showed that the optimal consistency of a 4-robust schedule is equal to 2, however Pareto-optimal schedules are brittle~\cite{DBLP:journals/corr/abs-2408-04122}. If an upper bound $\delta$ on the prediction error is known,~\cite{DBLP:journals/jair/AngelopoulosK23} obtained an {\em $\delta$-tolerant} schedule that builds on the same design principles as their Pareto algorithm, but is tailored to the confidence parameter $\delta$. 

\subsection{Contributions}
\label{subsec:contributions}

We present the first principled study of decision-theoretic approaches in learning augmenting algorithms. 
Our objective is to be able to choose globally best algorithms based on their performance function, based on objective, quantifiable methods. We introduce both deterministic and stochastic approaches: the former do not require any assumptions such as distributional information on the quality of the prediction, whereas the latter helps us capture the notion of risk, which is inherently tied to the stochasticity of the prediction oracle. Specifically, we consider the following measures: 

\smallskip
{\bf Distance measures} \ Here, we evaluate the {\em distance} between the performance of the algorithm, and an {\em ideal} solution, i.e. an omniscient algorithm that knows the input, but is constrained by the same robustness requirement as the online algorithm. We focus on two distance metrics: i) The {\em weighted} maximum distance, which is defined as the  weighted $L_\infty$-norm distance between the performance function of the algorithm and that of the ideal solution; here, the weight is a user-specified function that reflects how much, and what type of importance the designer assigns to prediction errors; and ii) The {\em average} distance, which measures the aggregate distance between the algorithm and the ideal solution, averaged over the range of the prediction error. 

Distance measures are inspired by tools such as 
Receiver Operating Characteristic (ROC) graphs~\cite{fawcett2006introduction}, which describe the tradeoff between the true positive rates (TPR) and the false positive rates (FPR) of classifiers. Distance metrics between two ROC curves have been used as a comparison measure of classifiers. Moreover, weighted distances in ROC graphs can help emphasize critical regions: e.g., a user who is  sensitive to false positives when FPR is low. This weighted approach has several applications in medical diagnostic systems~\cite{li2010weighted}.

%They also apply naturally in deterministic settings, when no distributional aspects of the prediction are known.

\smallskip
{\bf Risk measures} \ Here,  the motivation comes from the realization that Pareto-optimal and 
tolerance-based algorithms handle the risk of
deviating from a perfect prediction in totally different ways. Namely, the former maximize the risk, while the latter seek to minimize it.
This explains undesirable characteristics such as their brittleness and inefficiency, respectively. To formalize the notion of risk, we first introduce a stochastic prediction setting, where the oracle provides imperfect distributional information to the algorithm. We then introduce a novel analysis approach based on a risk measure that has been influential in decisions sciences, namely the {\em conditional value-at-risk}, denoted by $\text{CVaR}_\alpha$. This  value measures, informally, the expectation of a random loss/reward on its $(1-\alpha)$-fraction of worst outcomes~\cite{sarykalin2008value}. Here, $\alpha \in [0,1)$, is a parameter that measures the {\em risk aversion} of the end user. We show how to obtain a parameterized analysis based on risk-aversion, which quantifies the trade-off between the performance of the algorithm and its risk. 

\smallskip
Our techniques generalize previous approaches in learning-augmented algorithms. More precisely, in the context of distance measures, we show that the appropriate choice of the weight function can help recover both the Pareto-optimal and $\delta$-tolerant algorithms. The same holds for the risk-based analysis; here we obtain a generalization of the {\em distributional} consistency-robustness tradeoffs of~\cite{diakonikolas2021learning}, by introducing the notion of {\em $\alpha$-consistency}, where $\alpha$ is the risk parameter.

The paper is structured as follows. In Section~\ref{sec:model}, we formally present the decision-theoretic framework of our study, which we then apply to various problems. For ski rental (Section~\ref{sec:skirental}) we show how to find, among the infinitely many Pareto-optimal algorithms, the one that optimizes our metrics. For one-max search (Section~\ref{sec:one-max}) we show how to find, for any parameter $r$, 
an algorithm that likewise optimizes the metrics, among the infinitely many $r$-robust strategies. Last, for contract scheduling (Section~\ref{sec:contract}), we show how to find, among the infinitely many schedules of optimal acceleration ratio, one that simultaneously optimizes each of  our target metrics. %Due to space limitations, this study is presented in the appendix.
In Section~\ref{sec:experiments}, we provide an experimental evaluation of our algorithms that demonstrates the attained performance improvements.
%that can be attained relative to the state-of-the-art algorithms.

\smallskip
\noindent
{\bf Other related work} \ 
Recent work addressed brittleness via a user-specified profile~\cite{DBLP:journals/corr/abs-2408-04122}.
This differs from our approach, as our measures induce an explicit comparison to an ideal algorithm and are therefore true performance metrics, unlike the method in~\cite{DBLP:journals/corr/abs-2408-04122}, which does not allow for pairwise comparison of algorithms. The conditional value-at-risk was recently applied in the design and analysis of {\em randomized} algorithms without predictions~\cite{DBLP:conf/colt/Christianson0LW24}; however, no prior work has connected CVaR to the competitive analysis of learning-augmented algorithms.

\section{Decision-Theoretic Models}
\label{sec:model}
In this section, we formalize our decision-theoretic framework. For definitiveness, we assume cost-minimization problems (e.g., ski rental), however we note that the definitions can be extended straightforwardly to profit-maximization problems (e.g., one-max search and contract scheduling). 
We denote by $\OPT(\sigma)$ the cost of an optimal offline algorithm on an input sequence $\sigma$.

\subsection{Distance-Based Analysis}
\label{subsec:model.distance}

We focus on problems with single-valued predictions.  
We denote by $x_\sigma$ some significant information on the input $\sigma$, and by $y \in \mathbb{R}$ its predicted value. For instance, in one-max search, $x_\sigma$ is the maximum price in $\sigma$. When $\sigma$ is implied from context, we will use $x$ for simplicity. The prediction \emph{error} is defined as $\eta = |x_\sigma - y|$. The \emph{range} of a prediction $y$, denoted by $R_{y}$, is defined as an interval $R_{y} = [\ell, u] \subseteq [0,\infty)$ such that $x_\sigma \in R_{y}$. This formulation allows us to study algorithms with a tolerance parameter. In particular, if $R_{y} = [(1-\delta)y, (1+\delta)y]$ where $\delta \in [0,1]$, then we refer to algorithms that operate under this assumption as \emph{$\delta$-tolerant} 
algorithms. We emphasize that this assumption of a bounded prediction error is not necessary in our framework, and unless specified, we consider the general case $R_{y} = [0, \infty)$. We use this assumption to be able to compare against known $\delta$-tolerant algorithms.

Given an online algorithm $A$, an input $\sigma$, and a prediction $y$, we denote by $A(\sigma,y)$ the {\em cost} incurred by $A$ on $\sigma$, using $y$. 
The {\em performance ratio} of $A$, denoted by $\perf(A,\sigma,y)$, is defined as the ratio $\frac{A(\sigma,y)}{\OPT(\sigma)}$. 
% The definition of the {\em performance ratio} depends on the type of problem: for reward maximization problems (e.g., one-max search or contract scheduling), it is defined as $\perf(A,\sigma,y) = \frac{\OPT(\sigma)}{A(\sigma,y)}$, while for cost minimization problems (e.g., ski rental), it is defined as $\perf(A,\sigma,y) = \frac{A(\sigma,y)}{\OPT(\sigma)}$.
We define the {\em consistency} (resp. {\em robustness}) of $A$ as its worst-case performance ratio given an error-free (resp. adversarial) prediction. Formally,
$\mathrm{cons}(A)=\sup_\sigma \perf(A,\sigma, x_\sigma)$ and $\mathrm{rob}(A)= \sup_{\sigma, y} \perf(A,\sigma, y)$.
%\begin{align*}
%& \mathrm{cons}(A)=\sup_\sigma \perf(A,\sigma, y_\sigma), 
%\ \quad \text{and} \\
%& \mathrm{rob}(A)=\sup_\sigma \sup_{y\in R_{y}} \perf(A,\sigma, y).
%\end{align*}
We say that $A$ is $r$-robust if it has robustness at most $r$. 

To define our distance measures, we introduce the concept of an {\em ideal} solution. Given $r \geq 1$, and a {\em specific} input $\tilde{\sigma}$, we define by $\text{I}_r(\tilde{\sigma})$ the smallest cost that can be achieved on  $\tilde{\sigma}$ by an online algorithm $A$ that knows $\tilde{\sigma}$, and is required to be $r$-competitive on all inputs. We also define $\perf(\text{I}_r,\tilde{\sigma})$ as $\frac{\text{I}_r(\tilde{\sigma})}{\OPT(\tilde{\sigma})}$. 
The definition implies that $\text{I}_r$ is the {\em best-possible} Pareto-optimal algorithm with prediction $\tilde{\sigma}$. Note that any $r$-robust online algorithm $A$ with prediction $y$  obeys $\perf(A, \sigma, y) \geq  \perf({I_r},\sigma)$.

We can now define our distance measures starting with the {\em maximum} weighted distance. Here, the user specifies a {\em weight} function $w_{y}:R_{y} \to [0,1]$, which quantifies the importance that the user assigns to prediction errors, and aims to guarantee smoothness. To reflect this, we require that $w_{y}$ is piece-wise monotone. Namely, if $R_{y}=[\ell,u]$, then $w_{y}$ is non-decreasing in $[\ell,y]$ and non-increasing in $[y,u]$. The maximum distance of an algorithm $A$, given $r,y$ is defined as
% \footnotesize
\begin{align}
d_{\text{max}}(A) = \sup_{\sigma, x \in R_{y}} 
\left\{
\left( \perf(A, \sigma, x) - \perf(I_r,\sigma) \right) w_{y}(x)
\right\}.
\label{eq:max}
\end{align}
\normalsize
% \\

Thus, the maximum distance measures the weighted maximum deviation from the ideal performance. We also define the {\em average} weighted distance, which informally measures the average deviation from the ideal performance, across the range of the prediction error. Formally:
% %\footnotesize
% \begin{equation}
% d_{\text{avg}}(A) = \frac{1}{|R_{y}|} \sup_{\sigma}  \int_{p \in R_{y}} \left( \perf(A, \sigma, p) - \perf({I_r},\sigma) \right) \cdot w_{y}(p) \, dp.
% \label{eq:avg}
% \end{equation}
% \normalsize
\noindent
\begin{align}
%\footnotesize
d_{\text{avg}}(A) =  \sup_{\sigma} \frac{1}{|R_{y}|} \int_{R_{y}}
\left( \perf(A,\sigma, z) - \perf(I_r,\sigma) \right) w_{y}(z)\, dz.
\label{eq:avg}
\end{align}
\normalsize

\subsection{Risk-Based Analysis}
\label{subsec:model.risk}

Since risk is an inherently stochastic concept, we need to introduce stochasticity in the prediction model. To this end, we assume that the prediction is in the form of a distribution $\mu$, with support over an interval $[\ell, u] \subseteq \mathbb{R}$, and a pdf that is non-decreasing on $[\ell, y]$ and non-increasing on $[y, u]$. This model has two possible interpretations. First, one may think of $\mu$ as a {\em distributional} prediction, in the lines of stochastic prediction oracles~\cite{diakonikolas2021learning}. A second interpretation of $\mu$ is that of a {\em prior} on the predicted value, based on historical data. We will use $R_\mu$ to refer to the range of $\mu$, since it is motivated by considerations similar to the notion of range in the distance measures. 

Our analysis will rely on the Conditional Value-at Risk (CVaR) measure from financial mathematics~\cite{rockafellar2000optimization}. Let $X$ be a random variable that corresponds to the loss (e.g., the cost in the case of a minimization problem), and a parameter $\alpha \in [0,1)$ that describes the risk {\em aversion}. The Conditional Value-at-Risk $\text{CVaR}_\alpha$ is defined as
\begin{equation}
% %\footnotesize
\text{CVaR}_{\alpha}(X) = \inf_{t} \left\{ t + \frac{1}{1 - \alpha} \mathbb{E}[(X - t)^+] \right\},
%\normalsize 
\quad
\text{where} \quad  (t - X)^{+} = \max\{t - X, 0\}.
\label{eq:cvar}
\end{equation}
In words, $\text{CVaR}_\alpha(X)$ is the expectation of $X$ on the $\alpha$-tail of its distribution, 
that is, the worst $(1-\alpha)$ fraction of its outcomes. Let ${\cal F}$ denote the class of {\em input} distributions (i.e., distributions over sequences $\sigma$) in which the predicted information has the same distribution as $\mu$. For example, in one-max search, $F$ is a distribution of input sequences such that the maximum price is distributed according to $\mu$. Given $\alpha \in [0,1)$, we define the {\em $\alpha$-consistency} of an algorithm $A$ as 
\begin{equation}
%\footnotesize
\textrm{$\alpha$-cons($A$)}=\sup_{F \in \mathcal{F}}  \frac{\text{CVaR}_{\alpha,F}(A(\sigma))}{\mathbb{E}_{\sigma \sim F}[\OPT(\sigma)]},
\label{eq:a-cons}
\end{equation}
\normalsize
where the subscript $F$ in the notation of CVaR signifies that $\sigma$ is generated according to $F$.
We summarize our objective as follows. Given a robustness requirement $r$, and a risk parameter $\alpha$, we would like to find an $r$-robust algorithm of minimum $\alpha$-consistency. 

This measure is a risk-inclusive generalization of consistency, and interpolates between two extreme cases.
The first case, when $\alpha=0$, describes a {\em risk-seeking} 
algorithm that aims to minimize its expected loss without considering deviations from the distributional prediction. In this case, $\textrm{CVaR}_{\alpha,F}(A)=\mathbb{E}_{\sigma \sim F}[A(\sigma)]$, thus~(\ref{eq:a-cons}) is equivalent to the consistency of $A$ in the distributional prediction model of~\cite{diakonikolas2021learning}.
The second case, when $\alpha \to 1$, describes a {\em risk averse} algorithm: here, it follows that $\textrm{CVaR}_{\alpha,F}(A)=\sup_{\sigma \in \mathrm{supp}(F)} A(\sigma)$, thus~(\ref{eq:a-cons}) describes the performance of $A$ in the adversarial situation in which all the probability mass is concentrated on a worst-case point within the prediction range.
Note that this risk-based model is an adaptation of risk-sensitive randomized algorithms~\cite{DBLP:conf/colt/Christianson0LW24} to learning-augmented settings.

\section{Ski Rental}
\label{sec:skirental}

We consider the continuous version, in that skis can be bought at any time in $\mathbb{R}$. We denote by $b\geq 1$ the buying cost, and by $x$ the skiing horizon that is unknown to the online algorithm. We denote by $A_T$ the online algorithm that buys at time $T$, hence its cost,
$A_T(x)$, is equal to $x$, if $x<T$, and $b+T$, if $x\geq T$.
In the learning-augmented setting, the oracle provides a prediction $y$ on the horizon. It is easy to see that for $r\geq 2$, $A_T$ is $r$-robust iff  $T \in [b/(r-1), b(r-1)]$. 
%In what follows, we denote by $A_T$, with $T \in [b/(r-1), b(r-1)]$ the $r$-robust algorithm that buys at time $T$. 
It was shown that $r$-robust Pareto-optimal algorithms have consistency $r/(r-1)$~\cite{purohit2018improving,wei2020optimal}. 
More generally, a {\em class} of Pareto-optimal algorithms was introduced in~\cite{DBLP:journals/corr/abs-2501-12770}, 
whose members exhibit different and incomparable smoothness as a function of $\rho$.

%It buys skis at time $b/(r - 1)$ when the prediction $y \geq b$, and at time $\rho$ for $\rho \in [b, b(r - 1)]$ when $y < b$. The choice of $\rho$ controls the smoothness of the algorithm's performance, with each value yielding a distinct behavior.

\medskip
\noindent
\emph{Objective:} Given a robustness requirement $r$ and a prediction $y$ on the number of skiing days, find $T \in [b/(r-1), b(r-1)]$ such that $A_T$ minimizes the distance-based objective defined in Section~\ref{sec:model}.
We will denote by $T^*_{\max}$, $T^*_{\avg}$ and $T^*_{\text{cvar}}$ the optimal thresholds according to corresponding measures.

\subsection{Distance Measures}
\label{subsec:skirental.distance}

We begin by expressing the ideal performance.
\begin{lemma}[Appendix~\ref{sec:app.skirental}]
The performance ratio of the ideal algorithm $I_r$ is
\begin{equation*}
\perf(I_r,x) = 
\begin{cases} 
1, & \text{if } x < b, \\
\frac{x}{b}, & \text{if } x \in \left[b, \min\left\{\frac{br}{r - 1},\ b(r - 1)\right\}\right], \\
\frac{r}{r - 1}, & \text{if } x \geq \min\left\{\frac{br}{r - 1},\ b(r - 1)\right\}.
\end{cases}
\label{eq:ski.ideal}
\end{equation*}
\normalsize
\label{lemma:ski.ideal}
\end{lemma}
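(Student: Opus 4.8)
The plan is to reduce the quantity $I_r(x)$ to a one‑dimensional minimization over admissible buying times, and then run a short case analysis on the position of $x$. Set $\ell_0 := b/(r-1)$ and $u_0 := b(r-1)$; since $r \ge 2$ we have $\ell_0 \le b \le u_0$, and recall that $\OPT(x) = \min\{x,b\}$. The first step is to observe that every deterministic online strategy for (continuous) ski rental is determined by a single buying time, i.e.\ is some $A_T$, and that $A_T$ is $r$‑robust exactly when $T \in [\ell_0,u_0]$; hence the ideal solution, which may use knowledge of $x$ to select the best admissible strategy, satisfies
\begin{equation*}
I_r(x) \;=\; \min_{T \in [\ell_0,\,u_0]} A_T(x), \qquad\text{where } A_T(x) = x \text{ for } T > x \text{ and } A_T(x) = b+T \text{ for } T \le x .
\end{equation*}

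The second step is to solve this inner minimization. Regarded as a function of $T$ over $[\ell_0,u_0]$, the cost $A_T(x)$ equals $b+T$ on $[\ell_0,\min\{x,u_0\}]$, where it is increasing and hence minimized at $T=\ell_0$ with value $b+\ell_0 = br/(r-1)$, and it equals the constant $x$ on $(\min\{x,u_0\},\,u_0]$ whenever that interval is nonempty, i.e.\ whenever $x<u_0$. This gives $I_r(x)=x$ when $x<\ell_0$, $I_r(x)=\min\{br/(r-1),\,x\}$ when $\ell_0\le x<u_0$, and $I_r(x)=br/(r-1)$ when $x\ge u_0$. Dividing by $\OPT(x)$ and splitting on whether $x<b$ or $x\ge b$ then yields the statement: for $x<b$ we have $\OPT(x)=x$ and, since $x<b\le br/(r-1)$, also $I_r(x)=x$, so $\perf(I_r,x)=1$; for $b\le x\le m$ with $m:=\min\{br/(r-1),\,b(r-1)\}$ we get $I_r(x)=x$ (up to the single endpoint $x=m$ addressed below) and hence $\perf(I_r,x)=x/b$; and for $x\ge m$ we get $I_r(x)=br/(r-1)$ and hence $\perf(I_r,x)=r/(r-1)$.

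I do not expect a genuine obstacle; the content is bookkeeping, and the point needing care is the breakpoint. The relative order of $br/(r-1)$ and $b(r-1)$ reverses at $r=(3+\sqrt5)/2$, which is precisely why the breakpoint is written as $m=\min\{br/(r-1),b(r-1)\}$: in the regime $br/(r-1)\le b(r-1)$ the binding constraint for ``$I_r(x)=x$'' is $x\le br/(r-1)$, whereas in the regime $br/(r-1)>b(r-1)$ it is $x<u_0=b(r-1)$. One should also check consistency of the middle and last cases at $x=m$: they coincide when $m=br/(r-1)$, and when $m=b(r-1)$ the true value $\perf(I_r,b(r-1))=r/(r-1)$ (since $I_r(b(r-1))=br/(r-1)$) matches the last case, so the middle expression $x/b$ is off only on this measure‑zero set, which is irrelevant for the $\sup$‑ and integral‑based distance objectives in which the lemma is used. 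Finally, one should flag the modeling assumption that the competitor defining $I_r$ ranges over deterministic online strategies, each an $A_T$, which is the setting adopted throughout this section.
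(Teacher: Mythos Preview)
Your proposal is correct and follows essentially the same approach as the paper: both reduce to choosing the best admissible buy time $T\in[b/(r-1),\,b(r-1)]$ for the given $x$ and then read off the resulting performance ratio by case analysis. Your presentation is slightly more streamlined in that you solve the one-dimensional minimization $\min_{T}A_T(x)$ once and then split on $x$, whereas the paper exhibits the optimal $T$ case by case and verifies optimality against other thresholds; your observation about the endpoint $x=m$ when $m=b(r-1)$ is also a valid clarification of a measure-zero ambiguity in the statement.
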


For some intuition behind the proof, we distinguish between three cases. If $x<b$, then $\text{I}_r$ buys at $b$. If $x> b$,  it buys at $\min\left\{\frac{br}{r - 1},\ b(r - 1)\right\}]$; and if $x\geq \min\left\{\frac{br}{r - 1},\ b(r - 1)\right\}$ it buys at $b/(r-1)$. These choices optimize its cost on input $x$, while guaranteeing $r$-robustness on all inputs. Note that $\perf(\text{I}_r,x)$ has a discontinuity at $x = (r - 1)b$ only if $\frac{r}{r - 1} \geq r - 1$, or $r<2.618$, approximately. For simplicity, we will consider the case $r>2.618$, for which the ideal performance is continuous, and we refer to the Appendix for a discussion of the case $r\in [2,2.618]$. 

% In contrast, for the online algorithm $A_T$, with $T \in [b/(r-1), b(r-1)]$, its cost on input $x$ is eqaul to 
% \begin{equation}
% %\footnotesize
% \label{eq:skirental.cost}
% A_T(x) = 
% \begin{cases}
% x, & \text{if } x < T, \\
% T + b, & \text{if } x \geq T.
% \end{cases}
% % \normalsize
% \end{equation}

Figure~\ref{fig:skirental_ideal} illustrates the performance of the ideal algorithm (in black, bold line) and various online  algorithms $A_T$. Note that all online algorithms have no better performance than the ideal on all inputs, as expected, and that no online algorithm dominates the others. 

We will distinguish between online algorithms that buy at times in $[b/(r-1),b)$, and those that buy at times in
$[b,b(r-1)]$; we denote  these two classes by $C_{<b}$ and $C_{\geq b}$, respectively. This distinction will be helpful in the computational optimization of the distance measures, namely in the proof of Theorem~\ref{thm:skirental.weighted.distance}.
From~(\ref{eq:max}), and a prediction $y$ with range $R_y$ we have that 
\begin{equation}
%\footnotesize
d_{\max}(A_T) = \sup_{x \in R_y} \left( \frac{A_T(x,y)}{\min\{x, b\}} - \perf(\text{I}_r, x) \right) w(x).
\label{eq:skirental.max}
\end{equation}

To gain some insight into the structure of the maximum distance objective in (\ref{eq:skirental.max}), let us first consider the unweighted case, i.e., $w(x) = 1$. If $R_y$ is unbounded, then $d_{\max}^*=1$, which is attained by all $A_T \in C_{\geq b}$.
%with $T \in [b,b(r-1)]$.
Note that among these algorithms, $A_{b}$ has the best consistency, so we may choose this algorithm as a tie-breaker. 

% \begin{wrapfigure}{l}{0.43\textwidth}
%   \vspace{-\intextsep}
%   \centering
%   \includegraphics[width=\linewidth]{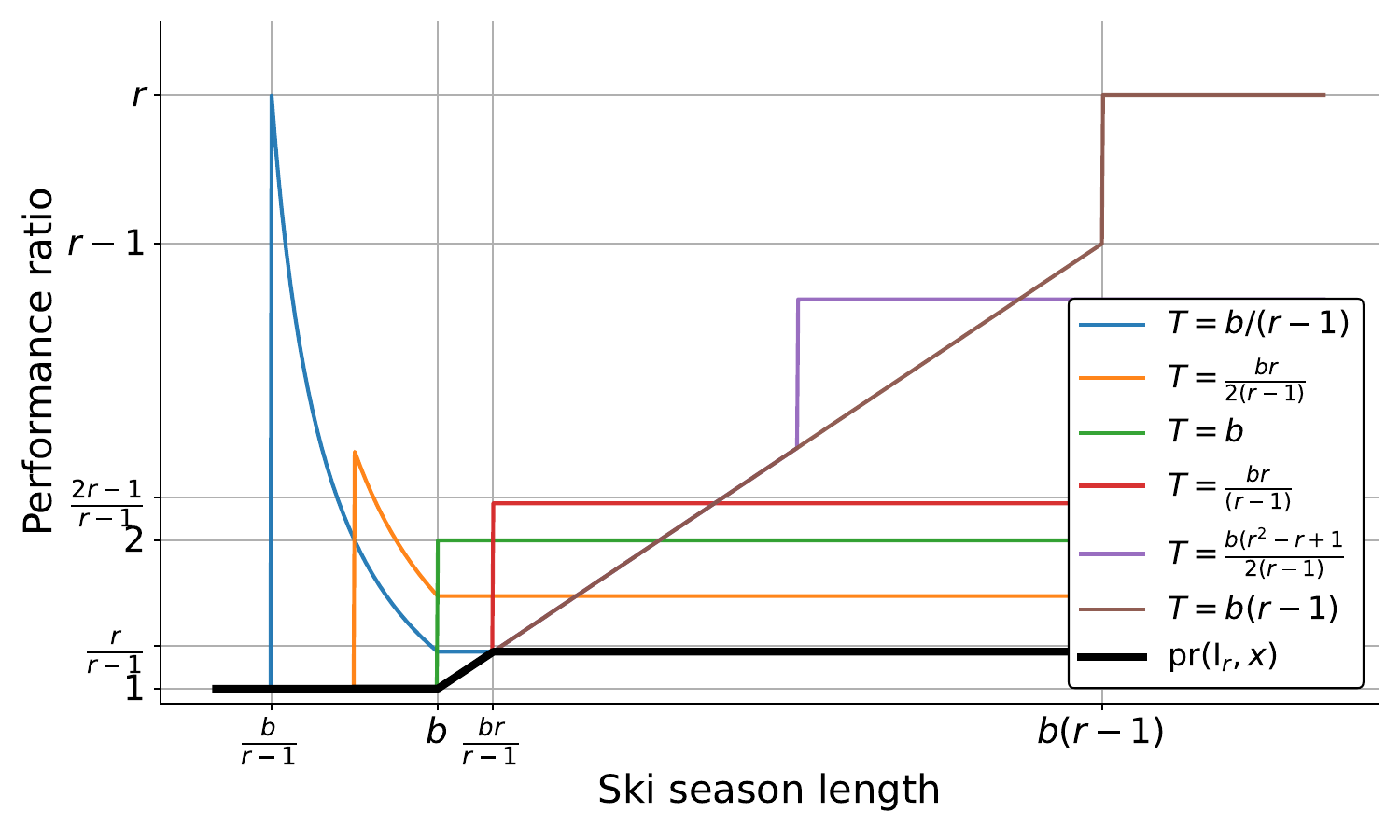}
%   \caption{Performance functions of various $r$-robust ski rental algorithms for different threshold values. The curve in bold depicts the ideal performance ratio.}
%   \label{fig:skirental_ideal}
%   \vspace{-\intextsep}
% \end{wrapfigure}
\Needspace{16\baselineskip}
\begin{wrapfigure}[16]{l}{0.45\textwidth} 
  \centering
  \includegraphics[width=\linewidth]{Figures/ski-rental/skirental_ideal_plot.pdf}
  \caption{Performance functions of  $r$-robust  algorithms for different threshold values. The curve in bold depicts the ideal performance ratio.}
  \label{fig:skirental_ideal}
  \vspace{-\intextsep}
\end{wrapfigure}

That is, if we have no confidence on the quality of the prediction, the best algorithm is the competitively optimal one, which agrees with intuition. If, however, $R_y$ has known bounds, then the best algorithm depends on this range. For instance, if the right endpoint $u$ of $R_y$ is smaller than $b$, then any algorithm in the class $C_{\geq b}$ is optimal according to $d_{\max}$. This is consistent with the algorithm of~\cite{DBLP:journals/corr/abs-2501-12770}, and is in fact a generalization of their case $y<b$ to the case $u < b$. It is also interesting to note that regardless of the range $R_y$, the best algorithm according to $d_{\max}$ is either a single algorithm in $C_{<b}$, namely $A_{b}$, or a choice of algorithms in $C_{\geq b}$; this is precisely the set of all candidate algorithms in~\cite{DBLP:journals/corr/abs-2501-12770}.  

The following shows how to obtain the best threshold for the weighted distance.

\begin{theorem}[Appendix~\ref{sec:app.skirental}]
% Given $r>1$,  $b \geq 1$, 
% prediction $y$ with range $R_y \subseteq \mathbb{R}_+^2$, and weight function $w : R_y \to [0,1]$, 
There is an algorithm for computing 
$T^*_{\max}$ that runs in time linear in the number of critical points of the weight function $w$ in the range $R_y$.
\label{thm:skirental.weighted.distance}
\end{theorem}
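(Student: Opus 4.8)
The plan is to obtain a piecewise closed form for $d_{\max}(A_T)$ with $T$ fixed (using Lemma~\ref{lemma:ski.ideal}), and then to sweep $T$ over $[b/(r-1),b(r-1)]$, cut into $O(k)$ pieces by the critical points of $w$ and a few problem-specific points, minimising the resulting elementary objective on each piece while carrying candidate information incrementally across the sweep; here $k$ denotes the number of critical points of $w$ in $R_y$. I assume $r>2.618$, so that by Lemma~\ref{lemma:ski.ideal} the function $\perf(\text{I}_r,\cdot)$ has only the three pieces $1$, $x/b$, $\tfrac{r}{r-1}$, with the single interior breakpoint $\tau:=\tfrac{br}{r-1}\in(b,b(r-1))$; the case $r\in[2,2.618)$, where $\perf(\text{I}_r,\cdot)$ has an additional discontinuity, is analogous and is handled in the appendix.

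\textbf{Closed form for a fixed $T$.} Substituting $A_T(x)=x$ for $x<T$, $A_T(x)=b+T$ for $x\ge T$, and Lemma~\ref{lemma:ski.ideal} into~(\ref{eq:skirental.max}), I would write $g_T(x):=\perf(A_T,x)-\perf(\text{I}_r,x)$ as a piecewise function of $x$ whose breakpoints lie among $\{b,\tau,T\}$. On its at most four pieces $g_T$ is one of a fixed list of elementary forms --- identically $0$ where $A_T$ coincides with $\text{I}_r$, a constant, affine in $x$, or $\tfrac{b+T}{x}-1$ --- and two structural facts will be used throughout: (i) on the renting inputs $\{x<T\}$ the value of $g_T$ does not depend on $T$; (ii) on the bought inputs $\{x\ge T\}$, $g_T$ is non-increasing in $x$, since there $\perf(A_T,x)=\tfrac{b+T}{\min\{x,b\}}$ is non-increasing while $\perf(\text{I}_r,x)$ is non-decreasing. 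Refining the partition of $R_y$ by the critical points of $w$, the quantity $d_{\max}(A_T)=\sup_{x\in R_y}g_T(x)w(x)$ is the maximum over $O(k)$ cells of the product of an elementary monotone function of $x$ with a monotone piece of $w$, each attained at a cell endpoint or an $O(1)$-computable stationary point; thus $d_{\max}(A_T)$ is evaluated in $O(k)$ time for a given $T$.

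\textbf{Optimising over $T$.} I would treat the classes $C_{<b}$ and $C_{\ge b}$ separately, since the split controls whether $\min\{x,b\}=b$ on the bought inputs and hence the exact form of $g_T$ there. Partition the admissible threshold interval of each class at the points $b$, $\tau$, $y$, $\ell$, $u$ and at every critical point of $w$ in $R_y$, giving $O(k)$ subintervals, and fix one such subinterval $J$. By fact (i), the renting inputs contribute to $d_{\max}(A_T)$ only through the values $g_T(p)\,w(p)$ for $p$ equal to $\tau$ or a critical point of $w$ left of $T$, which are constants on $J$; the \emph{boundary} value obtained as $x\to T^-$, even though it depends on $T$, is always dominated by the bought-side value at $x=T$ (because $g_T$ jumps up at $x=T$), so it never affects the maximum. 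By fact (ii) together with the piecewise monotonicity of $w$, the supremum over the bought inputs reduces to a short list of candidate points --- $T$, $y$, $u$, the critical points of $w$ in the head interval $[T,\max\{T,\tau\})$, and the interior stationary points on the relevant smooth cells --- each contributing an explicit function of $T$ that is affine, quadratic, or of the form $c_1+c_2/T$, hence monotone or unimodal on $J$. So $d_{\max}(A_T)$ restricted to $J$ is a maximum of a constant and a bounded collection of such functions, and its minimum over $J$ is pinned down by the endpoints of $J$, the pairwise crossings, and the interior extrema.

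\textbf{Main obstacle.} The crux is attaining \emph{linear} (rather than quadratic) total running time. A critical point $p$ of $w$ belongs to the head interval $[T,\max\{T,\tau\})$ for every $T<\min\{p,\tau\}$, so a naive per-subinterval computation would re-examine it $\Omega(k)$ times. The resolution is a single left-to-right sweep of $T$: each critical point of $w$ is passed exactly once, and at that instant it migrates from the bought side --- where its contribution $g_T(p)w(p)$ is a line in $T$ of positive slope $w(p)/b$ --- to the (frozen) renting side; one maintains the maximum of the live bought-side candidate curves under these monotone, ordered updates in amortised $O(1)$ per critical point, via a convex-hull-trick/monotone-deque argument. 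Proving the correctness and the amortised bound of this incremental maintenance, alongside the explicit case analysis behind fact (ii), is the technical heart; the remaining per-piece optimisation is routine.
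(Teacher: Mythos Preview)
Your approach is substantially more detailed than the paper's and takes a genuinely different route. The paper's proof consists essentially of presenting Algorithm~\ref{alg:ski.opt.max}, which simply declares a finite candidate set of thresholds $S_1\cup S_2$ (built from $y$, $b$, $\tfrac{br}{r-1}$, $b(r-1)$, and the stationary points of the weighted-distance integrand) and evaluates $d_{\max}(A_T)$ at each candidate, returning the best. The justification offered is a short paragraph explaining why it suffices to split into $C_{<b}$ and $C_{\ge b}$; there is no explicit argument that the optimum threshold must lie in the candidate set, nor that the per-candidate evaluation is $O(1)$ rather than $O(k)$, and the sets $S_1,S_2$ as written even reference $A_T$ in their defining equation, so the dependence on $T$ is not fully resolved. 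By contrast, you build the piecewise closed form for $g_T$, exploit the two structural facts (renting contributions are $T$-independent; bought-side $g_T$ is monotone in $x$) to reduce the inner maximisation to a bounded list of candidate $x$-values per cell, and then sweep $T$ across $O(k)$ subintervals while maintaining the upper envelope incrementally. What your route buys is an actual mechanism for the linear bound; what the paper's route buys is brevity.

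One point you should tighten: your amortised $O(1)$ claim for the sweep hinges on a convex-hull-trick/monotone-deque argument, but the operation you need as $T$ moves left to right is \emph{deletion} of the line associated with the critical point $p$ just passed, not insertion. A deque-based hull supports amortised $O(1)$ updates only when lines arrive in slope order and queries are monotone; here the slopes are proportional to $w(p)$ (or $w(p)/p$), and since $w$ is merely bitonic these are not globally sorted. You can repair this either by sweeping $T$ in the opposite direction (so that crossings become insertions) and splitting the critical points of $w$ into the increasing half $[\ell,y]$ and the decreasing half $[y,u]$, each of which gives slope-monotone insertions, or by accepting an $O(k\log k)$ bound via a Li~Chao tree, which still establishes a polynomial (indeed near-linear) algorithm. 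Either way this is the only place where your plan needs more than routine bookkeeping.
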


We now turn to the average distance objective, which from~(\ref{eq:avg}) is equal to
\begin{align}
% \footnotesize
d_{\text{avg}}(A_T) = \frac{1}{2h} \int_{R_y} \left( \frac{A_T(z,y)}{\min\{z, b\}} - \perf(I_r, z) \right)w(z) \, dz.
\label{eq:skirental.avg}
% \normalsize
\end{align}

In the unweighted case ($w= 1$), and without assumptions on $R_y$, $A_{b(r-1)}$ minimizes the average distance. This is because, for $x \geq \frac{br}{r - 1}$, its performance matches the ideal one, as depicted by the blue curve in Figure~\ref{fig:skirental_ideal}. When $R_y$ is bounded, the optimal algorithm depends on the range, and the problem reduces to minimizing the area between the performance curves of $A_T$ and $I_r$ over $R_y$.
For general weight functions, the integral in~(\ref{eq:skirental.avg}) is evaluated piecewise, depending on $w$ and $T$. 

\subsection{Risk-Based Analysis}
\label{subsec:skirental.risk}

We consider the CVaR-based analysis of ski rental. In this setting, the algorithm has access to a distributional prediction $\mu$ over the skiing horizon $x$.
Following the discussion in Section~\ref{subsec:model.risk}, we will evaluate an online algorithm $A_T$ by means of its $\alpha$-consistency~(\ref{eq:a-cons}).

Define $q_T = \Pr[A_T(x) = T+b] = \int_{T}^{\infty} \mu(x)dx$. With this definition in place, we obtain the following result.

\begin{theorem}[Appendix~\ref{sec:app.skirental}]
\label{thm:skirental.cvar}
Let $x \sim \mu$ and $t^*$ denote the $\alpha$-quantile of $\mu$, i.e., the value satisfying $\int_{0}^{t^*} \mu(z)\,dz = \alpha$. Then
\begin{align}
\label{eq:ski.rental.cvar}
\mathrm{CVaR}_{\alpha,\mu}[A_T(x)] =
\min\left\{
\frac{1}{1-\alpha}\left( \int_{t^*}^{T} z\, \mu(z)\, dz + (T+b)\, q_T \right), \right.
\left. T + b\,\frac{q_T}{1 - \alpha},\ T + b
\right\},
\end{align}

and 
%the optimal threshold under CVaR is given  by
$
T^*_{\cvar} = \argmin_{T \in \left[\frac{b}{r-1},\ b(r-1)\right]} 
\mathrm{CVaR}_{\alpha,\mu}[A_T(x)].
$
%\end{equation*}
\end{theorem}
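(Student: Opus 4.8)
The plan is to compute $\mathrm{CVaR}_{\alpha,\mu}[A_T(x)]$ in closed form from the Rockafellar--Uryasev variational formula~(\ref{eq:cvar}), and then obtain $T^*_{\cvar}$ simply by minimizing that expression over the feasible interval $[b/(r-1),\,b(r-1)]$. The first step is to record the law of the loss: since buying at $T$ costs $A_T(x)=x$ for $x<T$ and $A_T(x)=T+b$ for $x\ge T$, the random variable $A_T(x)$ consists of a continuous piece distributed like $\mu$ on $[\ell,T)$ together with a single atom of mass $q_T=\int_T^\infty\mu$ at the value $T+b$. Because $A_T(x)\le T+b$ almost surely and $\mathrm{CVaR}_\alpha$ is monotone and fixes constants, this already yields the bound $\mathrm{CVaR}_{\alpha,\mu}[A_T(x)]\le T+b$, the third term of the claimed minimum.

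Next I would study $g(t):=t+\frac{1}{1-\alpha}\,\mathbb{E}_{x\sim\mu}[(A_T(x)-t)^+]$, whose infimum over $t$ is exactly $\mathrm{CVaR}_{\alpha,\mu}[A_T(x)]$. This $g$ is convex with right derivative $1-\frac{1}{1-\alpha}\Pr[A_T(x)>t]$, and from the explicit law the tail $\Pr[A_T(x)>t]$ follows the survival function of $\mu$ on $[\ell,T)$, is constant equal to $q_T$ on $[T,T+b)$, and is $0$ beyond $T+b$. Evaluating $g$ at the two natural candidate points produces the remaining expressions: $g(T)=T+\frac{1}{1-\alpha}\,b\,q_T$ (since $(A_T(x)-T)^+$ equals $b$ with probability $q_T$ and $0$ otherwise), and, when the $\alpha$-quantile $t^*$ of $\mu$ lies below $T$, $g(t^*)$ collapses---after using $\int_{t^*}^T\mu=1-q_T-\alpha$ to cancel the stray $t^*$ terms---to $\frac{1}{1-\alpha}\big(\int_{t^*}^T z\,\mu(z)\,dz+(T+b)\,q_T\big)$, the first term. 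The proof that $\mathrm{CVaR}$ equals the minimum of the three then splits on the sign of $q_T-(1-\alpha)$: if $q_T\le 1-\alpha$ then $t^*\le T$, the tail meets the level $1-\alpha$ on its decreasing stretch, so the minimizer of $g$ is $t^*$ and $\mathrm{CVaR}=g(t^*)$, which one checks is $\le g(T)\le T+b$; if $q_T>1-\alpha$ the tail stays above $1-\alpha$ up to $T+b$, so $g$ is non-increasing there, the minimizer is $T+b$ and $\mathrm{CVaR}=T+b$, which one checks is then the smallest of the three; and $q_T=1-\alpha$ makes all three agree.

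With $\mathrm{CVaR}_{\alpha,\mu}[A_T(x)]$ now an explicit, continuous, piecewise-smooth function of $T$, the statement about $T^*_{\cvar}$ is immediate by compactness of $[b/(r-1),\,b(r-1)]$. I expect the main obstacle to be the bookkeeping in this case analysis: pinning down where the $\alpha$-quantile of the \emph{loss} $A_T(x)$ sits (it coincides with $t^*$ only while $q_T\le 1-\alpha$, and otherwise jumps to the atom at $T+b$), and verifying the comparisons among the three expressions that justify packaging the answer as a single $\min$; handling the atom at $T+b$ and the degenerate sub-cases ($\alpha=0$, or a $\mu$ carrying an atom at its $\alpha$-quantile) is where a careless argument could go wrong.
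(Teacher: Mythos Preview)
Your approach is essentially the paper's: both evaluate the Rockafellar--Uryasev infimum by a case analysis on $t$, the paper splitting into the three ranges $t<T$, $T\le t\le T+b$, $t\ge T+b$ and minimizing in each, while you use the convexity of $g$ together with the sign of $q_T-(1-\alpha)$ to locate the global minimizer directly. Both routes produce the same three candidate values $g(t^*)$, $g(T)=T+b\,q_T/(1-\alpha)$, and $g(T+b)=T+b$.

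One warning on exactly the step you flag as delicate. In the regime $q_T>1-\alpha$ you correctly argue that $\mathrm{CVaR}=T+b$, but the assertion ``one checks $T+b$ is then the smallest of the three'' can fail. The first displayed expression $E_1=\frac{1}{1-\alpha}\big(\int_{t^*}^{T}z\,\mu(z)\,dz+(T+b)q_T\big)$ coincides with $g(t^*)$ only when $t^*\le T$; once $q_T>1-\alpha$ forces $t^*>T$, a short computation gives
\[
E_1-(T+b)=\frac{1}{1-\alpha}\int_{T}^{t^*}(T+b-z)\,\mu(z)\,dz,
\]
which becomes negative whenever $t^*>T+b$. (Concretely: $\mu=\mathrm{Unif}[0,10]$, $T=2$, $b=1$, $\alpha=0.5$ gives $E_1=2.7<3=T+b=\mathrm{CVaR}$.) The paper's own proof shares this gap---it substitutes $t^*$ into its Case~1 expression without checking $t^*<T$---so the identity~(\ref{eq:ski.rental.cvar}) should be read under the tacit assumption $t^*\le T$ (equivalently $q_T\le 1-\alpha$); outside that range the first term is spurious and should simply be discarded, leaving $\mathrm{CVaR}=\min\{E_2,E_3\}=T+b$.
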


Theorem~\ref{thm:skirental.cvar} captures the tradeoff between optimizing for average-case performance, on the one hand, and safeguarding against adversarial prediction, on the other hand. 
Specifically, if $\alpha = 0$, the objective reduces to minimizing the expected cost, since
$
\int_{0}^{T} z \cdot \mu(z) \, dz + (b + T) \cdot q_T = \mathbb{E}_{z \sim \mu}[A_T(z)]$. In contrast, if  $\alpha \to 1$, $\mathrm{CVaR}_{\alpha,\mu}[A_T(x)]$, then we consider two cases: If $T$ is such that $q_T>0$, then from~(\ref{eq:ski.rental.cvar}), $\mathrm{CVaR}_{\alpha,\mu}[A_T(x)]=b+T$, whereas if $q_T=0$, then $\mathrm{CVaR}_{\alpha,\mu}[A_T(x)]=u$. Hence, when $\alpha \to 1$, $\mathrm{CVaR}_{\alpha,\mu}[A_T(x)]=\min(b+T,u)$. The proof, and further details on these two extreme cases, can be found in Appendix~\ref{sec:app.skirental}.

%is determined only by the largest costs attainable within the support $[\ell,u]$ of $\mu$.  Intuitively, CVaR in this regime focuses on the most expensive outcomes of the algorithm and disregards the rest of the distribution. The limit then depends on whether the skiing horizon can exceed the threshold $T$: if $q_T>0$, the distribution assigns positive probability to renting until $T$ and then buying, and the limit is $T+b$; if $q_T=0$, the horizon never reaches $T$, leading to the value $u$, the upper bound of the support.  Thus, while $\alpha=0$ recovers the expected cost, the regime $\alpha \to 1$ isolates the extreme costs within $[\ell,u]$, namely $T+b$ or $u$, depending on $T$. The proof and further details on these two extreme cases can be found in Appendix~\ref{sec:app.skirental}.

\section{One-Max Search}
\label{sec:one-max}

In this problem, the input is  a sequence $\sigma$ of {\em prices} in $[1,M]$, where $M$ is known to the algorithm. We denote by $x_\sigma$ the maximum price in $\sigma$, or simply by $x$, when $\sigma$ is implied. Any online algorithm is a {\em threshold} algorithm, in that it selects some $T\in [1,M]$ and accepts the first price in $\sigma$ that is at least $T$. If such a price does not exist in $\sigma$, then the profit of the algorithm is defined to be 1.  We denote by $A_T$ an online algorithm $A$ with threshold $T$, and by  $A_T(\sigma)$ its profit on input $\sigma$. In the learning-augmented setting, the online algorithm has access to a {\em prediction} $y$, and the prediction error is defined as $\eta=|x_\sigma-y|$. 
 %We denote by $\delta$ an upper bound on the error 
%that may be known to the algorithm,  thus giving rise to the $\delta$-tolerant setting.

The optimal competitive ratio of the problem is equal to $\sqrt{M}$~\cite{el-yaniv_competitive_1998}. Moreover, for any $r\geq \sqrt{M}$, it is easy to show that $A_T$ is $r$-robust if and only $T \in [t_1,t_2]$,
where $t_1=M/r$ and $t_2=r$.
Note that there is an infinite number of $r$-robust algorithms, for $r >\sqrt{M}$.

%We summarize our objective as follows.

\medskip
\noindent{\em Objective}. Given a robustness requirement $r$, find the threshold $T$ that optimizes the measures of Section~\ref{sec:model}. We denote by $T^*_{\max}$, $T^*_{\avg}$ and $T^*_{\cvar}$ the optimal threshold values.

\subsection{Distance Measures}
\label{subsec:one-max.distance}

We first describe the  ideal solution.
\begin{lemma}[Appendix~\ref{app.one-max}]
Given a robustness requirement $r$, and a sequence $\sigma$, the ideal algorithm $I_r$ chooses the threshold 
$\min\{t_2, \max\{t_1,x_\sigma\}\}$. Its performance ratio is 
%\footnotesize
\begin{equation}
\perf(I_r, \sigma) = 
\begin{cases} 
x_\sigma, & \text{if } x_\sigma \in [1, t_1) \\
1, & \text{if } x_\sigma \in [t_1, t_2] \\
\frac{x_\sigma}{t_2}, & \text{if } x_\sigma \in (t_2, M].
\end{cases}
\label{eq:one-max:ideal}
\end{equation}
\normalsize
\label{lemma:one-max.ideal}
\end{lemma}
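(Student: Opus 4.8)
\noindent\emph{Proof plan.} The plan is to reduce the ideal to a one-parameter optimization over the threshold. Recall that every online algorithm for one-max search is a threshold algorithm, and that $A_T$ is $r$-robust if and only if $T \in [t_1, t_2]$ with $t_1 = M/r$ and $t_2 = r$. Hence $I_r$, being an $r$-robust online algorithm tuned to the predicted maximum, must act as $A_T$ for some $T \in [t_1, t_2]$. Since $I_r$ is still bound to behave as an online algorithm, it cannot use the arrival order of the prices, so the guarantee it can offer on an input whose maximum price equals $x_\sigma$ is $\min_{T \in [t_1,t_2]}$ of the worst-case performance ratio of $A_T$ over all sequences with maximum $x_\sigma$. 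The steps are then: (i) compute, for a fixed threshold $T$, the worst-case ratio $g(T, x_\sigma)$ of $A_T$ over sequences with maximum $x_\sigma$; (ii) minimize $g(\cdot, x_\sigma)$ over $T \in [t_1, t_2]$; and (iii) check that the minimizer is $\min\{t_2, \max\{t_1, x_\sigma\}\}$ and read off $\perf(I_r, \sigma)$.

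For step (i), fix $T$ and $x_\sigma$, and recall $\OPT(\sigma) = x_\sigma$. If $T > x_\sigma$, no price reaches $T$, so $A_T$ accepts nothing, its profit is $1$, and the ratio is $x_\sigma$, independently of the rest of the sequence. If $T \le x_\sigma$, then at least the maximum is $\ge T$, so $A_T$ always accepts a price, and the accepted price is $\ge T$; conversely the adversary can place a decoy price equal to $T$ before the maximum, forcing the accepted price, and hence the profit, to be exactly $T$. Therefore $g(T, x_\sigma) = x_\sigma / T$ for $T \le x_\sigma$ and $g(T, x_\sigma) = x_\sigma$ for $T > x_\sigma$.

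For steps (ii)--(iii), note that $x_\sigma / T$ is decreasing in $T$ on $T \le x_\sigma$, and that just above $x_\sigma$ the value jumps up to $x_\sigma \ge 1$; minimizing $g(\cdot, x_\sigma)$ over $[t_1, t_2]$, and using $t_1 \le t_2$, yields three cases. If $x_\sigma < t_1$, every feasible threshold exceeds $x_\sigma$, the minimum value is $x_\sigma$, and it is attained by any threshold in $[t_1, t_2]$, in particular $t_1$. If $t_1 \le x_\sigma \le t_2$, the threshold $T = x_\sigma$ is feasible and uniquely optimal, with ratio $1$. If $x_\sigma > t_2$, the largest feasible threshold $T = t_2$ is uniquely optimal, with ratio $x_\sigma / t_2$. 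In all cases the chosen threshold equals $\min\{t_2, \max\{t_1, x_\sigma\}\}$, and the corresponding values $x_\sigma$, $1$, $x_\sigma/t_2$ are exactly the three branches of $\perf(I_r, \sigma)$ in the statement.

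The main obstacle is the correct modeling of the ideal in step (i): although $I_r$ is described as knowing $\sigma$, it is constrained to act as a fixed-threshold $r$-robust online algorithm, so the operative quantity is its worst-case ratio over sequences consistent with the predicted maximum; pinning this down --- and in particular justifying the matching decoy construction for $T \le x_\sigma$, which is what produces the nontrivial branch $x_\sigma / t_2$ --- is the crux. The remaining work is routine: checking feasibility of $T = x_\sigma$ when $t_1 \le x_\sigma \le t_2$, the boundary values at $t_1$ and $t_2$, and the degenerate cases $x_\sigma = 1$ and $r = \sqrt{M}$.
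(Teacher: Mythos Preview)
Your proposal is correct and follows essentially the same three-case analysis as the paper's proof: both reduce to choosing $T\in[t_1,t_2]$ optimally, split according to whether $x_\sigma$ lies below $t_1$, in $[t_1,t_2]$, or above $t_2$, and verify the claimed threshold and ratio in each case. Your treatment is slightly more systematic in that you first isolate the worst-case ratio $g(T,x_\sigma)$ via the explicit decoy construction and only then minimize, whereas the paper proposes the threshold directly in each case and checks optimality; the decoy argument you give is in fact cleaner than the paper's Case~3 discussion.
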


% From~(\ref{eq:max}) and Lemma~\ref{lemma:one-max.ideal}, it follows that given $T\in [t_1,t_2]$, 
% %\footnotesize
% \begin{align}
% d_{\max}(A_T)= \sup_{\sigma,x \in R_{y}} \left( \frac{x}{A_T(\sigma)} - \perf(I_r,\sigma) \right) w(x).
% \label{eq:one-max.max-distance}
% \end{align}
% \normalsize
From~(\ref{eq:max}) and Lemma~\ref{lemma:one-max.ideal}, it follows that 
%given $T \in [t_1,t_2]$, 
$d_{\max}(A_T) = \sup_{\sigma, x \in R_y} \left( \tfrac{x}{A_T(\sigma)} - \perf(I_r,\sigma) \right) w(x).$

We first give an analytical solution for unweighted maximum distance.
We refer to Appendix~\ref{app.one-max} for the proof, and some intuition about the following result.

\begin{theorem}
%[Appendix~\ref{app.one-max}]
For uniform weights ($w=1$) and any $\delta \in [0,1]$, for all $x\in R_{y}=[(1-\delta)y, (1+\delta)y]$
\[
T_{\max}^*= 
\begin{cases} 
\min\{t_2, \max\{t_1, \sqrt{(1+\delta)y}\}\}, & \text{if } (1+\delta)y \leq t_2, \\
\min\{t_2, \max\{t_1, \tilde{T}\}\}, & \textrm{otherwise,} \\
\end{cases}
\]
%\footnotesize
where
$\tilde{T{}} = t_2 - ((1+\delta)y) + \sqrt{((1+\delta)y - t_2)^2 + 4t_2^2((1+\delta)y)}.$
\normalsize
\label{thm:one-max.unweighted.max}
\end{theorem}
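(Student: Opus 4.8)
The plan is to reduce the optimization of $d_{\max}(A_T)$ over $T \in [t_1,t_2]$ to a small, explicitly solvable calculus problem, by exploiting the piecewise structure of $\perf(I_r,\sigma)$ from Lemma~\ref{lemma:one-max.ideal} and the fact that for one-max search the only thing that matters about $\sigma$ is its maximum price $x \in R_y$. First I would observe that, for a threshold algorithm $A_T$ with $T \in [t_1,t_2]$, the worst-case profit over sequences $\sigma$ with a given maximum price $x$ is: profit $1$ if $x < T$ (the adversary can make the prefix before the max never reach $T$), and profit exactly $T$ if $x \geq T$ (the first price $\geq T$ can be taken to be $T$ itself, and nothing larger need appear before it). Hence $\sup_\sigma \tfrac{x}{A_T(\sigma)} = x$ for $x < T$ and $= x/T$ for $x \geq T$. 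Plugging this and~(\ref{eq:one-max:ideal}) into the formula for $d_{\max}$ with $w = 1$, the objective becomes
\[
d_{\max}(A_T) = \sup_{x \in [(1-\delta)y,(1+\delta)y]} g_T(x), \qquad
g_T(x) = \Bigl(\tfrac{x}{A_T(x)} - \perf(I_r,x)\Bigr),
\]
where $A_T(x)$ denotes the worst-case profit just described and $\perf(I_r,x)$ is the three-piece function of Lemma~\ref{lemma:one-max.ideal}.

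Next I would analyze $g_T$ as a function of $x$ on the interval $R_y$, treating separately the cases $x < T$ and $x \geq T$, and within each, the subcases coming from which piece of $\perf(I_r,\cdot)$ applies. The key qualitative facts to extract are: (i) on the region $x \geq T$ (so $A_T(x) = T$), we have $g_T(x) = x/T - \perf(I_r,x)$, which on the middle ideal-piece ($\perf(I_r,x)=1$) is increasing in $x$, and on the upper ideal-piece ($\perf(I_r,x)=x/t_2$) equals $x(1/T - 1/t_2) \geq 0$, again nondecreasing in $x$; (ii) on the region $x < T$, $g_T(x) = x - \perf(I_r,x)$ which is $0$ wherever the ideal also gets profit forced down, and is nonnegative and increasing elsewhere. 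The upshot is that $g_T$ is maximized at the right endpoint $x = (1+\delta)y$ of the range, provided that endpoint lies in the regime $x \geq T$; when the whole range sits below $T$ the supremum is still at $(1+\delta)y$. So in all cases $d_{\max}(A_T) = g_T((1+\delta)y)$, and it remains to minimize the single-variable function $T \mapsto g_T((1+\delta)y)$ over $[t_1,t_2]$.

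For that final minimization I set $p := (1+\delta)y$ and split on whether $p \leq t_2$. If $p \leq t_2$, then $p$ lies in the middle or lower ideal-piece, $\perf(I_r,p)$ is a constant (in $T$), and the quantity to minimize is essentially $\max\{p/T,\,p\} - \perf(I_r,p)$ subject to feasibility, whose minimizer wants $T$ as large as possible but the competition between the "$x<T$" branch (value $p$, independent of $T$) and the "$x\geq T$" branch (value $p/T$, decreasing in $T$) balances at $p/T = p$ — i.e.\ would like $T$ near $\sqrt{p}$ — which after clamping to $[t_1,t_2]$ gives $T^*_{\max} = \min\{t_2,\max\{t_1,\sqrt{p}\}\}$, matching the first case. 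If $p > t_2$, then $\perf(I_r,p) = p/t_2$, the objective on the branch $p \geq T$ is $p/T - p/t_2$, and one must also account for the contribution of the middle-range points (around $x \approx t_2$) where the ideal already sits at $1$; balancing these two competing suprema yields a quadratic in $T$ whose relevant root is exactly $\tilde T = t_2 - p + \sqrt{(p - t_2)^2 + 4 t_2^2 p}$, and clamping gives the second case. I expect the main obstacle to be case (ii): correctly identifying \emph{which} pair of points in $R_y$ attains the two competing suprema that must be equalized (one near $x = t_2$ on the ideal's flat piece, one at $x = p$ on its rising piece), and verifying that the balancing condition is genuinely a quadratic with $\tilde T$ as its meaningful root, rather than e.g.\ the supremum being attained at an interior critical point of $g_T$ — so the careful monotonicity bookkeeping in the second step is what makes the clean closed form fall out.
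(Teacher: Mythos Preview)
Your reduction to a single-point evaluation $d_{\max}(A_T) = g_T((1+\delta)y)$ is incorrect, and this is the main gap. Although $g_T$ is nondecreasing on each of the two regions $\{x < T\}$ and $\{x \ge T\}$ separately, it has a downward jump at $x = T$: as $x \to T^{-}$ one has $g_T(x) \to T - 1$ (whenever $T \in [t_1,t_2]$, so that $\perf(I_r,T)=1$), whereas $g_T(T) = T/T - 1 = 0$. Hence when $T$ lies inside $R_y$, the supremum over $x$ is $\max\{\,T-1,\; p/T-1\,\}$ with $p := (1+\delta)y$, not merely $g_T(p)$. The adversarial choice $x = T - \varepsilon$ (maximum price just below the threshold, forcing the algorithm to profit $1$) is precisely the term you have dropped.

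Consequently your balancing equation ``$p/T = p$'' is not the right one, and in any case it does not yield $T = \sqrt{p}$ (it gives $T = 1$). The correct balance in the first case is $T - 1 = p/T - 1$, i.e.\ $T = \sqrt{p}$, which is exactly what the paper does. In the second case ($p > t_2$) the paper balances $T - 1$ (again coming from $x \to T^{-}$) against $p/T - p/t_2$ (coming from $x = p$, where now $\perf(I_r,p)=p/t_2$), and that is the quadratic whose relevant root is $\tilde{T}$. Your Case-2 sketch at least speaks of ``two competing suprema,'' but your Case-1 argument collapses them to one; reinstating the $x \to T^{-}$ contribution in both cases is what makes the closed forms fall out.
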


%, hence the algorithm  can afford to choose a larger threshold. 

The case of general weight functions is much more complex, from a computational standpoint. We can obtain a formulation as a two-person zero-sum game between the algorithm (that chooses its threshold $T$) and the adversary (that chooses $x$). In Appendix~\ref{subsec:app.one-max.weighted} we give the full definition of the payoff function of this game. For instance, if $R_{y} \subseteq [t_1,t_2]$, then the payoff function is 
$
\max\left\{\max_{x \geq T} \left(\frac{x}{T} - 1\right) \cdot w(x), \max_{x < T} \left(T - 1\right) \cdot w(x)\right\}.
$
In general, it is not possible to obtain an analytical expression of the value of this game (over deterministic strategies) for all weight functions. In Appendix~\ref{subsec:app.one-max.weighted}, we solve the game analytically assuming linear weights. The average weighted distance, on the other hand, can be optimized by piece-wise evaluation of an integral. We refer to the discussion in  Appendix~\ref{subsec:app.one-max.average}, and an example based on linear weights.

\subsection{Risk-Based Analysis}
\label{subsec:one-max.risk}

We consider the setting in which the algorithm has access to a distributional prediction $\mu$ with support in $[(1-\delta)y, (1+\delta)y]$, for some given $\delta$. This assumption is not required, but it allows us to draw useful conclusions as we discuss at the end of the section. Given  robustness $r$, and a risk value $\alpha \in [0,1)$, we seek an $r$-robust algorithm that minimizes the $\alpha$-consistency. Since this is a profit-maximization problem, the definitions of CVaR and $\alpha$-consistency are slightly different than~(\ref{eq:cvar}) and~(\ref{eq:a-cons}). Namely, we have 
$\text{CVaR}_{\alpha}(X) = \sup_{t} \left\{ t - \frac{1}{1-\alpha} \mathbb{E}[(t - X)^{+}] \right\}$~\cite{rockafellar2000optimization} 
and  
$\textrm{$\alpha$-cons}(A) = \sup_{F \in \mathcal{F}} \left( \frac{\mathbb{E}_{\sigma \sim F}[\text{OPT}(\sigma)]}{\text{CVaR}_{\alpha,F}(A(\sigma))} \right)$.

We first show that the $\alpha$-consistency is determined by a worst-case distribution $F^*$. Here, $F^*$ consists of sequences of infinitesimally increasing prices from 1 up to $y$, followed by a last price equal to 1, where $y$ is drawn according to $\mu$.

\begin{lemma}[Appendix~\ref{app.one-max}]
%Let $F^*$ denote a distribution over input sequences, with support the set $\cup_{j \in [1,M]}\{s_j\}$, in which $s_j$ has probability mass $\mu(j)$.
For any algorithm $A_T$ it holds that
%for an algorithm $A_T$ with threshold $T$, it holds that
%\begin{equation}
$\textrm{$\alpha$-Cons}(A_T)=
\frac{\mathbb{E}_{x \sim \mu}[x]}
{\text{CVaR}_{\alpha,F^*}[A_T(\sigma)]}.$
%\label{eq:one-max.cvar.worst}
%\end{equation}
\label{lem:one-max.cvar.worst}
\end{lemma}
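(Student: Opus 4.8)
The plan is to reduce the $\alpha$-consistency to a pure CVaR-minimization problem and then argue that $F^*$ is the minimizer via stochastic dominance. Two observations drive this. First, in one-max search the offline optimum accepts the maximum price, so $\mathrm{OPT}(\sigma)=x_\sigma$; hence for \emph{every} $F\in\mathcal{F}$ the numerator $\mathbb{E}_{\sigma\sim F}[\mathrm{OPT}(\sigma)]$ equals $\mathbb{E}_{x\sim\mu}[x]$, a quantity depending only on $\mu$. Second, the profit version of CVaR used here, $\mathrm{CVaR}_\alpha(X)=\sup_t\{t-\frac{1}{1-\alpha}\mathbb{E}[(t-X)^+]\}$, is monotone with respect to first-order stochastic dominance: since $z\mapsto(t-z)^+$ is nonincreasing, each inner term $t-\frac{1}{1-\alpha}\mathbb{E}[(t-X)^+]$ is nondecreasing in $X$ in the FOSD order, hence so is the supremum. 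Moreover $\mathrm{CVaR}_\alpha(X)\ge 1$ whenever $X\ge 1$ (take $t=1$), so the denominator is strictly positive. Consequently $\textrm{$\alpha$-cons}(A_T)=\mathbb{E}_{x\sim\mu}[x]\,\big/\,\inf_{F\in\mathcal{F}}\mathrm{CVaR}_{\alpha,F}(A_T(\sigma))$, and it remains to show that $F^*$ attains the infimum.

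Next I would pin down the profit of $A_T$ as a function of the realized maximum $x_\sigma$. If $x_\sigma<T$, no price in $\sigma$ reaches the threshold, so $A_T(\sigma)=1$ regardless of the order of prices; if $x_\sigma\ge T$, the algorithm accepts the first price that is at least $T$, a value in $[T,x_\sigma]$, so $A_T(\sigma)\ge T$. Writing $f(y)=1$ for $y<T$ and $f(y)=T$ for $y\ge T$, we therefore have $A_T(\sigma)\ge f(x_\sigma)$ for every $\sigma$, and thus for every $F\in\mathcal{F}$. Since $x_\sigma\sim\mu$ under any $F\in\mathcal{F}$, the law of $A_T(\sigma)$ under $F$ first-order stochastically dominates the law of $f(x)$ with $x\sim\mu$; combined with the monotonicity of CVaR this yields $\mathrm{CVaR}_{\alpha,F}(A_T(\sigma))\ge \mathrm{CVaR}_\alpha(f(x))$ for all $F\in\mathcal{F}$.

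Finally I would check that this bound is tight for $F^*$. On a sequence of increasing prices up to $y$, followed by a last price equal to $1$: if $y\ge T$, the algorithm accepts a price equal to $T$ (taking the price grid fine enough to contain $T$, which is legitimate as $T\in[t_1,t_2]\subseteq[1,M]$, or in the limit a price arbitrarily close to $T$ from above) and is never offered a higher price afterwards, so $A_T(\sigma)=T$; if $y<T$ it never accepts, so $A_T(\sigma)=1$. Hence under $F^*$, with $y\sim\mu$, we get $A_T(\sigma)=f(y)$, and $F^*\in\mathcal{F}$ since its maximum price is $y\sim\mu$. Therefore $\mathrm{CVaR}_{\alpha,F^*}(A_T(\sigma))=\mathrm{CVaR}_\alpha(f(x))=\inf_{F\in\mathcal{F}}\mathrm{CVaR}_{\alpha,F}(A_T(\sigma))$, which together with the first paragraph gives $\textrm{$\alpha$-cons}(A_T)=\mathbb{E}_{x\sim\mu}[x]/\mathrm{CVaR}_{\alpha,F^*}[A_T(\sigma)]$, as claimed. (Note this argument does not use the support restriction $[(1-\delta)y,(1+\delta)y]$ on $\mu$.)

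The only slightly delicate point is the limiting step in the last paragraph: the infimum of $A_T$'s profit over sequences with a fixed maximum $\ge T$ is approached rather than attained unless $T$ lies on the price grid, so one should either allow $T$ as a price or realize $F^*$ as a limit of discretized sequences and invoke continuity of CVaR along this limit. The main conceptual step is the FOSD/monotonicity argument; the rest is bookkeeping.
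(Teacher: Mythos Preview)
Your argument is correct and follows essentially the same route as the paper: both reduce the supremum over $F\in\mathcal{F}$ to the observation that the numerator $\mathbb{E}_{\sigma\sim F}[\mathrm{OPT}(\sigma)]=\mathbb{E}_{x\sim\mu}[x]$ is constant over $\mathcal{F}$, and that $F^*$ minimizes the denominator because under $F^*$ the algorithm's profit equals the pointwise minimum $f(x_\sigma)\in\{1,T\}$ achievable for each realized maximum. Your version is more explicit than the paper's---you justify the ``$F^*$ minimizes $\mathrm{CVaR}$'' step via first-order stochastic dominance and the monotonicity of the profit-CVaR functional, whereas the paper simply asserts the comparison from the fact that $A_T(\sigma)\in\{1,T\}$ on the support of $F^*$---but the underlying mechanism is identical.
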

Since the numerator is independent of the algorithm, hence  %Lemma~\ref{lem:one-max.cvar.worst} shows that 
it suffices to find the threshold $T$ for which ${\text{CVaR}_{\alpha,F^*}[A_T(\sigma)]}$ is maximized.
%To this end, it suffices to maximize the denominator of~\ref{eq:one-max.cvar.worst}, since the numerator is fixed. I.e., it suffices to find the threshold $T$ for which ${\text{CVaR}_{\alpha,F^*}[A_T(\sigma)]}$ is maximized. 
This is accomplished in the following theorem.  
Define $q_T = \Pr_{\sigma \sim F^*}[A_T(\sigma)=1]$. 
% \todo{SA: is this needed?}
From the definition of $F^*$ and the fact that $T$ is the threshold of $A$, it follows that $q_T= \int_{(1-\delta)y}^{T} \mu(p) dp$.
% Moreover, with probability $1-q_T$, it holds that $A_T(\sigma)=T$.
\begin{theorem}[Appendix~\ref{app.one-max}]
\label{thm:one-max.cvar}
$
T^*_{\cvar} = \argmax_{T \in [t_1,t_2]} 
\left\{ \tfrac{T(1-\alpha-q_T)+q_T}{1-\alpha},\ (1-\delta)y \right\}.
$
\end{theorem}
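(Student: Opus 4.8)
<br>

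\paragraph{Proof proposal.}
By Lemma~\ref{lem:one-max.cvar.worst}, the $\alpha$-consistency of $A_T$ equals $\mathbb{E}_{x\sim\mu}[x]\,/\,\mathrm{CVaR}_{\alpha,F^*}[A_T(\sigma)]$, and since the numerator does not depend on $T$, the plan is to evaluate $\mathrm{CVaR}_{\alpha,F^*}[A_T(\sigma)]$ in closed form and then maximize it over the robustness window $[t_1,t_2]$. The first step is to pin down the law of $A_T(\sigma)$ under $\sigma\sim F^*$. On an $F^*$-sequence whose peak price is $p$ (drawn from $\mu$), the threshold rule accepts the first price at least $T$ along the infinitesimally increasing prefix, so it locks in a price tending to $T$ whenever $T\le p$, and it never fires when $T>p$, in which case it is left with the base value $(1-\delta)y$. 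Hence, in the limit, $A_T(\sigma)$ is a two-point random variable taking the value $T$ with probability $1-q_T=\Pr_{p\sim\mu}[p\ge T]$ and the value $(1-\delta)y$ with probability $q_T=\int_{(1-\delta)y}^{T}\mu$ (it degenerates to the constant $T$ when $T\le(1-\delta)y$).

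The second step computes $\mathrm{CVaR}_{\alpha,F^*}[A_T(\sigma)]$ from the variational (Rockafellar--Uryasev) formula in the profit-maximization setting, $\mathrm{CVaR}_\alpha(X)=\sup_t\{t-\tfrac{1}{1-\alpha}\mathbb{E}[(t-X)^+]\}$. Since $X$ is supported on two points, $t\mapsto t-\tfrac{1}{1-\alpha}\mathbb{E}[(t-X)^+]$ is concave and piecewise linear with breakpoints at those points, so its supremum is attained at one of them; the slope of the middle piece equals $1-\tfrac{q_T}{1-\alpha}$, whose sign flips as $q_T$ crosses $1-\alpha$, and this decides whether the optimum sits at the atom $T$ or at the atom $(1-\delta)y$. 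A short case split then yields
\[
\mathrm{CVaR}_{\alpha,F^*}[A_T(\sigma)]=\max\left\{\tfrac{T(1-\alpha-q_T)+q_T}{1-\alpha},\ (1-\delta)y\right\},
\]
the first term being the conditional mean over the worst $(1-\alpha)$-fraction of outcomes when the peak clears $T$ often enough ($q_T<1-\alpha$), and the second being the floor that the worst tail collapses to once the failure event alone fills it. Minimizing $\alpha$-consistency is then exactly maximizing this expression over $[t_1,t_2]$, which is the claimed description of $T^*_{\cvar}$.

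To argue that the $\argmax$ is attained and effectively computable, I would use that $q_T$ is continuous and non-decreasing, equals $0$ on $[0,(1-\delta)y]$ and $1$ on $[(1+\delta)y,\infty)$, with derivative $q_T'=\mu(T)$ that is non-decreasing then non-increasing by the unimodality of $\mu$. Hence $f(T):=\tfrac{T(1-\alpha-q_T)+q_T}{1-\alpha}=T-\tfrac{q_T(T-1)}{1-\alpha}$ is continuous and piecewise $C^1$, so $\max_{[t_1,t_2]}\max\{f,(1-\delta)y\}$ is attained at an endpoint of $[t_1,t_2]$, at a point where $\mu$ has a kink, or at an interior stationary point of $f$, i.e. where $\mu(T)(T-1)+q_T=1-\alpha$; comparing these finitely many candidates against the constant $(1-\delta)y$ fixes $T^*_{\cvar}$. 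Two limiting checks support the formula: at $\alpha=0$ the objective is the expected profit under $F^*$, giving distributional consistency, while as $\alpha\to1$ it becomes the essential infimum of the profit on $F^*$, which equals the base value $(1-\delta)y$ for every threshold not below the support of $\mu$, so the maximizer is then governed only by the position of the threshold relative to that support.

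The main obstacle is the prerequisite rather than the above calculus, namely Lemma~\ref{lem:one-max.cvar.worst}: one must show that among all input distributions whose peak marginal is $\mu$, the coupling realized by $F^*$ minimizes $\mathrm{CVaR}_{\alpha,F}[A_T(\sigma)]$ for every threshold $T$ and every risk level $\alpha$ simultaneously. The crux is the per-realization fact that, conditioned on peak $p$, a threshold algorithm cannot be made to earn more than $T$ when $T\le p$ and must drop to the base value when $T>p$, so $F^*$ already attains the pointwise-worst outcome at each $p$; one then invokes monotonicity of $\mathrm{CVaR}$ under the almost-sure order to conclude that no distribution with the same peak marginal does better. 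A secondary, routine point is bookkeeping the corner cases where the support $[(1-\delta)y,(1+\delta)y]$ is not contained in $[t_1,t_2]$ --- so $q_T$ is identically $0$, or saturated at $1$, over part of the feasible interval --- and the degenerate window $t_1=t_2=\sqrt{M}$.
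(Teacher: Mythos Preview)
Your overall plan mirrors the paper's proof: reduce to the worst-case distribution $F^*$ via Lemma~\ref{lem:one-max.cvar.worst}, observe that $A_T(\sigma)$ is a two-point random variable under $F^*$, and then evaluate the Rockafellar--Uryasev variational formula by a case split on the parameter $t$, using that the objective is piecewise linear with slope $1-\tfrac{q_T}{1-\alpha}$ on the middle piece. That is exactly what the paper does.

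There is, however, a concrete gap. You assert that when the peak $p$ falls below $T$ the algorithm ``is left with the base value $(1-\delta)y$'', so that $A_T(\sigma)$ is supported on $\{(1-\delta)y,T\}$. This is incorrect: under $F^*$ each sequence increases infinitesimally from $1$ up to $p\sim\mu$ and then ends with a price equal to $1$, and by the problem's convention the profit is $1$ whenever no price reaches the threshold. Hence the two atoms are $\{1,T\}$, not $\{(1-\delta)y,T\}$; the paper uses this explicitly (its Case~3 writes $(t-A_T(\sigma))^+=t-1$ with probability $q_T$, and the same fact appears in the proof of Lemma~\ref{lem:one-max.cvar.worst}). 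This matters for your derivation: with low atom $\ell$, the middle piece evaluated at $t=T$ gives $\tfrac{T(1-\alpha-q_T)+q_T\,\ell}{1-\alpha}$. Only $\ell=1$ yields the theorem's first branch $\tfrac{T(1-\alpha-q_T)+q_T}{1-\alpha}$; your choice $\ell=(1-\delta)y$ would instead give $\tfrac{T(1-\alpha-q_T)+q_T(1-\delta)y}{1-\alpha}$. So the formula you claim ``a short case split then yields'' does not follow from the two-point law you wrote down --- either the law or the resulting expression has to change. The same misidentification recurs in your $\alpha\to1$ sanity check, where the essential infimum of $A_T(\sigma)$ under $F^*$ is $1$, not $(1-\delta)y$.
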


Theorem~\ref{thm:one-max.cvar} interpolates between two cases. 
When $\alpha=0$, the algorithm maximizes its expected profit, assuming that the maximum price in the input sequence has distribution $\mu$. In this case, we find an $r$-robust algorithm of optimal consistency, under the distributional setting of~\cite{diakonikolas2021learning} which is a novel contribution for the one-max search problem by itself. When $\alpha \to 1$, the online algorithm must choose its threshold under the assumption that the maximum price is chosen adversarially within the support $[\ell,u]$ of $\mu$, hence the threshold (and the algorithm's profit) is $\ell=(1-\delta)y$. This recovers the analysis of the $\delta$-tolerant algorithm in~\cite{DBLP:conf/aaai/0001KZ22}.

{\section{Contract Scheduling}
\label{sec:contract}
We apply our framework to the contract scheduling problem, defined in Section~\ref{sec:introduction}. Once again, the starting motivation is that there is an infinite number of schedules, all of which achieve an optimal robustness equal to 4~\cite{RZ.1991.composing}. We show how to compute schedules that remain 4-robust, and provably optimize  each of our metrics. Specifically, for maximum distance, the scheduled is  computed by considering the set of critical points (conceptually similar to Theorem~\ref{thm:skirental.weighted.distance}); for average distance we derive a closed-form expression and optimize it numerically; and for $\cvar$ we obtain an exact formula based on the predicted distribution.  We also evaluate our schedules experimentally and observe that they outperform the state-of-the-art Pareto-optimal and $\delta$-tolerant schedules of~\cite{DBLP:journals/jair/AngelopoulosK23}. We refer to Appendix~\ref{sec:app_contract} for the detailed discussion.

% % \input{experiment_plots}
% \input{experiment_tables}

\section{Experimental Evaluation}
\label{sec:experiments}

We evaluate our algorithms of Sections~\ref{sec:skirental} and~\ref{sec:one-max} which optimize the maximum and average distance as well as the CVaR. We refer to them as {\sc Max}, {\sc Avg} and {\sc CVaR}$_\alpha$.

\textbf{Baselines} \ 
For ski rental, we compare to the class of algorithms $\text{BP}_{\rho}$ of~\cite{DBLP:journals/corr/abs-2501-12770}.
$\text{BP}_{\rho}$ buys at time $b/(r-1)$, if $y\geq b$, and at time
$\rho\in[b,b(r-1)]$, otherwise. In our experiments we consider the cases in regards to the parameter $\rho$, namely $\rho\in\{b,\,(r-1)b,\,b+\tfrac{br}{2}\}$.  
For one-max search we compare to two Pareto-optimal algorithms: the one of~\cite{sun2021pareto},
denoted by $\textrm{PO}_1$, and the more straightforward algorithm, denoted by $\textrm{PO}_2$~\cite{DBLP:conf/mfcs/0001K23} that sets its threshold to $T=\min\{t_2,\max\{t_1,y\}\}$, where  $t_1=r$ and $t_2=M/r$. We also compare against the $\delta$-tolerant algorithm of~\cite{DBLP:conf/aaai/0001KZ22}, denoted by $\delta$-{\sc Tol}.

\textbf{Datasets} \
For ski rental, we set $b=10$ and $r=5$. The prediction $y$ is chosen such that $y\sim\mathrm{Unif}[b/z,bz]$, where $z=4$, and the prediction range $R_y$ is set to $[(1-\delta)y,(1+\delta)y]$ with $\delta=0.9$. The horizon $x$ is generated u.a.r. in $R_y$.
For one-max search,  
we set $M=1000$ and $r=100$, with $y\sim\mathrm{Unif}[z, M/z]$ and $z=10$, and the same definition of $R_y$. The input is the worst-case sequence of increasing prices up to the maximum price $x$, followed by a last price equal to 1. This is the standard class of inputs for evaluating worst-case performance~\cite{sun2021pareto,DBLP:journals/corr/abs-2408-04122}. $x$ is chosen u.a.r. in $R_y$. We refer to Appendix~\ref{appendix:experiments} for more experiments on the parameters $r,\delta,z$, as well as for 
experiments on real data for one-max search.

\textbf{Evaluation} \ 
For both problems, we compute the average performance ratio, over all $x\in R_y$, and 1000 repetitions on the choice of $y$. For  {\sc Max} and {\sc Avg}, we use the (uniquely defined) linear, symmetric function over $R_y$, whereas {\sc CVaR}$_\alpha$ is evaluated under a Gaussian distribution $\mu$ truncated to $R_y$ and centered at $y$, with $\alpha\in\{0.1,0.5,0.9\}$. Furthermore, we compute the average expected cost/profit over $\mu$ for the two problems, respectively, where the averaging is over the choices of $y$.  
Tables~\ref{tab:skirental_compact} and~\ref{tab:onemax_compact} show the obtained average performance ratios and expected costs/profits. The tables also report the 95\% confidence intervals (CIs).

% \begin{table}[t!]
%   \centering
%   \scriptsize
%   \caption{Experimental results on ski rental.}
%   \label{tab:skirental_compact}
%   \setlength{\tabcolsep}{3pt}
%   \begin{tabularx}{0.96\linewidth}{@{}l *{9}{>{\centering\arraybackslash}X} @{}}
%     \toprule
%     & {\sc Max} & {\sc Avg} & \multicolumn{3}{c}{{\sc CVaR}$_\alpha$} & \multicolumn{4}{c}{$\text{BP}_{\rho}$} \\
%     \cmidrule(lr){4-6}\cmidrule(lr){7-10}
%     & & & $\alpha=0.1$ & $\alpha=0.5$ & $\alpha=0.9$ & $b/(r-1)$ & $b$ & $b+br/2$ & $(r-1)b$ \\
%     \midrule
%     \textbf{Avg ratio} & 1.344 & 1.337 & 1.340 & 1.349 & 1.367 & 1.538 & 1.677 & 2.187 & 2.219 \\
%     \textbf{CI$_{+}$/CI$_{-}$} & +0.03/-0.03 & +0.03/-0.03 & +0.03/-0.03 & +0.03/-0.03 & +0.03/-0.04 & +0.05/-0.04 & +0.05/-0.05 & +0.16/-0.17 & +0.17/-0.17 \\
%     \midrule
%     \textbf{Exp.\ cost} & 11.241 & 11.187 & 11.173 & 11.215 & 11.316 & 12.436 & 16.987 & 21.234 & 20.958 \\
%     \textbf{CI$_{+}$/CI$_{-}$} & +0.48/-0.51 & +0.49/-0.51 & +0.50/-0.52 & +0.48/-0.54 & +0.48/-0.52 & +0.05/-0.09 & +0.99/-1.09 & +2.01/-2.22 & +1.95/-2.00 \\
%     \bottomrule
%   \end{tabularx}
% \end{table}
\begin{table}[t!]
  \centering
  \scriptsize
  \caption{Experimental results for ski rental.}
  \label{tab:skirental_compact}
  \setlength{\tabcolsep}{3pt}
  \begin{tabularx}{1\linewidth}{@{}l *{8}{>{\centering\arraybackslash}X} @{}}
    \toprule
    & {\sc Max} & {\sc Avg} & \multicolumn{3}{c}{{\sc CVaR}$_\alpha$} & \multicolumn{3}{c}{$\text{BP}_{\rho}$} \\
    \cmidrule(lr){4-6}\cmidrule(lr){7-9}
    & & & $\alpha=0.1$ & $\alpha=0.5$ & $\alpha=0.9$ & $b$ & $b+br/2$ & $b(r-1)$ \\
    \midrule
    \textbf{Avg perf. ratio} & 1.344 & 1.337 & 1.340 & 1.349 & 1.367 & 1.677 & 2.187 & 2.219 \\
    \textbf{CI$_{+}$/CI$_{-}$} & +0.03/-0.03 & +0.03/-0.03 & +0.03/-0.03 & +0.03/-0.03 & +0.03/-0.04 & +0.05/-0.05 & +0.16/-0.17 & +0.17/-0.17 \\
    \midrule
    \textbf{Exp.\ cost} & 11.241 & 11.187 & 11.173 & 11.215 & 11.316 & 16.987 & 21.234 & 20.958 \\
    \textbf{CI$_{+}$/CI$_{-}$} & +0.48/-0.51 & +0.49/-0.51 & +0.50/-0.52 & +0.48/-0.54 & +0.48/-0.52 & +0.99/-1.09 & +2.01/-2.22 & +1.95/-2.00 \\
    \bottomrule
  \end{tabularx}
\end{table}

    % & & & $\alpha=0.1$ & $\alpha=0.5$ & $\alpha=0.9$ & $\rho=b/(r-1)$ & $\rho=b$ & $\rho=b+\tfrac{br}{2}$ & $\rho=(r-1)b$ \\

\textbf{Discussion} \
For the ski rental problem, all our algorithms achieve better performance ratios and average costs than the baseline $\text{BP}_{\rho}$, for all choices of the parameter $\rho$. The performance ratios and average costs of {\sc CVaR}$_\alpha$ increase with $\alpha$, as expected, since the higher the parameter $\alpha$, the more the algorithm hedges against unfavorable outcomes.

For one-max search, the baseline algorithms show marked performance differences. This is due to the choice of thresholds, with 
$\delta$-{\sc Tol} and $\textrm{PO}_2$ tending to have the smallest and largest thresholds, respectively, whereas $\textrm{PO}_1$ has threshold in between. Thus, $\textrm{PO}_2$ and $\delta$-{\sc Tol} show highest/lowest brittleness to prediction errors, respectively, whereas $\textrm{PO}_1$ is more balanced. {\sc Max} and {\sc Avg} exhibit better performance ratios than all baselines and better expected profits, with the exception of the highly brittle $\textrm{PO}_2$.
In regards to the {\sc CVaR} class, once again the expected profit is decreasing with $\alpha$, whereas the performance ratio improves as $\alpha$ grows. We observe that CVaR algorithms considerably improve upon both $\delta$-{\sc Tol} and  $\textrm{PO}_2$ across both metrics. They also have a better average profit than $\textrm{PO}_1$, though worse performance ratio.

\begin{table}[t!]
  \centering
  \scriptsize
  \caption{Experimental results for one-max search.}
  \label{tab:onemax_compact}
  \setlength{\tabcolsep}{3pt}
  \begin{tabularx}{1\linewidth}{@{}l *{8}{>{\centering\arraybackslash}X} @{}}
    \toprule
    & {\sc Max} & {\sc Avg} & \multicolumn{3}{c}{{\sc CVaR}$_\alpha$} & $\delta$-{\sc Tol} & $\textrm{PO}_1$ & $\textrm{PO}_2$ \\
    \cmidrule(lr){4-6}
    & & & $\alpha=0.1$ & $\alpha=0.5$ & $\alpha=0.9$ & & & \\
    \midrule
    \textbf{Avg perf. ratio} & 4.394 & 4.447 & 9.771 & 8.144 & 6.022 & 10.009 & 4.630 & 15.685 \\
    \textbf{CI$_{+}$/CI$_{-}$} & +0.04/-0.05 & +0.06/-0.05 & +0.26/-0.26 & +0.20/-0.20 & +0.12/-0.12 & +0.00/-0.00 & +0.06/-0.06 & +0.45/-0.45 \\
    \midrule
    \textbf{Exp.\ profit} & 15.614 & 20.528 & 35.795 & 34.402 & 27.500 & 5.475 & 13.904 & 27.986 \\
    \textbf{CI$_{+}$/CI$_{-}$} & +0.30/-0.32 & +0.50/-0.48 & +1.06/-1.04 & +1.00/-1.08 & +0.81/-0.82 & +0.15/-0.17 & +0.16/-0.17 & +0.82/-0.80 \\
    \bottomrule
  \end{tabularx}
\end{table}

% For the one-max problem, {\sc Max} and {\sc Avg} achieve smaller average ratios than all three baselines: $\delta$-{\sc Tol}, $\textrm{PO}_1$, and $\textrm{PO}_2$. {\sc CVaR}$_\alpha$ reaches the highest average expected profits, especially for small~$\alpha$, which matches the discussion after Theorem~\ref{thm:one-max.cvar}. Its larger ratios are explained by worst-case sequences where the algorithm selects a higher threshold.

The experiments show that even with relatively simple weight functions and distance measures, distance-based algorithms offer considerable improvements over the state of the art. CVaR approaches, on the other hand, help obtain smooth tradeoffs between the expected cost/profit and the performance ratio, as a function of the risk parameter $\alpha$, with improved overall performance in the majority of the cases.
%Ski rental appears more stable for risk analysis than one-max: in ski rental, {\sc CVaR}$_\alpha$ values stay close together, while in one-max the spread between small and large~$\alpha$ is much wider. This difference comes from the fact that in ski rental the optimal cost is always at most $b$, which keeps ratios within a narrow range. In one-max, the optimal profit can grow with the maximum price, so ratios vary much more and the impact of risk parameters becomes stronger.
In Appendix~\ref{appendix:experiments}
we report further experimental results that allow us to reach additional conclusions on the impact of the various parameters in the setting. Specifically, as the prediction range becomes smaller, or as the weight function becomes  more concentrated around $y$, we conclude that the performance of our algorithms improves. This is consistent with intuition, since they can better leverage information on the quality of the prediction. 

%Further experimental results are given in the appendix. There we test different values of~$\delta$, different ranges for the prediction (through $z$), and also alternative weight functions, such as Gaussian weights for {\sc Max} and {\sc Avg}.

\section{Conclusion}
\label{sec:conclusion}

We introduced new decision-theoretic approaches for optimizing the performance of learning-augmented algorithms, by taking into consideration the entire range of the prediction error. Future work can address further applications, e.g., generalized rent-or-buy problems such as multi-shop~\cite{wang2020online} and multi-option ski rental~\cite{shin2023improved},  knapsack~\cite{daneshvaramoli2024competitive} and secretary problems~\cite{DBLP:journals/disopt/AntoniadisGKK23}.
Another direction involves problems with multi-valued predictions, such as packing problems~\cite{im2021online}. Our framework can still apply in these more complex settings, since the error is defined by a distance norm between the predicted and the actual vector. A last, challenging direction concerns dynamic predictions, in which the oracle is accessed several times during the algorithm's execution. For instance, it would be interesting to study decision-based approaches in learning-augmented power management, given its connections to ski rental as shown in~\cite{antoniadis2021learning}. 

%We introduced decision-theoretic metrics that optimize the performance of learning-augmented algorithms across a range of prediction errors. For three classic applications, we derived optimal algorithms that outperform extreme-case baselines in practice. Our experiments used basic weight functions and distance measures yet still showed improvements, suggesting that richer variants could lead to further gains. Beyond the problems studied here, the approach can extend to the many known variations of ski rental and other rent-or-buy settings, as well as to problems such as knapsack~\cite{daneshvaramoli2024competitive}, secretary~\cite{DBLP:journals/disopt/AntoniadisGKK23}, and packing with multi-valued predictions~\cite{im2021online}.

% \bibliography{iclr2026_conference}
% \bibliographystyle{iclr2026_conference}
\printbibliography

@inproceedings{DBLP:conf/mfcs/0001K23,
  author       = {Spyros Angelopoulos and
                  Shahin Kamali},
  title        = {R{\'{e}}nyi-Ulam Games and Online Computation with Imperfect
                  Advice},
  booktitle    = {48th International Symposium on Mathematical Foundations of Computer
                  Science, {MFCS}},
  series       = {LIPIcs},
  volume       = {272},
  pages        = {13:1--13:15},
  publisher    = {Schloss Dagstuhl - Leibniz-Zentrum f{\"{u}}r Informatik},
  year         = {2023}
}

@article{antoniadis2021learning,
  title={Learning-augmented dynamic power management with multiple states via new ski rental bounds},
  author={Antoniadis, Antonios and Coester, Christian and Eli{\'a}s, Marek and Polak, Adam and Simon, Bertrand},
  journal={Advances in neural information processing systems},
  volume={34},
  pages={16714--16726},
  year={2021}
}

@inproceedings{shin2023improved,
  title={Improved learning-augmented algorithms for the multi-option ski rental problem via best-possible competitive analysis},
  author={Shin, Yongho and Lee, Changyeol and Lee, Gukryeol and An, Hyung-Chan},
  booktitle={International Conference on Machine Learning},
  pages={31539--31561},
  year={2023},
  organization={PMLR}
}

@article{wang2020online,
  title={Online algorithms for multi-shop ski rental with machine learned advice},
  author={Wang, Shufan and Li, Jian and Wang, Shiqiang},
  journal={Advances in Neural Information Processing Systems},
  volume={33},
  pages={8150--8160},
  year={2020}
}

@article{li2010weighted,
  title={Weighted area under the receiver operating characteristic curve and its application to gene selection},
  author={Li, Jialiang and Fine, Jason P},
  journal={Journal of the Royal Statistical Society Series C: Applied Statistics},
  volume={59},
  number={4},
  pages={673--692},
  year={2010},
  publisher={Oxford University Press}
}

@article{fawcett2006introduction,
  title={An introduction to ROC analysis},
  author={Fawcett, Tom},
  journal={Pattern recognition letters},
  volume={27},
  number={8},
  pages={861--874},
  year={2006},
  publisher={Elsevier}
}

@article{daneshvaramoli2024competitive,
  title={Competitive Algorithms for Online Knapsack with Succinct Predictions},
  author={Daneshvaramoli, Mohammadreza and Karisani, Helia and Lechowicz, Adam and Sun, Bo and Musco, Cameron and Hajiesmaili, Mohammad},
  journal={arXiv preprint arXiv:2406.18752},
  year={2024}
}

@article{im2021online,
  title={Online knapsack with frequency predictions},
  author={Im, Sungjin and Kumar, Ravi and Montazer Qaem, Mahshid and Purohit, Manish},
  journal={Advances in Neural Information Processing Systems},
  volume={34},
  pages={2733--2743},
  year={2021}
}

@inproceedings{RZ.1991.composing,
author="Stuart J. Russell and Shlomo Zilberstein",
title="Composing real-time Systems",
booktitle="Proceedings of the 12th International Joint Conference on Artificial Intelligence (IJCAI)",
year="1991",
pages="212--217"
}

@article{mohr2014online,
  title={Online algorithms for conversion problems: a survey},
  author={Mohr, Esther and Ahmad, Iftikhar and Schmidt, G{\"u}nter},
  journal={Surveys in Operations Research and Management Science},
  volume={19},
  number={2},
  pages={87--104},
  year={2014},
  publisher={Elsevier}
}

@article{bamas2020primal,
  title={The primal-dual method for learning augmented algorithms},
  author={Bamas, Etienne and Maggiori, Andreas and Svensson, Ola},
  journal={Advances in Neural Information Processing Systems},
  volume={33},
  pages={20083--20094},
  year={2020}
}

@inproceedings{DBLP:journals/corr/abs-2408-04122,
  author       = {Alex Elenter and Spyros Angelopoulos and
                  Christoph D{\"{u}}rr  and
                  Yanni Lefki},
  title        = {Overcoming Brittleness in Pareto-Optimal Learning-Augmented Algorithms},
  booktitle={Proceedings of the 37th  Annual Conference on Neural Information Processing Systems (NeurIPS)},
  year         = {2024}
}

@book{DBLP:books/daglib/0097013,
  author       = {Allan Borodin and
                  Ran El{-}Yaniv},
  title        = {Online computation and competitive analysis},
  publisher    = {Cambridge University Press},
  year         = {1998}
}

@misc{predictionslist,
  title        = "Repository of works on algorithms with predictions",
  author={Alexander Lindermayr and Nicole Megow},
  howpublished = "\url{https://algorithms-with-predictions.github.io}",
  year         = 2025,
  note         = "Accessed: 2025-01-01"
}

@inproceedings{DBLP:conf/soda/AzarPT22,
  author       = {Yossi Azar and
                  Debmalya Panigrahi and
                  Noam Touitou},
  editor       = {Joseph (Seffi) Naor and
                  Niv Buchbinder},
  title        = {Online Graph Algorithms with Predictions},
  booktitle    = {Proceedings of the 2022 {ACM-SIAM} Symposium on Discrete Algorithms,
                  {SODA}},
  pages        = {35--66},
  publisher    = {{SIAM}},
  year         = {2022}
}

@article{sun2021pareto,
  title={Pareto-optimal learning-augmented algorithms for online conversion problems},
  author={Sun, Bo and Lee, Russell and Hajiesmaili, Mohammad and Wierman, Adam and Tsang, Danny H.K.},
  journal={Advances in Neural Information Processing Systems},
  volume={34},
  pages={10339--10350},
  year={2021}
}

@inproceedings{DBLP:conf/soda/LattanziLMV20,
  author       = {Silvio Lattanzi and
                  Thomas Lavastida and
                  Benjamin Moseley and
                  Sergei Vassilvitskii},
  title        = {Online Scheduling via Learned Weights},
  booktitle    = {Proceedings of the 2020 {ACM-SIAM} Symposium on Discrete Algorithms},
  pages        = {1859--1877},
  publisher    = {{SIAM}},
  year         = {2020}
}

@inproceedings{DBLP:conf/icml/GollapudiP19,
  author       = {Sreenivas Gollapudi and
                  Debmalya Panigrahi},
  title        = {Online Algorithms for Rent-Or-Buy with Expert Advice},
  booktitle    = {Proceedings of the 36th International Conference on Machine Learning,
                  {ICML}},
  series       = {Proceedings of Machine Learning Research},
  volume       = {97},
  pages        = {2319--2327},
  publisher    = {{PMLR}},
  year         = {2019}
}

@inproceedings{DBLP:conf/innovations/0001DJKR20,
  author       = {Spyros Angelopoulos and
                  Christoph D{\"{u}}rr and
                  Shendan Jin and
                  Shahin Kamali and
                  Marc P. Renault},
  editor       = {Thomas Vidick},
  title        = {Online Computation with Untrusted Advice},
  booktitle    = {11th Innovations in Theoretical Computer Science Conference, {ITCS}},
  series       = {LIPIcs},
  volume       = {151},
  pages        = {52:1--52:15},
  publisher    = {Schloss Dagstuhl - Leibniz-Zentrum f{\"{u}}r Informatik},
  year         = {2020}
}

@article{DBLP:journals/disopt/AntoniadisGKK23,
  author       = {Antonios Antoniadis and
                  Themis Gouleakis and
                  Pieter Kleer and
                  Pavel Kolev},
  title        = {Secretary and online matching problems with machine learned advice},
  journal      = {Discret. Optim.},
  volume       = {48},
  number       = {Part 2},
  pages        = {100778},
  year         = {2023}
}

@inproceedings{DBLP:conf/esa/LavastidaM0X21,
  author       = {Thomas Lavastida and
                  Benjamin Moseley and
                  R. Ravi and
                  Chenyang Xu},
  title        = {Learnable and Instance-Robust Predictions for Online Matching, Flows
                  and Load Balancing},
  booktitle    = {29th Annual European Symposium on Algorithms, {ESA})},
  series       = {LIPIcs},
  volume       = {204},
  pages        = {59:1--59:17},
year={2021}
}

@inproceedings{DBLP:conf/aistats/ChristiansonSW23,
  author       = {Nicolas Christianson and
                  Junxuan Shen and
                  Adam Wierman},
  title        = {Optimal robustness-consistency tradeoffs for learning-augmented metrical
                  task systems},
  booktitle    = {{AISTATS}},
  series       = {Proceedings of Machine Learning Research},
  volume       = {206},
  pages        = {9377--9399},
  publisher    = {{PMLR}},
  year         = {2023}
}

@inproceedings{DBLP:conf/eenergy/Lee0HL24,
  author       = {Russell Lee and
                  Bo Sun and
                  Mohammad Hajiesmaili and
                  John C. S. Lui},
  title        = {Online Search with Predictions: Pareto-optimal Algorithm and its Applications
                  in Energy Markets},
  booktitle    = {e-Energy},
  pages        = {50--71},
  publisher    = {{ACM}},
  year         = {2024}
}

@article{DBLP:journals/jair/AngelopoulosK23,
  author       = {Spyros Angelopoulos and
                  Shahin Kamali},
  title        = {Contract Scheduling with Predictions},
  journal      = {J. Artif. Intell. Res.},
  volume       = {77},
  pages        = {395--426},
  year         = {2023}
}

@article{el2001optimal,
  title={Optimal search and one-way trading online algorithms},
  author={El-Yaniv, Ran and Fiat, Amos and Karp, Richard M and Turpin, Gordon},
  journal={Algorithmica},
  volume={30},
  number={1},
  pages={101--139},
  year={2001},
  publisher={Springer}
}

@inproceedings{wei2020optimal,
  author    = {Alexander Wei and
  Fred Zhang},
  title     = {Optimal Robustness-Consistency Trade-offs for Learning-Augmented Online
  Algorithms},
  booktitle = {Proceedings of the 33rd Conference on Neural Information Processing Systems ({NeurIPS})},
  year      = {2020}
}

@inproceedings{angelopoulos2024contract,
      title={Contract Scheduling with Distributional and Multiple Advice}, 
      author={Spyros Angelopoulos and Marcin Bienkowski and Christoph Dürr and Bertrand Simon},
booktitle={Proceedings of the 33rd International Joint
Conference on Artificial Intelligence (IJCAI)},      
year={2024}
}

@incollection{sarykalin2008value,
  title={Value-at-risk vs. conditional value-at-risk in risk management and optimization},
  author={Sarykalin, Sergey and Serraino, Gaia and Uryasev, Stan},
  booktitle={State-of-the-art decision-making tools in the information-intensive age},
  pages={270--294},
  year={2008},
  publisher={Informs}
}

@inproceedings{DBLP:conf/aaai/0001KZ22,
  author    = {Spyros Angelopoulos and
               Shahin Kamali and
               Dehou Zhang},
  title     = {Online Search with Best-Price and Query-Based Predictions},
  booktitle = {Proceedings of the 36th AAAI Conference on Artificial Intelligence},
  pages     = {9652--9660},
  publisher = {{AAAI} Press},
  year      = {2022}
}

@inproceedings{NIPS2018_8174,
  Author = {Purohit, Manish and Svitkina, Zoya and Kumar, Ravi},
  Booktitle = {Advances in Neural Information Processing Systems},
  Pages = {9661--9670},
  Title = {Improving Online Algorithms via {ML} Predictions},
  Volume = {31},
  Year = {2018}
  }

@inproceedings{diakonikolas2021learning,
  title={Learning online algorithms with distributional advice},
  author={Diakonikolas, Ilias and Kontonis, Vasilis and Tzamos, Christos and Vakilian, Ali and Zarifis, Nikos},
  booktitle={International Conference on Machine Learning},
  pages={2687--2696},
  year={2021},
  organization={PMLR}
}

@inproceedings{DBLP:journals/jacm/LykourisV21,
  title={Competitive Caching with Machine Learned Advice},
  author={Lykouris, Thodoris and Vassilvtiskii, Sergei},
  booktitle={International Conference on Machine Learning},
  pages={3296--3305},
  year={2018},
  organization={PMLR}
}

@inproceedings{steins,
title="Contract Algorithms and Robots on Rays: Unifying Two Scheduling Problems",
author="Daniel S. Bernstein and Lev Finkelstein and Shlomo Zilberstein",
booktitle="Proceedings of the 18th International Joint Conference on Artificial Intelligence (IJCAI)",
year="2003",
pages="1211-1217"
}

@article{aaai06:contracts,
title ="Optimal Scheduling of Contract Algorithms for Anytime Problem-Solving",
author ="Alejandro L\'opez--Ortiz and Spyros Angelopoulos and Angele Hamel",
  journal   = {J. Artif. Intell. Res.},
  volume="51", pages="533--554",
year="2014"
 }

@article{el-yaniv_competitive_1998,
	title = {Competitive solutions for online financial problems},
	volume = {30},
	issn = {0360-0300},
	number = {1},
	urldate = {2024-05-21},
	journal = {ACM Computing Surveys},
	author = {El-Yaniv, Ran},
	month = mar,
	year = {1998},
	pages = {28--69},
}

@inproceedings{clemente2016advice,
  title={Advice complexity of the online search problem},
  author={Clemente, Jhoirene and Hromkovi{\v{c}}, Juraj and Komm, Dennis and Kudahl, Christian},
  booktitle={International Workshop on Combinatorial Algorithms},
  pages={203--212},
  year={2016},
  organization={Springer}
}

@article{damaschke2009online,
  title={Online search with time-varying price bounds},
  author={Damaschke, Peter and Ha, Phuong Hoai and Tsigas, Philippas},
  journal={Algorithmica},
  volume={55},
  number={4},
  pages={619--642},
  year={2009},
  publisher={Springer}
}

@inproceedings{DBLP:conf/colt/Christianson0LW24,
  author       = {Nicolas Christianson and
                  Bo Sun and
                  Steven H. Low and
                  Adam Wierman},
  editor       = {Shipra Agrawal and
                  Aaron Roth},
  title        = {Risk-Sensitive Online Algorithms (Extended Abstract)},
  booktitle    = {The Thirty Seventh Annual Conference on Learning Theory (COLT)},
  series       = {Proceedings of Machine Learning Research},
  volume       = {247},
  pages        = {1140--1141},
  publisher    = {{PMLR}},
  year         = {2024}
}

@article{rockafellar2000optimization,
  title={Optimization of conditional value-at-risk},
  author={Rockafellar, R Tyrrell and Uryasev, Stanislav and others},
  journal={Journal of risk},
  volume={2},
  pages={21--42},
  year={2000}
}

@article{purohit2018improving,
  title={Improving online algorithms via ML predictions},
  author={Purohit, Manish and Svitkina, Zoya and Kumar, Ravi},
  journal={Advances in Neural Information Processing Systems},
  volume={31},
  year={2018}
}

@article{DBLP:journals/algorithmica/KarlinMRS88,
  author       = {Anna R. Karlin and
                  Mark S. Manasse and
                  Larry Rudolph and
                  Daniel Dominic Sleator},
  title        = {Competitive Snoopy Caching},
  journal      = {Algorithmica},
  volume       = {3},
  pages        = {77--119},
  year         = {1988},
  url          = {https://doi.org/10.1007/BF01762111},
  doi          = {10.1007/BF01762111},
  timestamp    = {Wed, 14 Nov 2018 10:49:04 +0100},
  biburl       = {https://dblp.org/rec/journals/algorithmica/KarlinMRS88.bib},
  bibsource    = {dblp computer science bibliography, https://dblp.org}
}

@article{DBLP:journals/corr/abs-2501-12770,
  author       = {Ziyad Benomar and
                  Vianney Perchet},
  title        = {On Tradeoffs in Learning-Augmented Algorithms},
  journal      = {CoRR},
  volume       = {abs/2501.12770},
  year         = {2025},
  url          = {https://doi.org/10.48550/arXiv.2501.12770},
  doi          = {10.48550/ARXIV.2501.12770},
  eprinttype    = {arXiv},
  eprint       = {2501.12770},
  timestamp    = {Tue, 25 Feb 2025 13:58:32 +0100},
  biburl       = {https://dblp.org/rec/journals/corr/abs-2501-12770.bib},
  bibsource    = {dblp computer science bibliography, https://dblp.org}
}

@INPROCEEDINGS{6566944,
  author={Khanafer, Ali and Kodialam, Murali and Puttaswamy, Krishna P. N.},
  booktitle={2013 Proceedings IEEE INFOCOM}, 
  title={The constrained Ski-Rental problem and its application to online cloud cost optimization}, 
  year={2013},
  volume={},
  number={},
  pages={1492-1500},
  keywords={Games;Algorithm design and analysis;Game theory;Optimization;Snow;Cache storage;Standards},
  doi={10.1109/INFCOM.2013.6566944}}

@inproceedings{DBLP:conf/nips/WangLW20,
  author       = {Shufan Wang and
                  Jian Li and
                  Shiqiang Wang},
  editor       = {Hugo Larochelle and
                  Marc'Aurelio Ranzato and
                  Raia Hadsell and
                  Maria{-}Florina Balcan and
                  Hsuan{-}Tien Lin},
  title        = {Online Algorithms for Multi-shop Ski Rental with Machine Learned Advice},
  booktitle    = {Advances in Neural Information Processing Systems 33: Annual Conference
                  on Neural Information Processing Systems 2020, NeurIPS 2020, December
                  6-12, 2020, virtual},
  year         = {2020},
  url          = {https://proceedings.neurips.cc/paper/2020/hash/5cc4bb753030a3d804351b2dfec0d8b5-Abstract.html},
  timestamp    = {Mon, 05 Aug 2024 15:41:54 +0200},
  biburl       = {https://dblp.org/rec/conf/nips/WangLW20.bib},
  bibsource    = {dblp computer science bibliography, https://dblp.org}
}

@article{DBLP:journals/corr/abs-2502-05720,
  author       = {Ziyad Benomar and
                  Lorenzo Croissant and
                  Vianney Perchet and
                  Spyros Angelopoulos},
  title        = {Pareto-Optimality, Smoothness, and Stochasticity in Learning-Augmented
                  One-Max-Search},
  journal      = {CoRR},
  volume       = {abs/2502.05720},
  year         = {2025},
  url          = {https://doi.org/10.48550/arXiv.2502.05720},
  doi          = {10.48550/ARXIV.2502.05720},
  eprinttype    = {arXiv},
  eprint       = {2502.05720},
  timestamp    = {Wed, 12 Mar 2025 22:48:02 +0100},
  biburl       = {https://dblp.org/rec/journals/corr/abs-2502-05720.bib},
  bibsource    = {dblp computer science bibliography, https://dblp.org}
}

@inproceedings{DBLP:conf/nips/SunLHWT21,
  author       = {Bo Sun and
                  Russell Lee and
                  Mohammad H. Hajiesmaili and
                  Adam Wierman and
                  Danny H.K. Tsang},
  editor       = {Marc'Aurelio Ranzato and
                  Alina Beygelzimer and
                  Yann N. Dauphin and
                  Percy Liang and
                  Jennifer Wortman Vaughan},
  title        = {Pareto-Optimal Learning-Augmented Algorithms for Online Conversion
                  Problems},
  booktitle    = {Advances in Neural Information Processing Systems 34: Annual Conference
                  on Neural Information Processing Systems 2021, NeurIPS 2021, December
                  6-14, 2021, virtual},
  pages        = {10339--10350},
  year         = {2021},
  url          = {https://proceedings.neurips.cc/paper/2021/hash/55a988dfb00a914717b3000a3374694c-Abstract.html},
  timestamp    = {Tue, 03 May 2022 16:20:47 +0200},
  biburl       = {https://dblp.org/rec/conf/nips/SunLHWT21.bib},
  bibsource    = {dblp computer science bibliography, https://dblp.org}
}

@article{zhao2024learning,
  title={Learning-augmented algorithms for the bahncard problem},
  author={Zhao, Hailiang and Tang, Xueyan and Chen, Peng and Deng, Shuiguang},
  journal={Advances in Neural Information Processing Systems},
  volume={37},
  pages={113234--113281},
  year={2024}
}
\newpage

{\Large \bf Appendix}

\appendix
\section{Details from Section~\ref{sec:skirental}}
\label{sec:app.skirental}

\subsection{Omitted proofs}
\begin{proof}[Proof of Lemma~\ref{lemma:ski.ideal}]
First, recall that in order to be $r$-robust, any algorithm must buy skis no later than time $b(r - 1)$ and no earlier than $b / (r - 1)$. We consider the following possible cases.

\medskip
\textbf{Case 1:} $x < b$.\\
In this case, $I_r$ rents up to time $b$. This guarantees $r$-robustness, and since $I_r(x)=x$, we have that $\perf(I_r,x)=1$.

\medskip  
\textbf{Case 2:} $x \in \left[b,\ \min\left\{\frac{br}{r - 1},\ b(r - 1)\right\}\right)$.\\  
In this case, $I_r$ buys at time $\min\left\{\frac{br}{r - 1},\ b(r - 1)\right\} > x$, so $I_r(x) = x$. This guarantees $r$-robustness, since the buy time lies in $[b/(r - 1),\ b(r - 1)]$.

Now consider any $r$-robust algorithm $A_T$, then we consider two possible cases on $T$. 
\begin{itemize}
  \item If $T < b$, then $A_T$ buys before $b$, and its cost is at least that of $A_{b/(r-1)}$, which is
  \[
    \text{cost}(A_{b/(r-1)},x) = \frac{b}{r - 1} + b = \frac{br}{r - 1} > x.
  \]
  \item If $T \in [b,\ \min\left\{\frac{br}{r - 1},\ b(r - 1)\right\}]$, then $\text{cost}(A_T, x) = T + b \geq x$.
\end{itemize}
In both cases, $\text{cost}(A_T, x) \geq I_r(x) = x$. Therefore, $\perf(I_r, x) = x / b$ is optimal among $r$-robust algorithms.

\medskip
\textbf{Case 3:} $x \geq \min\left\{\frac{br}{r - 1},\ b(r - 1)\right\}$.\\
In this case, $I_r$ buys at time $b / (r - 1)$, which satisfies $r$-robustness and its cost is
\[
I_r(x) = \frac{br}{r - 1},
\]
while the offline optimum is $b$.

Now consider any other $r$-robust algorithm $A_T$. If $T > b / (r - 1)$, then $\text{cost}(A_T, x) = T + b  > \frac{br}{r - 1} = I_r(x)$.  
If $T < b / (r - 1)$, then $A_T$ is not $r$-robust.

Therefore, $I_r(x)$ has the minimum cost among all $r$-robust algorithms, and
\[
\perf(I_r, x) = \frac{br}{(r - 1)b} = \frac{r}{r - 1}.
\]

This completes the proof.
\end{proof}

\begin{proof}[Proof of Theorem~\ref{thm:skirental.weighted.distance}]

For general weight functions $w(x)$, recall that we assume that $w$ is symmetric and piecewise monotone: i.e., non-decreasing for $x < y$ and non-increasing for $x > y$. The behavior of the maximum distance objective in~(\ref{eq:skirental.max}) depends on the location of $y$ relative to $b$, and also on the buying threshold $T$. Recall from Lemma~\ref{lemma:ski.ideal} that $\perf(I_r,x)=1$ for $x < b$, and is increasing for $x\geq b$. Thus, algorithms in the class $C_{<b}$ incur strictly positive distance in $x < b$, while algorithms from $C_{\geq b}$ match the ideal in this interval, and may perform better when the weight is concentrated around $y < b$.

Given this structure, we partition the problem  into two subproblems: computing the best threshold  $T_1^*$ among algorithms in $C_{<b}$ and the best threshold $T_2^*$ in $C_{\geq b}$. 
%Since the cost and the denominator $\min\{x, b\}$ in~(\ref{eq:skirental.max}) behave differently in the two classes, the distance expressions must be analyzed separately. 
Once the buy times $T_1^*$ and $T_2^*$ are computed, the final choice is
\[
T^*_{\max} = \argmin_{T \in \{T_1^*, T_2^*\}} d_{\max}(A_T).
\]

Algorithm~\ref{alg:ski.opt.max} shows how to compute 
$T^*_{\max}$. 
The algorithm evaluates a finite set of buy times based on critical points, 
which are values of $x$ where the weighted distance function may reach its maximum. 
These include the prediction $y$, the point $x = b$, the values $\frac{br}{r - 1}$ and $b(r - 1)$, 
and all solutions to the equation where the derivative of the weighted distance is zero. 

% \begin{algorithm}[t]
% \footnotesize
% \caption{Algorithm for computing $T^*_{\max}$}
% \label{alg:ski.opt.max}
% \textbf{Input:} Prediction $y$ with range $R_y$, weight function $w(x)$, robustness $r$, buy cost $b$. \\
% \textbf{Output:} Optimal buy time $T^*_{\max}$ minimizing max distance.
% \begin{algorithmic}[1]
% \State Define classes:
% \[
% C_{<b} = \left[\tfrac{b}{r-1},\ b\right), \quad C_{\geq b} = \left[b,\ b(r-1)\right]
% \]
% \State \textbf{Class $C_{<b}$:} Define the critical set
% \[
% \begin{aligned}
% S_1 &= \{y,\ b\} \cup S_1', \text{ where} \\
% S_1' &= \left\{ z \in R_y : \frac{d}{dz} \left( \left( \frac{A_T(z)}{z} - 1 \right) \cdot w(z) \right) = 0 \right\}
% \end{aligned}
% \]
% \State For all $T \in S_1$, compute
% \[
% d_{\max}(A_T) = \max_{x \in R_y} \left( \frac{\text{cost}(A_T,x)}{x} - 1 \right) \cdot w(x)
% \]
% \State Let $T_1^* = \argmin_{T \in S_1} d_{\max}(A_T)$
% \State \textbf{Class $C_{\geq b}$:} Define the critical set
% \[
% \begin{aligned}
% S_2 &= \left\{ y,\ b,\ \tfrac{br}{r-1},\ b(r-1) \right\} \cup S_2', \text{ where}\\
% S_2' &= \left\{ z \in R_y : \frac{d}{dz} \left( \left( \frac{A_T(z)}{b} - \perf(I_r, x) \right) \cdot w(z) \right) = 0 \right\}
% \end{aligned}
% \]
% \State For all $T \in S_2$, compute
% \[
% d_{\max}(A_T) = \max_{x \in R_y} \left( \frac{\text{cost}(A_T,x)}{b} - \perf(I_r,x) \right) \cdot w(x)
% \]
% \State Let $T_2^* = \argmin_{T \in S_2} d_{\max}(A_T)$
% \State \Return $T^*_{\max} = \argmin_{T \in \{T_1^*, T_2^*\}} d_{\max}(A_T)$
% \end{algorithmic}
% \end{algorithm}
\begin{algorithm}[t]
\footnotesize
\caption{Algorithm for computing $T^*_{\max}$}
\label{alg:ski.opt.max}
\textbf{Input:} Prediction $y$ with range $R_y$, weight $w(x)$, robustness $r$, buy cost $b$. \\
\textbf{Output:} Optimal buy time $T^*_{\max}$ minimizing max distance.
\begin{algorithmic}[1]
\Statex \textbf{Case 1: $T \in C_{<b}$}
\State Define critical set
\[
S_1 \gets \{y,\ b\} \cup \Big\{ z \in R_y : \tfrac{d}{dz}\Big(\big(\tfrac{A_T(z)}{z} - 1\big) w(z)\Big)=0 \Big\}
\]
\For{$T \in S_1$}
  \State Compute $d_{\max}(A_T) \gets \max_{x \in R_y} \big(\tfrac{\text{cost}(A_T,x)}{x} - 1\big)w(x)$
\EndFor
\State $T_1^* \gets \arg\min_{T \in S_1} d_{\max}(A_T)$

\Statex \textbf{Case 2: $T \in C_{\ge b}$}
\State Define critical set
\[
S_2 \gets \{y,\ b,\ \tfrac{br}{r-1},\ b(r-1)\} \cup \Big\{ z \in R_y : \tfrac{d}{dz}\Big(\big(\tfrac{A_T(z)}{b}-\perf(I_r,z)\big) w(z)\Big)=0 \Big\}
\]
\For{$T \in S_2$}
  \State Compute $d_{\max}(A_T) \gets \max_{x \in R_y} \big(\tfrac{\text{cost}(A_T,x)}{b}-\perf(I_r,x)\big)w(x)$
\EndFor
\State $T_2^* \gets \arg\min_{T \in S_2} d_{\max}(A_T)$

\State \Return $T^*_{\max} \gets \arg\min\{\,d_{\max}(A_{T_1^*}),\ d_{\max}(A_{T_2^*})\,\}$
\end{algorithmic}
\end{algorithm}
\normalsize

\normalsize
\end{proof}

\begin{proof}[Proof of Theorem~\ref{thm:skirental.cvar}]
We use the cost-minimization version of the Conditional Value-at-Risk, given by~(\ref{eq:cvar}).

Fix $T \in \left[\frac{b}{r-1},\, b(r-1)\right]$ and let $x \sim \mu$ denote the predicted skiing horizon.  
Let $M(t) := \int_{0}^{t} \mu(z)\,dz$ be the cumulative distribution function, and $q_T = 1 - M(T)$ be the probability of the horizon being at least $T$.  
The cost of algorithm $A_T$ is
\[
A_T(x) =
\begin{cases}
x, & x < T, \\
T + b, & x \ge T.
\end{cases}
\]
We evaluate $\mathrm{CVaR}_{\alpha,\mu}[A_T(x)]$ by considering three possible ranges of~$t$ in the definition above.

\smallskip
\noindent
\textbf{Case 1:} $t < T$.  
In this case,
\[
(A_T(x) - t)^+ =
\begin{cases}
x - t, & t \le x < T, \\
T + b - t, & x \ge T, \\
0, & x < t.
\end{cases}
\]
Multiplying the CVaR expression by $(1-\alpha)$ gives
\begin{align}
(1 - \alpha)\,\mathrm{CVaR}_\alpha(A_T(x)) \notag
&= \inf_{t > 0} \Big\{ (1-\alpha)t + \int_{t}^{T} (z - t)\,\mu(z)\,dz  \notag + (T+b-t) \int_{T}^{\infty} \mu(z)\,dz \Big\} \notag\\
&= \inf_{t > 0} \Big\{ (1-\alpha)t + \int_{t}^{T} z\,\mu(z)\,dz
+ (T+b)\,q_T - t \,(1 - M(t)) \Big\}.
\label{eq:cvar.case1}
\end{align}

Differentiating (\ref{eq:cvar.case1}) with respect to $t$ gives
\[
(1-\alpha) - (1-M(t)) = M(t) - \alpha.
\]
Since the second derivative of~(\ref{eq:cvar.case1}) equals $\mu(t) \ge 0$, the minimizer is any $t^*$ satisfying $M(t^*) = \alpha$.  
If $M$ is continuous and strictly increasing, then $t^* = M^{-1}(\alpha)$ is unique.  
Substituting $t^*$ and using $1 - M(t^*) = 1 - \alpha$ yields
\begin{equation}
\mathrm{CVaR}_\alpha(A_T(x)) =
\frac{1}{1 - \alpha} \left( \int_{t^*}^{T} z\,\mu(z)\,dz + (T+b)\,q_T \right).
\label{eq:cvar.tstar}
\end{equation}

\smallskip
\noindent
\textbf{Case 2:} $T \le t \le T+b$.  
Here $(A_T(x) - t)^+ = 0$ for $x<T$ and $(T+b-t)$ for $x \ge T$.  
Since~(\ref{eq:cvar}) is linear in $t$, the minimum occurs at one of the endpoints of the range of $t$. As a result, we have 
\begin{equation}
\mathrm{CVaR}_\alpha(A_T(x)) =
\min\left\{ T + b\,\frac{q_T}{1 - \alpha},\ T + b \right\}.
\label{eq:cvar.case2}
\end{equation}

\smallskip
\noindent
\textbf{Case 3:} $t \ge T+b$.  
In this range $(A_T(x) - t)^+ = 0$ for all $x$, so the infimum is attained at $t = T+b$, yielding
\[
\mathrm{CVaR}_\alpha(A_T(x)) = T+b.
\]

\smallskip
Combining the above cases,  $\mathrm{CVaR}_{\alpha,\mu}[A_T(x)]$ is the minimum between (\ref{eq:cvar.tstar}) (Case~1) and (\ref{eq:cvar.case2}) (Case~2), which completes the proof.

\bigskip
\noindent
\textbf{Analysis of extreme cases.}

(i) For $\alpha=0$, we have $t^*=\inf\mathrm{supp}(\mu)=\ell$. 
From~(\ref{eq:cvar.tstar}),
\[
\mathrm{CVaR}_0(A_T(x))=\int_{\ell}^{T} z\,\mu(z)\,dz+(T+b)q_T
= \mathbb{E}[\,A_T(x)\,],
\]
since $A_T(x)=x$ on $\{x<T\}$ and $A_T(x)=T+b$ on $\{x\ge T\}$. 
Moreover, $\mathrm{CVaR}_0(A_T)$ is equal to~(\ref{eq:cvar.tstar}), because
\[
\int_{\ell}^{T} z\,\mu(z)\,dz+(T+b)q_T
\le T(1-q_T)+(T+b)q_T
= T+bq_T.
\]
%so (\ref{eq:cvar.tstar}) $\le$ the first term in (\ref{eq:cvar.case2}).

(ii) Suppose that $\alpha\to1$, then we distinguish between cases $q_T>0$ and $q_T=0$.
If $q_T>0$, both (\ref{eq:cvar.tstar}) and the first term of (\ref{eq:cvar.case2}) contain $(1-\alpha)^{-1}$ times a positive quantity; the minimum is therefore the remaining term $T+b$, so
\[
\lim_{\alpha\to1}\mathrm{CVaR}_\alpha(A_T(x))=T+b.
\]
If $q_T=0$ (equivalently $T>u:=\sup\mathrm{supp}(\mu)$), then (\ref{eq:cvar.tstar}) reduces to
\[
\frac{1}{1-\alpha}\int_{t^*}^{T} z\,\mu(z)\,dz
=\frac{1}{1-\alpha}\int_{t^*}^{u} z\,\mu(z)\,dz
=\mathbb{E}[\,x\mid x\ge t^*\,],
\]
since $\Pr[x\ge t^*]=1-\alpha$ and $\mu\equiv0$ on $(u,T]$. As $\alpha\to1$, $t^*\to u$ and bounded convergence yields $\mathbb{E}[x\mid x\ge t^*]\to u$. 
In the same regime, (\ref{eq:cvar.case2}) becomes $\min\{T,\,T+b\}=T\ge u$, so the minimum is given by (\ref{eq:cvar.tstar}) and
\[
\lim_{\alpha\to1}\mathrm{CVaR}_\alpha(A_T(x))=u.
\]
Combining the above we obtain that $\lim_{\alpha\to1}\mathrm{CVaR}_\alpha(A_T(x))=\max\{\,u,\ T+b\,\}$.

\end{proof}

\subsection{Ideal performance for $r<2.618$}
\begin{figure}[t!]
    \centering
    \includegraphics[width=0.56\textwidth]{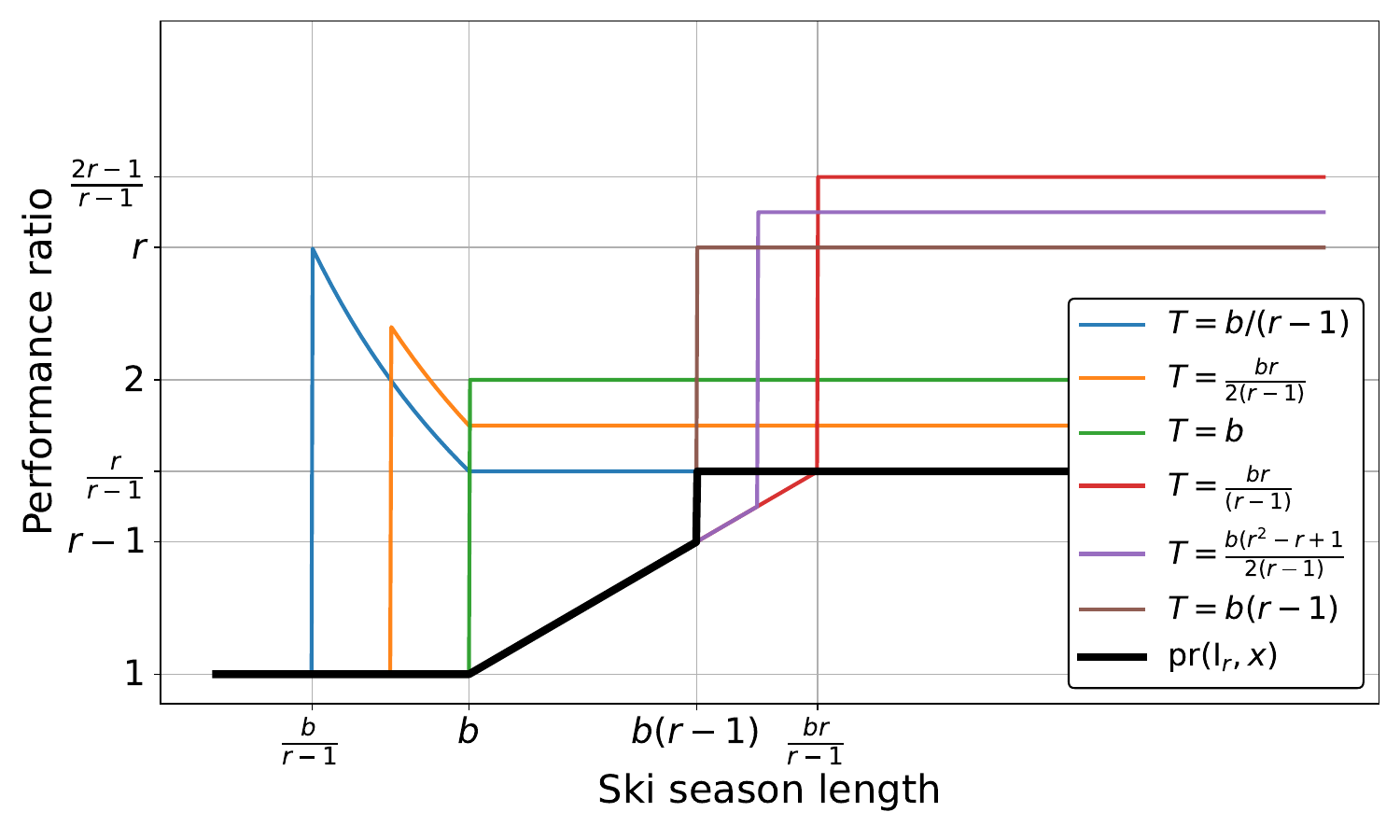}
    \caption{Performance of the ideal $r$-robust algorithm $\text{I}_r$ compared to several algorithms $A_T$ for different values of $T$,  when $r < 2.618$. In contrast to Figure~\ref{fig:skirental_ideal}, the algorithms that buy at times $T = \frac{b(r^2 - r + 2)}{2(r - 2)}$ (purple curve) and $T = \frac{br}{r - 1}$ (red curve) are not $r$-robust anymore, because their thresholds fall outside the interval $[b/(r-1),\ b(r-1)]$.}
    \label{fig:skirental_ideal_discontinuous}
\end{figure}

If $r\in [2,2.618]$, then the ideal algorithm $\text{I}_r(x)$ has a discontinuity at $x=b(r-1)$, as shown in Figure~\ref{fig:skirental_ideal}.  We note that all results presented in the main paper hold regardless of the value of $r$, i.e., Theorem~\ref{thm:skirental.weighted.distance} and Algorithm~\ref{alg:ski.opt.max} remain valid. 

\section{Details from Section~\ref{sec:one-max}}
% \subsection{Unweighted Maximum Distance}
\label{app.one-max}

\subsection{Omitted proofs}

\begin{proof}[Proof of Lemma~\ref{lemma:one-max.ideal}]
First, recall that in order to be $r$-robust, any algorithm must choose a threshold $T \in [t_1, t_2]$, where $t_1 = M/r$ and $t_2 = r$.

\textbf{Case 1:} $x_\sigma < t_1$.\\
In this case, $I_r$ sets $T = t_1$ and never accepts, so $A_{t_1}(\sigma) = 1$. The offline optimum is $x_\sigma$, hence
\[
\perf(I_r,\sigma) = \frac{x_\sigma}{1} = x_\sigma.
\]
Any other $r$-robust algorithm $A_T$ must also have $T \ge t_1$, thus $A_T(\sigma) = 1$, and the performance ratio of $A_T$ is $x_\sigma$. Therefore, $I_r$ is optimal in this interval.

\textbf{Case 2:} $x_\sigma \in [t_1, t_2]$.\\
Here $I_r$ sets $T = x_\sigma$, thus $A_T(\sigma) = x_\sigma$ and
\[
\perf(I_r,\sigma) = \frac{x_\sigma}{x_\sigma} = 1,
\]
hence $I_r$ is optimal in this interval.

%If some $r$-robust $A_T$ uses $T > x_\sigma$, then $A_T(\sigma) = 1$ and its ratio is $x_\sigma > 1$. If $T < x_\sigma$ and there exists a price $v \in [T, x_\sigma)$ before the first occurrence of $x_\sigma$, then $A_T(\sigma) = v$ and the ratio is $x_\sigma / v > 1$. If no such price appears, then the first accepted value is $x_\sigma$, and in this case $A_T$ achieves ratio $1$, just like $I_r$. Hence $I_r$ is optimal in this interval.

\textbf{Case 3:} $x_\sigma > t_2$.\\
In this case, $I_r$ sets $T = t_2$ and accepts the first price at least $t_2$, so $A_{t_2}(\sigma) \ge t_2$ and
\[
\perf(I_r,\sigma) \le \frac{x_\sigma}{t_2}.
\]
Any $r$-robust algorithm must have $T \le t_2$. If $T < t_2$, consider a sequence $\sigma$ of the form $v,t_2$, where $v \in [T, t_2)$. In this case, $A_T(\sigma) = v$ and 
\[
\perf(A_T,\sigma) = \frac{x_\sigma}{v} \ge \frac{x_\sigma}{t_2}.
\]
Thus no $r$-robust algorithm attains a smaller ratio than $I_r$ on $\sigma$. 
% Moreover, there are inputs with maximum $x_\sigma$ where the first price at least $t_2$ is exactly $t_2$, and then $\perf(I_r,\sigma) = x_\sigma / t_2$.

Combining the three cases, the optimal threshold is $T = \min\{t_2, \max\{t_1, x_\sigma\}\}$ and the performance ratio is
\[
\perf(I_r,\sigma) =
\begin{cases}
x_\sigma, & x_\sigma \in [1, t_1),\\
1, & x_\sigma \in [t_1, t_2],\\
x_\sigma / t_2, & x_\sigma \in (t_2, M].
\end{cases}
\]
This completes the proof.
\end{proof}

\begin{proof}[Proof of Theorem~\ref{thm:one-max.unweighted.max}]
The proof is based on a case analysis. First, note that the prediction range $ R_{y} = [(1-\delta)y, (1+\delta)y] $ may not always overlap with the robustness interval $[t_1, t_2]$. If this is the case, the threshold $T_{\max}^*$ must be chosen so as to minimize the maximum distance from the ideal performance. Namely:
\begin{itemize}
    \item If $ (1+\delta)y \leq t_1 $, then 
    $T_{\max}^* = t_1$, since in this case $d_{\text{max}}(A_T)=0$.
    \item If $ (1-\delta)y \geq t_2 $, then $ T_{\max}^* = t_2 $, since we have again  $ d_{\text{max}}(A_T) = 0 $.
\end{itemize}
This ensures that the algorithm's performance aligns with the ideal benchmark when predictions fall outside the robustness interval.
Furthermore, we analyze intersections between $ R_{y} $ and $[t_1, t_2]$ by considering the following cases:

\bigskip
\noindent
{\em Case 1: $ (1-\delta)y $ and $ (1+\delta)y $ are within $[t_1, t_2]$}. Then,  $ \perf(I_r, x) = 1 $ for all $ x \in R_{y} $. We consider further subcases:
    \begin{enumerate}
        \item $ T \leq (1-\delta)y $:  
        The maximum distance $ d_{\max}(A_T) $ is defined by $ \frac{(1+\delta)y}{T} - 1 $, with the adversary selecting $ x = (1+\delta)y $ to maximize this distance.

        \item $ T \in [(1-\delta)y, (1+\delta)y] $:  
        The distance $ d_{\max}(A_T) $ is calculated as $ \max\left\{\frac{(1+\delta)y}{T} - 1, T - 1\right\} $. If $ T<x $, the performance ratio is maximized at $ x = (1+\delta)y $; however, when $ T $ exceeds $ x $, it is maximized at $ x = T - \varepsilon $ for a very small $ \varepsilon $. Hence in this case the performance ratio is arbitrarily close to $ \frac{T}{1} $.

        \item $ T \geq (1+\delta)y $:  
        In this case, $ x \leq (1+\delta)y $, and $ d_{\max}(A_T) = (1+\delta)y - 1 $.
    \end{enumerate}
    The second case above, namely, $ T \in [(1-\delta)y, (1+\delta)y] $ is the most general one. To minimize $ d_{\max}(A_T) $, the optimal $ T_{\max}^*$ is equal to $ \sqrt{(1+\delta)y} $, because it minimizes the maximum of two expressions. If $ \sqrt{(1+\delta)y} < t_1 $, then the threshold must be adjusted to:
        \[
        T_{\max}^* = \max\{t_1, \sqrt{(1+\delta)y}\}
        \]
        to ensure it resides within the robustness interval.

\bigskip
\noindent
{\em Case 2: $ t_1 \leq (1-\delta)y $ and $ (1+\delta)y \geq t_2 $}. Here, the main complication is that $ \perf(I_r, x) $, may differ from 1. It is sufficient to choose $ T \in [(1-\delta)y, t_2] $, with the maximum distance being
    \[
    d_{\max}(A_T) = \max\left\{\frac{(1+\delta)y}{T} - \frac{(1+\delta)y}{t_2}, T - 1, \frac{t_2}{T} - 1\right\}.
    \]

    Solving for the optimal $ T $ that satisfies:
    \[
    \frac{(1+\delta)y}{T} - \frac{(1+\delta)y}{t_2} = T - 1,
    \]
    yields:
    \[
    T = t_2 - ((1+\delta)y) + \sqrt{((1+\delta)y - t_2)^2 + 4t_2^2((1+\delta)y)}.
    \]
    However, this value may not belong in $[t_1, t_2]$, hence
    \[
    T_{\max}^* = \min\{t_2, \max\{t_1, T\}\}.
    \]
This concludes the proof.
\end{proof}

%, hence the algorithm  can afford to choose a larger threshold. 
For some intuition behind Theorem~\ref{thm:one-max.unweighted.max}, we note that in the first case, the algorithm aims to minimize the distance from the line $y(x)=1$ (the ideal performance). In this case, the threshold has a dependency on $\sqrt{(1+\delta)y}$, as derived from an analysis similar to the competitive ratio (which is equal to the square root of the maximum price). In the second case, the algorithm aims to minimize the distance from a more complex ideal performance, which includes two line segments.  This explains the dependency on the more complex value $\tilde{T}$. One can also show that $\tilde{T} \geq \sqrt{(1+\delta)y}$: this is explained intuitively, since in the second case, the algorithm has more ``leeway'', given that the ideal performance ratio attains higher values.

\begin{proof}[Proof of Lemma~\ref{lem:one-max.cvar.worst}]
From the definition of $F^*$, it follows that
\[
\mathbb{E}_{p\sim \mu}[p]=\mathbb{E}_{p\sim F^*}[p],
\]
hence the $\alpha$-consistency of $A_T$ is at least the RHS of the equation in Lemma~\ref{lem:one-max.cvar.worst}.

Let $\tilde{F}$ be a distribution that maximizes the $\alpha$-consistency, then it must be that 
\[
\alpha-\text{cons}(A_T) \geq \frac{\mathbb{E}_{p \sim \mu} [p]}
{\cvar_{\alpha,\tilde{F}}[A_T(\sigma)]}.
\]
We can argue that $\cvar_{\alpha,\tilde{F}}[A_T(\sigma)]\leq \cvar_{\alpha,F^*}[A_T(\sigma)]$. This follows directly from~(\ref{eq:cvar}), and the observation that in any sequence $\sigma$ in the support of $F^*$, we have that
$A_T(\sigma)=1$, if $x_\sigma <T$, and $A_T(\sigma)=T$, if $x_\sigma \geq T$. Hence, we also showed that the $\alpha$-consistency is at least the RHS of the equation in Lemma~\ref{lem:one-max.cvar.worst}, which concludes the proof. 
\end{proof}

\begin{proof}[Proof of Theorem~\ref{thm:one-max.cvar}]
Similar to ski rental, the computation of $\mathrm{CVaR}_{\alpha,\mu}[A_T(\sigma)]$ for the one-max search problem requires a case analysis based on the parameter $t$ of~(\ref{eq:cvar}). 
Define $q_T = \Pr_{\sigma \sim F^*}[A_T(\sigma) = 1]$ as the probability that the algorithm $A_T$ selects the value 1, and $q_T - 1$ as the probability it selects the threshold $T$. Recall that these are the only two possibilities, from the definition of $F^*$, without any assumptions of $R_{y}$. Under the assumption that $R_{y}=[(1-\delta)y, (1+\delta)y]$, we can obtain a better lower bound for $A_T$, i.e., we know it can ensure a minimum profit of $(1-\delta)y$. We proceed with the analysis of this setting, and consider the following cases.

\bigskip
\noindent
{\em Case 1: $t \geq T$}. Then
\begin{align*}
&\mathrm{CVaR}_{\alpha,\mu}[A_T(\sigma)] = \sup_{t \geq T} \left\{ -t(\frac{\alpha}{1 - \alpha}) + \frac{q_T(1-T)+T}{1-\alpha} \right\}.
\end{align*}
In this case, the optimal value of $t$ is equal to $T$, hence we obtain: 
\begin{align*}
\mathrm{CVaR}_{\alpha,\mu}[A_T(\sigma)] = \frac{T(1-q_T-\alpha)+q_T}{1 - \alpha}.
\end{align*}

\bigskip
\noindent
{\em Case 2: $t \leq (1-\delta)y$}. In this case, $(t - A_T(\sigma))^+ =0$, and
\[
\mathrm{CVaR}_{\alpha,\mu}[A_T(\sigma)] = (1-\delta)y.
\]

\bigskip
\noindent
{\em Case 3: $t \in [(1-\delta)y, T]$}. Then 
\[
(t - A_T(\sigma))^+ =
\begin{cases}
  0, & \text{w. p. } 1 - q_T, \\
  t - 1, & \text{w. p. } q_T,
\end{cases}
\]
from which we get that
\begin{align*}
&\mathrm{CVaR}_{\alpha,\mu}[A_T(\sigma)] = \sup_{t \in [(1-\delta)y, T]} \left\{ t \left(\frac{1-\alpha - q_T}{1 - \alpha} \right) + \frac{q_T}{1 - \alpha} \right\}.
\end{align*}

We consider two further  subcases. If $1 - \alpha - q_T \leq 0$, then we obtain 
\[
\mathrm{CVaR}_{\alpha,\mu}[A_T(\sigma)] = (1-\delta)y.
\]
In the case, when $1 - \alpha - q_T > 0$, we have that
\begin{align*}
\mathrm{CVaR}_{\alpha,\mu}[A_T(\sigma)] = \frac{T(1-q_T-\alpha)+q_T}{1 - \alpha}.
\end{align*}

Combining all the above cases, if follows that:
\begin{align*}
&\mathrm{CVaR}_{\alpha,\mu}[A_T(\sigma)] = \max \left\{ \frac{T(1-q_T-\alpha)+q_T}{1 - \alpha}, (1-\delta)y \right\}.
\end{align*}

\end{proof}

\subsection{Weighted Maximum Distance}
\label{subsec:app.one-max.weighted}
The payoff function is defined considering two cases: First, if $R_{y} \subseteq [t_1,t_2]$, then the payoff function is 
\[
\max\left\{\max_{x \geq T} \left(\frac{x}{T} - 1\right) \cdot w(x), \max_{x < T} \left(T - 1\right) \cdot w(x)\right\},
\]
since, in this case, the ideal performance is equal to 1. 

In the second case, i.e., $R_{y}$ is not in $[t_1,t_2]$, then 
\begin{align}
\max\left\{ \max_{T \leq x \leq t_2} \left(\tfrac{x}{T} - 1\right) w(x),\; \max_{x \geq t_2} \left(\tfrac{x}{T} - \tfrac{x}{t_2}\right) w(x),\; \max_{x < T} (T - 1) w(x) \right\}.
\label{eq:one-max_payoff_full}
\end{align}
which follows from~(\ref{eq:one-max:ideal}).
In general, it is not possible to obtain an analytical expression for the value of this game (under deterministic strategies) for arbitrary weight functions. However, for specific functions, the game can be solvable as we demonstrate below.

\bigskip
\noindent

{\bf Example:} 
We will compute $T^*_{\max}$ for a linear weight function, defined as: 
\begin{align}
    w(x) = \max\left\{0, 1 - \frac{|x- y|}{y\delta}\right\}.
\label{eq:one-max_linear_weight}
\end{align}
%which linearly decreases from 1 at $x = y$ to 0 at the boundaries $x = y - h$ and $x =y + h$. 

\label{ex:linear_weight_1max}
For simplicity, we only show the computation for the case $R_{y} \subseteq [t_1,t_2]$. 
The other cases can be handled along similar lines, using~(\ref{eq:one-max_payoff_full}). In this case~(\ref{eq:one-max_payoff_full}) is
\begin{equation*}
\max
\begin{cases}
    (x-1)\cdot(1-\frac{y-x}{h}), &\text{if } x < T \text{ and } x \in [y-h, y],\\
    (\frac{x}{T}-1)\cdot(1-\frac{y-x}{h}), &\text{if } x \geq T \text{ and } x \in [y-h, y],\\
    (x-1)\cdot(1-\frac{x-y}{h}), &\text{if } x < T \text{ and } x \in [y,y+h],\\
    (\frac{x}{T}-1)\cdot(1-\frac{x-y}{h}), &\text{if } x \geq T \text{ and } x \in [y,y+h].\\
\end{cases}
\end{equation*}

We denote the expressions, for each case in the above maximization, by $e_1,e_2,e_3,e_4$ respectively. First we analyze the best response of the adversary for a fixed threshold $T$, which represents the player's strategy. There are two cases to distinguish, depending on how $T$ compares to $y$. 

%The value of the game, as a function of $x$, depends on how $x$ compares to $T$ and to $y$. Hence we distinguish the following cases.

\subsubsection*{Case A: $T\leq y$.}

\paragraph*{Subcase A1: $(1-\delta)y\leq x\leq T$}
In this case, the value of the game is given by $e_1$. The second derivative of $e_1$ with respect to $x$ is $2/\delta y$, therefore $e_1$ is concave, and maximized at one of the endpoints of the case range. Considering $e_1$ as a function of $x$ we have $e_1((1-\delta)y)=0$ and $e_1(T)=(T-1)(T-((1-\delta)y))/y\delta>0$. Therefore, the adversary's best response is to choose $x=T$, yielding a game value, which we denote by
$$
    v_1 = (T-1)\frac{T-((1-\delta)y)}{y\delta}.
$$

\paragraph*{Subcase A2: $T\leq x\leq y$}
In this case, the value of the game is given by $e_2$. Its second derivative is $2/y\delta T$, therefore $e_2$ is concave. Again we evaluate $e_2$ at the endpoints of the case range, and obtain $e_2(T)=0$ as well as $e_2(y)=y/T-1\geq 0$. Therefore, the adversary's best response is to choose $x=y$, producing a game value 
$$
    v_2 = \frac{y}{T} - 1.
$$

\paragraph*{Subcase A3: $y\leq x\leq (1+\delta)y$}In this case, the value of the game is given by $e_4$. The second derivative is $-2/hT$, hence $e_2$ is concave. Using second order analysis, we find that it is maximized at $x=(T+(1+\delta)y)/2$. This choice is in the case range $[y, (1+\delta)y]$, since $T$ belongs to $[(1-\delta)y,(1+\delta)y]$. We denote by
$$
    v_4 = \frac{((1+\delta)y - T)^2}{4y\delta T}
$$
the value of the game for the best adversarial choice in this case.

\paragraph*{Summary of case A}
We observe that $v_2$ is always dominated by $v_4$, hence the value of the game in case $A$ is $\max\{v_1,v_4\}$.

\subsubsection*{Case B: $T\geq y$.}

Similar to the previous case, we break this case further into 3 subcases.

\paragraph*{Subcase B1: $(1-\delta)y\leq x\leq y$.} As in case $A1$, the value of the game is given by $e_1$, which is maximized at its right endpoint. Since this is a different endpoint than in case $A1$, we obtain a different value of the game, namely
$$
    e_1(y) = y - 1.
$$

\paragraph*{Subcase B2: $y \leq x\leq T$.} In this case, the value of the game is $e_3$. Its second derivative is $-2/y\delta$, hence it is concave. Its derivative at the upper endpoint is $4-2T\leq 0$, hence $e_2$ is maximized at this lower endpoint, and has the value 
$v_3 = y - 1$.
Note this $v_3$ happens to be also the value of the game in case $B1$ and does not depend on $T$. 

\paragraph*{Subcase B3: $T\leq x$.} The analysis of this case is identical to the analysis of case $A4$, hence the value of the game is $v_4$.

\paragraph*{Summary of case B}
If the algorithm chooses $T\in[y, (1+\delta)y]$, then the value of the game is $\max\{v_3,v_4\}$. We observe that $v_4$ is a concave function in $T$, with slope $0$ at $T=(1+\delta)y$, while $v_3$ is a constant. We show that $v_3\geq v_4$, even for the whole range $(1-\delta)y \leq T\leq (1+\delta)y$. For this purpose we evaluate $v_4$ at $T=(1-\delta)y$, and obtain by the assumption that  $1\leq (1-\delta)y$ that 
\begin{align*}
v_3 - v_4((1-\delta)y) & = y - 1 - \frac{y\delta}{(1-\delta)y} \\
& \geq y - 1 - \frac{y\delta}{1} \\
& \geq 0.
\end{align*}

\begin{figure}[t!]
\centering
\includegraphics[width=0.56\linewidth]{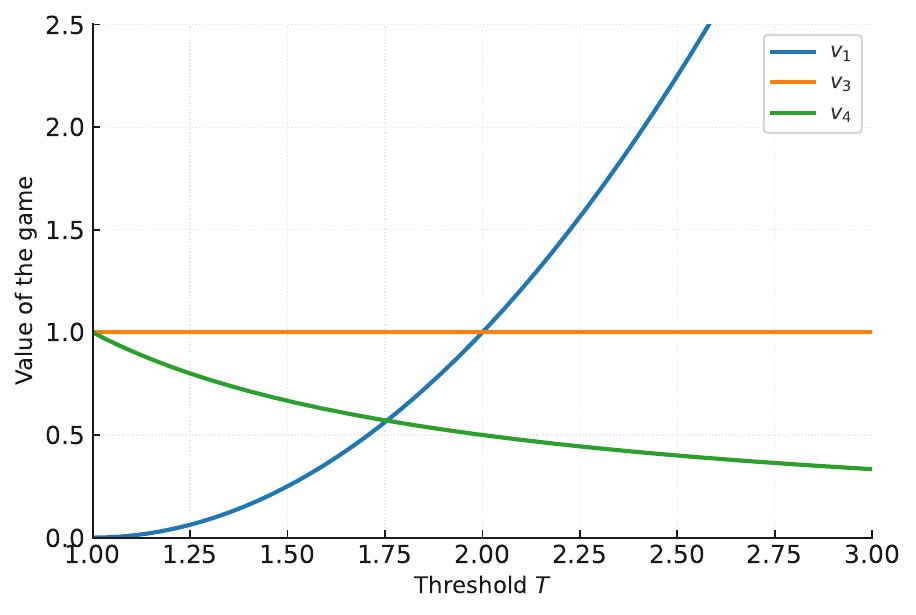}
\caption{Illustration of the different values of the game:  $v_1$ in blue, $v_3$ in orange, and $v_4$ in green. The parameters are $y=2$ and $\delta=0.5$. The value of the game is obtained for $T$ chosen as the intersection of the green and blue curves.}
\label{fig:v134}
\end{figure}
% \todo{SA: redo?}

\subsubsection*{Summary of both cases $A,B$}
We know that if the algorithm chooses $T\geq y$, then the value of the game is $v_3=y -1$. We claim that $T\leq y$ would be a better choice. We already showed that $v_4\leq v_3$. To show $v_1\leq v_3$, we observe that in $v_1=(T-1)\frac{T-(y -y\delta)}{y\delta}$, the first factor $T-1$ is upper bounded by $y-1$. In addition, the second factor is at most $1$ by $T\leq y$, from which we conclude $v_1\leq v_3$. 

Hence $\max\{v_1,v_4\}\leq v_3$. As a result, the algorithm's best strategy is to choose $(1-\delta)y\leq T \leq y$ such that $v_1(T)=v_4(T)$. The exact expression of this value can be computed, but does not have a simple form. Hence, for the purpose of presentation, we omit its exact expression. See Figure~\ref{fig:v134} for an illustration.

\subsection{An Example for Computing the Average Distance}
\label{subsec:app.one-max.average}
In this section, we discuss how to optimize the average distance, which from~(\ref{eq:avg}), and~(\ref{eq:one-max:ideal}) is equal to
%\footnotesize
\begin{equation}
d_{\text{avg}}(A_T) =
\begin{cases}
\frac{1}{2h} ( \int_{(1-\delta)y}^{T} (x - 1)\cdot w(x) \ dx + \\
\int_{T}^{(1+\delta)y} (\frac{x}{T} - 1 )\cdot w(x) \ dx ), & \text{if } (1+\delta)y\leq t_2, \\
\frac{1}{2h} ( \int_{t_1}^{T} (x - 1)\cdot w(x) \ dx + \\
\int_{T}^{t_2} (\frac{x}{T} - 1 )\cdot w(x) \ dx + \\
\int_{t_2}^{(1+\delta)y} (\frac{x}{T} - \frac{x}{t_2} )\cdot w(x) \ dx ), & \text{otherwise.}\\
\end{cases}
\label{eq:one-max.avg}
\end{equation}
\normalsize

We illustrate how to compute the average distance for the linear weight function, as defined in~(\ref{eq:one-max_linear_weight}). We show the calculations only for the first case in (\ref{eq:one-max.avg}), i.e., in the case in which $(1+\delta)y \leq t_2$. Recall that the linear weight function is increasing for $T \leq y$ and decreasing for $T \geq y$. Due to this behavior, we split the computation of the integral into two expressions, depending on whether $T < y$ or $T \geq y$, which are given below.

\begin{align*}
d_{\text{avg},1}(T) &= \frac{1}{2h} \Bigg(
\int_{(1-\delta)y}^{T} (x - 1) \cdot \left(1 - \frac{y - x}{h}\right) \, dx  \\
&\quad + \int_{T}^{y} \left(\frac{x}{T} - 1\right) \cdot \left(1 - \frac{y - x}{h}\right) \, dx  \\
&\quad + \int_{y}^{(1+\delta)y} \left(\frac{x}{T} - 1\right) \cdot \left(1 - \frac{x - y}{h}\right) \, dx
\Bigg),
\end{align*}

\begin{align*}
d_{\text{avg},2}(T) &= \frac{1}{2y\delta} \Bigg(
\int_{(1-\delta)y}^{y} (x - 1) \cdot \left(1 - \frac{y - x}{h}\right) \, dx  \\
&\quad + \int_{y}^{T} \left(\frac{x}{T} - 1\right) \cdot \left(1 - \frac{x - y}{h}\right) \, dx  \\
&\quad + \int_{T}^{(1+\delta)y} \left(\frac{x}{T} - 1\right) \cdot \left(1 - \frac{x - y}y\delta\right) \, dx
\Bigg).
\end{align*}

The first expression, $d_{\text{avg},1}(T)$, can be simplified to:

\begin{align*}
d_{\text{avg},1}(T) = 
\frac{1}{12 (y\delta)^2 T} \Big(
& -(y\delta)^3 (T-1) + 3 (y\delta)^2 ((y-2)T + y)  \\
& + 3 y\delta (T-1) \left(T^2 - y^2\right)  \\
& + (T-1)(y-T)^2 (y + 2T)
\Big).
\end{align*}

Similarly, the second expression, $d_{\text{avg},2}(T)$, simplifies to:

\begin{align*}
d_{\text{avg},2}(T) = 
\frac{y\delta + 3y - (6 + y\delta - 3y)T}{12 T}.
\end{align*}

To determine the optimal threshold $T^*_{\text{avg}}$, we apply second-order analysis, solving for each case independently. The final solution is obtained by selecting the value of $T$ that minimizes $d_{\text{avg}}(T)$.

\section{Application in contract scheduling}
\label{sec:app_contract}

\paragraph{Definitions}
In this section, we apply our framework to the contract scheduling problem\footnote{This is a problem of incomplete information, which can be considered as an online problem, in the sense that in each time step the scheduler must decide whether to continue the current contract, or start a new one.}.
In its standard version (with no predictions), the schedule can be defined as an increasing sequence of the form $X=(x_i)_{i \in \mathbb N}$, where $x_i$ is the {\em length} of the $i$-th {\em contract}. These lengths correspond to the execution times of an interruptible system, i.e., we repeatedly execute the algorithm with running times $x_1,x_2,\ldots$. Hence, the completion time of the $i$-th contract is defined as $\sum_{j=0}^i x_i$. Given an interruption time $T$, let $\ell(X,T)$ denote the length of the largest contract completed in $T$. The {\em acceleration ratio} of $X$~\cite{RZ.1991.composing} is defined as 
% %\footnotesize
\begin{equation}
{\text acc}(X)=\sup_T  \perf(X,T), \quad \text{where} \ \perf(X,T)=\frac{T}{\ell(X,T)}.
\label{eq:acc.ratio}
\end{equation}
\normalsize

It is known that the best-possible acceleration ratio is equal to $4$, which is attained by any {\em doubling} schedule of the form $X_\lambda=(\lambda 2^i)_i$, where $\lambda \in [1,2)$. 
In fact, under very mild assumptions, doubling schedules are the unique schedules that optimize the acceleration ratio. Note that according to Definition~\ref{eq:acc.ratio}, without any assumptions, no schedule can have bounded acceleration ratio if the interruption time is allowed to be arbitrarily small. To circumvent this problem, it suffices to assume that the schedule is {\em bi-infinite}, in that it starts with an infinite number of infinitesimally small contracts. For instance, the doubling schedule can be described as $(2^i)_{i\in \mathbb{Z}}$, and the completion time of contract $i\geq 0$ is defined as $\sum_{j=-\infty}^i 2^j=2^{i+1}$. We refer to the discussion in~\cite{angelopoulos2024contract} for further details. We summarize our objective  as follows:

\smallskip
\noindent
{\em Objective:} For each of the decision-theoretic models of Section~\ref{sec:model}, find $\lambda \in [1,2)$ such that the schedule $X_\lambda$ optimizes the corresponding measure.

\smallskip
We will denote by $\lambda^*_{\max}$, $\lambda^*_{\avg}$ and $\lambda^*_{\text{cvar}}$ the optimal values according to the maximum/average distance, and according to CVaR, respectively. 
Given a schedule $X_\lambda$, we will use the notation $k_\lambda(t)$ to denote the index of the largest contract in $X_\lambda$ that completes by time $t$, hence $k_\lambda(t)= \lfloor \log_2\frac{t}{\lambda} \rfloor$.

\subsection{Distance measures}
\label{subsec:contract.distance}
Here, we consider the setting in which there is a prediction $y$ on the interruption time. We begin with identifying an ideal schedule, which, in the context of contract schedule, is a 4-robust schedule $X$ that optimizes the length $\ell(X,y)$, i.e., the length of the contract completed by the predicted time $y$. From~\cite{DBLP:journals/jair/AngelopoulosK23}, we know that such an ideal schedule completes a contract of length $y/2$, precisely at time $y$, and thus has the following property. 
\begin{remark}
The performance ratio of the ideal 4-robust schedule is equal to 2.  
\label{remark:ideal.contract}
\end{remark}

From~(\ref{eq:max}),~(\ref{eq:acc.ratio}) and Remark~\ref{remark:ideal.contract} it follows that the maximum distance of a schedule $X_\lambda$ can be expressed as
\begin{equation}
d_{\text{max}}(X_\lambda) = \sup_{T \in R_y} \left( \frac{T}{\ell(X_\lambda, T)} - 2 \right) \cdot w(T),
\label{eq:contract.max}
\end{equation}
where recall that $R_y$ is the range of the prediction $y$. 

Algorithm~\ref{alg:optimize_lambda} shows how to compute $\lambda^*_{\max}$. We give the intuition behind the algorithm. We prove, in Theorem~\ref{thm:dmax_correct},  that the distance can be maximized only at a discrete set of times, denoted by $S$. This set includes the prediction $y$, the last time a contract in $X_\lambda$ completes prior to $y$, and an additional set of times, denoted by $S'$ which are the roots of a differential equation, defined in step 2 of the algorithm. To show this, we rely on two facts: that $w$ is piece-wise monotone (i.e., bitonic), and that the performance function of any 4-robust schedule $X_\lambda$ is piece-wise linear, with  values in $[2,4]$.
\begin{algorithm}[t]
% %\footnotesize
\caption{Algorithm for computing $\lambda^*_{\max}$}
\label{alg:optimize_lambda}
\textbf{Input:} Prediction $y$ with range $R_y$, weight function $w$. \\
\textbf{Output:}  The optimal value of the $\lambda$-parameter, $\lambda^*_{\max}$.
\begin{algorithmic}[1]
\State 
Define a set of {\em critical} times as
$
S=\{y, \lambda \cdot 2^{k_\lambda(y)}, S'\},
$
where $S'$ is the set of all solutions to the differential equation 
% $w(T) + w'(T) \cdot \left( \frac{T}{\lambda \cdot 2^{k_\lambda(T) -1 }} - 2 \right) = 0$.
$w(T) + w'(T) \cdot (T-\lambda \cdot 2^{\lfloor \log_2\frac{T}{\lambda} \rfloor})  = 0$.
\State For all $T\in S$, compute 
$d(X_\lambda, T) = \left( \frac{T}{\lambda \cdot 2^{k_\lambda(T) -1 }} - 2 \right) \cdot w(T).$
\State Return 
$
\lambda_{\max}^*= \argmin _{\lambda\in [1,2)} \max_{T \in S} d(X, T).
$
\end{algorithmic}
\end{algorithm}
\normalsize

\begin{theorem}
Algorithm~\ref{alg:optimize_lambda} returns an optimal schedule according to $d_{\max}$. 
\label{thm:dmax_correct}
\end{theorem}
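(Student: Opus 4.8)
The plan is to reduce the theorem to a single claim: for every fixed $\lambda\in[1,2)$, the finite set of critical times $S=S(\lambda)$ built in Algorithm~\ref{alg:optimize_lambda} satisfies $\max_{T\in S} d(X_\lambda,T)=d_{\max}(X_\lambda)$, where a completion time $c\in S$ is understood to contribute the left-limit $\lim_{T\to c^-} d(X_\lambda,T)$. Granting this, the last line of the algorithm returns $\argmin_{\lambda\in[1,2)} d_{\max}(X_\lambda)$, which by definition is a schedule optimal for the maximum-distance measure, so the theorem follows.

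To prove the claim I would first record the shape of the integrand $d(X_\lambda,T)=\big(\perf(X_\lambda,T)-2\big)\,w(T)=g_\lambda(T)\,w(T)$. Since $X_\lambda$ is $4$-robust, $\perf(X_\lambda,\cdot)$ takes values in $[2,4]$; precisely, on each interval $[\lambda 2^{j},\lambda 2^{j+1})$ between consecutive contract completions the largest completed contract has the fixed length $\lambda 2^{j-1}$, so there $g_\lambda(T)=(T-\lambda 2^{j})/(\lambda 2^{j-1})$ is linear and increasing with $g_\lambda(\lambda 2^{j})=0$ and left-limit $2$ at $\lambda 2^{j+1}$. Thus $g_\lambda\ge 0$, and on each such ``rising piece'' one has $g_\lambda'(T)=1/(\lambda 2^{k_\lambda(T)-1})$; because $R_y$ is bounded, only finitely many rising pieces meet it.

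Next I would localize the supremum using that $w$ is non-decreasing on $[\ell,y]$ and non-increasing on $[y,u]$. On a rising piece contained in $[\ell,y]$, $g_\lambda$ and $w$ are both non-negative and non-decreasing, so $g_\lambda w$ is non-decreasing there; hence the supremum over $[\ell,y]$ is attained at a right endpoint of a piece — the point $y$, or, as a left-limit $2w(c)$, a completion time $c\le y$ — and since $w$ is non-decreasing on $[\ell,y]$ only the largest completion $\le y$, namely $\lambda 2^{k_\lambda(y)}$, can be a maximizer. On a rising piece inside $[y,u]$, $g_\lambda$ is increasing while $w$ is non-increasing, and $\tfrac{d}{dT}\big(g_\lambda w\big)=g_\lambda' w+g_\lambda w'=\tfrac{1}{\lambda 2^{k_\lambda(T)-1}}\big(w(T)+w'(T)\,(T-\lambda 2^{\lfloor\log_2(T/\lambda)\rfloor})\big)$, so any interior maximizer is a root of the differential equation defining $S'$; the remaining endpoint candidates are $y$, the right end $u$ of $R_y$, and (as a left-limit $2w(c)$) the first completion $c>y$, among which monotonicity of $w$ on $[y,u]$ again singles out $\lambda 2^{k_\lambda(y)+1}$. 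Collecting all surviving candidates yields exactly the set $S$ of the algorithm (up to the routine inclusion of the endpoints of $R_y$ and the left-limit convention at completion times), whence $\max_{T\in S} d(X_\lambda,T)=d_{\max}(X_\lambda)$.

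The step I expect to be the main obstacle is the bookkeeping around the completion times. The integrand is discontinuous there — it drops to $0$ at a completion while its left-limit is $2w(c)$ — so $d_{\max}(X_\lambda)$ is in general a supremum that is not attained, and one must be careful to represent the two relevant left-limits (the completions immediately before and after $y$) inside the finite set $S$, and to argue via piecewise monotonicity of $w$ that no other completion time is relevant. A secondary issue is that the outer minimization ranges over the continuum $\lambda\in[1,2)$ and uses $w'$: here one assumes, as the algorithm implicitly does, that $w$ is piecewise smooth, so that $S'$ is finite, and one observes that the combinatorial type of the configuration — which completion times fall in $R_y$ and which branch of each piecewise formula is active — changes only at finitely many values of $\lambda$, so that $d_{\max}(X_\lambda)$ is piecewise of a fixed closed form in $\lambda$ and can be minimized piece by piece.
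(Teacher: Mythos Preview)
Your argument is essentially the paper's own: localize the supremum to the piece of $\perf(X_\lambda,\cdot)$ around $y$, use that $g_\lambda\,w$ is non-decreasing on each piece inside $[\ell,y]$ to reduce the left side to $\{y,\ \lambda 2^{k_\lambda(y)}\}$, and on $[y,u]$ collect $y$ together with the roots of the first-order condition that defines $S'$. If anything you are more careful than the paper --- you explicitly flag the left-limit convention at completion times, the endpoints $u$ and $\lambda 2^{k_\lambda(y)+1}$, and the outer minimization over $\lambda$, all of which the paper's proof either asserts without detail (``any $T>T_{k+1}$ or $T<T_k$ does not need to be considered'') or does not discuss at all.
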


\begin{proof}
Recall that the performance ratio of the schedule $X_\lambda = (\lambda 2^i)_i$ is expressed as
\[
\perf(X_\lambda, T)= \frac{T}{\ell(X_\lambda, T)} = \frac{T}{\lambda \cdot 2^{\lfloor \log_2 \frac{T}{\lambda} \rfloor - 1}}.
\]
%where the last equation follows from the fact that worst-case interruptions occur infinitesimaly earlier than the completion time of a contract in the schedule. 

We observe that $\perf(X_\lambda,T)$ is a piece-wise linear function. Specifically, if $T$ belongs in the interval $(\lambda 2^j, \lambda 2^{j+1}]$, then $\perf({X_\lambda,T})$ is a linear increasing function, with value equal to 2, at $T = \lambda 2^j + \varepsilon$, and value equal to 4 at $T = \lambda 2^{j+1}$, where  $\varepsilon$ is an infinitesimally small, positive value. This linear growth arises from the structure of the schedule, which starts a new contract at the endpoint of each interval.
%The intuition behind this behavior is as follows: at time $T = \lambda 2^j + \varepsilon$, the schedule has just completed a contract of length $\lambda 2^{j-1}$, resulting in the performance ratio starting at 2. As time progresses within the interval, the elapsed time $T$ grows while the most recently completed contract remains fixed, causing the performance ratio to increase linearly. At $T = \lambda 2^{j+1}$, the performance ratio reaches its maximum value of 4 for the interval, marking the transition to the next doubling interval.
For this reason, $\perf(X_\lambda,T)$ has a discontinuity at the endpoint of each interval.

%By the structure of the schedule, the performance ratio $\frac{T}{\ell(X, T)}$ undergoes a discontinuity at the boundary between two consecutive intervals, specifically at $T = \lambda 2^{j+1}$, where it drops from 4 to 2.  Consequently, the maximum weighted distance $d_{\text{max}}(X)$ is influenced by the behavior of $\frac{T}{\ell(X, T)}$ across these intervals.

By definition, $y$ belongs to the interval $(T_{k_\lambda(y)}, T_{k_\lambda(y)+1}]$. To simplify the notation, in the remainder of the proof we use $k$ to denote $k_\lambda(y)$. 
%Let us consider three specific points in the schedule:
%\[
%T_{k_\lambda-1} = \lambda 2^{k_\lambda-1}, \quad T_{k_\lambda} = \lambda 2^{k_\lambda}, \quad T_{k_\lambda+1} = \lambda 2^{k_\lambda+1},
% \]
%where $k_\lambda = \lfloor \log_2 \frac{y}{\lambda} \rfloor$, and $y \in (T_{k_\lambda}, T_{k_\lambda+1}]$ is the predicted interruption time.
We claim $d_{\text{max}}(X_\lambda)$
is maximized for some $T \in [T_{k}, T_{k+1}]$, specifically at one of a finite set of critical points $S$. To establish this claim, we make the following observations:
\begin{itemize}
\item At $T = T_{k}$, the performance ratio reaches its maximum value equal to 4, for the entire interval $(T_{k-1}, T_{k}]$. 
%\item Within $(T_{k}, T_{k+1}]$, the performance ratio increases linearly from 2 at $T = T_{k} + \varepsilon$ to 4 at $T = T_{k+1}$. This growth contributes to the weighted distance, particularly for values of $T$ where the weighted function $w(T)$ is non-negligible.
\item Any $T>T_{k+1}$ or $T<T_k$ does not need to be considered in the computation of $d_{\max}$, due to the monotonicity of the weight function, and the structural properties of the schedule $X_\lambda$, as discussed above.  
\end{itemize}

Given that the bitonic nature of the weight function, we observe that for all $T<y$, $w(T)$ is non-decreasing, hence within the interval $[T_k,y)$, it suffices to only consider $T_k$ as a maximizing candidate. Furthermore, in the interval $[y, T_{k_\lambda+1}]$, $w(T)$ is non-increasing, while the performance ratio grows linearly. Thus, one must find the local maxima for $T\in (y, T_{k_\lambda+1}]$, by solving $d'(X_\lambda,T)=0$, or equivalently
\[
    w(T) + w'(T) \cdot (T-\lambda \cdot 2^{\lfloor \log_2\frac{T}{\lambda} \rfloor})  = 0.
\]

We thus show that it suffices to consider the set $S$ as potential maximizers of the distance, as defined in Algorithm~\ref{alg:optimize_lambda}.
\end{proof}

% \begin{remark}
% If $w$ is such that $w(t)=1$ if and only if $t=y$, then  Algorithm~\ref{alg:optimize_lambda}
% simultaneously optimizes the consistency and the robustness. If, on the other hand, $w(t)=1$, for all $t\in R_t=[(1-\delta)y,(1+\delta)y]$ (i.e., in the {\em unweighted} case), then the  algorithm returns the $\delta$-tolerant schedule. 
% \end{remark}

\begin{corollary}
For the unit weight function $w(t)=1$, and $R_{y}=[(1-\delta)y,(1+\delta)y]$, the schedule that minimizes $d_{\max}$ is the 
% $\delta$-tolerant 
schedule of~\cite{DBLP:journals/jair/AngelopoulosK23}.
\label{cor:contract.unweighted}
\end{corollary}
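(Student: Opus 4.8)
The plan is to specialize the analysis behind Theorem~\ref{thm:dmax_correct} to $w\equiv 1$ and then carry out the resulting one-dimensional optimization over $\lambda$ explicitly. Write $R_y=[L,U]$ with $L=(1-\delta)y$ and $U=(1+\delta)y$. Since $w\equiv 1$, we have $d_{\max}(X_\lambda)=\sup_{T\in R_y}\bigl(\perf(X_\lambda,T)-2\bigr)$, so the goal is to minimize $\sup_{T\in R_y}\perf(X_\lambda,T)$ over $\lambda\in[1,2)$. From the proof of Theorem~\ref{thm:dmax_correct} I would reuse the following structural facts about a doubling schedule $X_\lambda$: its contract completion times are exactly the points $\lambda 2^{k}$ ($k\in\mathbb Z$); on each doubling interval $I_k:=[\lambda 2^{k},\lambda 2^{k+1})$ one has $\ell(X_\lambda,T)=\lambda 2^{k-1}$, hence $\perf(X_\lambda,T)=T/(\lambda 2^{k-1})$, a continuous increasing function that equals $2$ at the left endpoint of $I_k$ and tends to $4$ at its right endpoint; and $\perf(X_\lambda,T)\le 4$ for every $T$ (this is just $\mathtt{acc}(X_\lambda)=4$).

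The first step is to establish a dichotomy for $d_{\max}(X_\lambda)$. If some completion time lies in $(L,U]$ — equivalently $k_\lambda(L)<k_\lambda(U)$ — take such a point $c=\lambda 2^{k}$; then for $T\uparrow c$ we have $T\in R_y$ (since $L<c\le U$) and $\perf(X_\lambda,T)\to 4$, while $\perf(X_\lambda,\cdot)\le 4$ everywhere, so $\sup_{T\in R_y}\perf(X_\lambda,T)=4$ and $d_{\max}(X_\lambda)=2$. Otherwise $R_y\subseteq I_k$ for a single $k$, so $\perf(X_\lambda,\cdot)$ is continuous and increasing on $R_y$ and is maximized at $U$, giving $d_{\max}(X_\lambda)=\perf(X_\lambda,U)-2=\tfrac{2U}{\lambda 2^{k}}-2$. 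Note that $R_y$ can be contained in a single doubling interval for some choice of $\lambda$ if and only if $U/2<L$, i.e. $\delta<1/3$.

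With the dichotomy in hand, the optimization is routine. If $\delta\ge 1/3$, then $d_{\max}(X_\lambda)=2$ for every $\lambda$, and in particular the $\delta$-tolerant schedule of~\cite{DBLP:journals/jair/AngelopoulosK23} (which likewise has $d_{\max}=2$) is a minimizer. If $\delta<1/3$, the admissible values of $\lambda 2^{k}$ (those for which $R_y\subseteq I_k$) are exactly those in $(U/2,L]$, and on this range $\tfrac{2U}{\lambda 2^{k}}-2$ is strictly decreasing; hence the unique minimizer takes $\lambda 2^{k}=L$, i.e.\ it is the doubling schedule that completes a contract exactly at time $(1-\delta)y$ (such a schedule exists with $\lambda=L/2^{\lfloor\log_2 L\rfloor}\in[1,2)$). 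Its value is $\tfrac{2U}{L}-2=\tfrac{4\delta}{1-\delta}<2$, so it strictly beats every $\lambda$ whose range straddles a completion time; and at $\delta=1/3$ this value equals $2$, so the two regimes match continuously. It remains to observe that, by the construction in~\cite{DBLP:journals/jair/AngelopoulosK23}, the $\delta$-tolerant schedule is precisely the doubling schedule that aligns a contract completion with the lower endpoint $(1-\delta)y$ of the prediction range (so that its worst-case performance over $R_y$ is $\tfrac{2(1+\delta)}{1-\delta}=d_{\max}+2$), which identifies it with our minimizer and completes the proof.

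The hard part is the dichotomy of the second paragraph: the key phenomenon is that $d_{\max}(X_\lambda)$ jumps to the worst possible value $2$ as soon as a single contract completion time enters the interior of $R_y$, because the performance of any doubling schedule spikes to (arbitrarily close to) $4$ immediately before every completion time. Everything downstream is a monotone optimization in the one-dimensional quantity $\lambda 2^{k}$, with only minor care needed for the $\lambda\in[1,2)$ normalization, the boundary $\delta=1/3$ (where the minimizer is no longer unique), and the appeal to the explicit form of the $\delta$-tolerant schedule of~\cite{DBLP:journals/jair/AngelopoulosK23} to close the identification.
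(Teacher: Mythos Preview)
Your proposal is correct and follows essentially the same approach as the paper: both argue by a dichotomy on whether a completion time falls inside the range (forcing $d_{\max}=2$) versus the range sitting inside a single doubling interval (possible iff $\delta<1/3$), and then push the completion time to the left endpoint $(1-\delta)y$ by a monotonicity argument. Your write-up is in fact more careful than the paper's on a few edge points---you use the half-open interval $(L,U]$ rather than $[L,U]$ (the paper's wording ``no contract terminates anywhere in $[(1-\delta)y,(1+\delta)y]$'' is slightly inconsistent with its own conclusion that the optimal schedule completes a contract at $(1-\delta)y$), you compute the optimal value $4\delta/(1-\delta)$ explicitly and verify it is $<2$, and you handle the boundary $\delta=1/3$.
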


\begin{proof}
The proof is a special case of the proof of Theorem~\ref{thm:dmax_correct}. In this case, $d'(X_\lambda,T)=1>0$, which implies that only local maxima for $d(X_\lambda,T)$ can occur at $T=T_{k+1}$ or at $(1+\delta)y$. 

We will consider two cases. First,
suppose that $\delta>1/3$. In this case, 
any schedule $X_\lambda$ is such that $d_{\max}(X_\lambda)=2$. This is because $X_\lambda$ completes at least one contract within the time interval 
$[(1-\delta)y,(1+\delta)y]$.

For the second case, suppose that 
$\delta\leq 1/3$. Then, in order to minimize $d_{\max}$, and without loss of generality,  $\lambda$ must be chosen so that no contract terminates 
anywhere in $[(1-\delta)y,(1+\delta)y]$, since otherwise $X_\lambda$ would have a performance ratio as large as 4, hence  distance as large as 2. With this into account, $\lambda$ must be further chosen so that $X_\lambda$ completes a contract at time $(1-\delta)y$. This is because, in this case, $\perf(X_\lambda,T)$ is increasing in $T$, for $T\in [(1-\delta)y,(1+\delta)y]$. Hence the optimal algorithm is precisely the $\delta$-tolerant algorithm.
\end{proof}

Next, we show how to optimize the average distance, which from~(\ref{eq:avg}), and~(\ref{eq:acc.ratio}) is equal to
\begin{equation}
% %\footnotesize
d_{\text{avg}}(X) = \frac{1}{2y\delta} \int_{T \in R_y} \left( \frac{T}{\lambda \cdot 2^{\lfloor \log_2\frac{T}{\lambda} \rfloor -1}} - 2 \right) \cdot w(T) \ dT.
\label{eq:contract.avg}
\end{equation}
\normalsize

Optimizing~(\ref{eq:contract.avg}) requires numerical methods.

\subsection{Computing the Average Distance of a Schedule}
\label{subsec:contract.avg}
To ensure computational tractability, we impose a constraint on the range $R_{y}$ of the prediction $y$. Specifically, we assume $\delta \leq \frac{1}{3}$. This assumption guarantees that for any schedule of the form $X = \lambda (2^i)_i$  there is at most one completed contract within $R_{y}$,. 

The length of the largest completed contract in $X$ before $(1-\delta)y$ is then given by $\lambda 2^{k_{\lambda}((1-\delta)y)-1}$. Using this, we divide the range $R_{y}$ into two sub-intervals:
\begin{enumerate}
    \item $ [(1-\delta)y, \lambda 2^{k_{\lambda}((1-\delta)y)+1}] $: In this interval, the performance ratio is 
    \[
    \frac{T}{\ell(X, T)} = \frac{T}{\lambda 2^{k_{\lambda}((1-\delta)y)-1}}.
    \]
    \item $ [\lambda 2^{k_{\lambda}((1-\delta)y)+1}, (1+\delta)y] $: In this interval, the performance ratio is 
    \[
    \frac{T}{\ell(X, T)} = \frac{T}{\lambda 2^{k_{\lambda}((1-\delta)y)}}.
    \]
\end{enumerate}

The average distance $d_\text{avg}(X)$ is then expressed as:
\begin{align}
    d_{\text{avg}}(X_\lambda) &= \frac{1}{2y\delta} \Bigg( 
    \int_{(1-\delta)y}^{\lambda 2^{k_{\lambda}((1-\delta)y)+1}} 
    \left(\frac{T}{\lambda 2^{k_{\lambda}((1-\delta)y)-1}} - 2\right) \cdot w(T) \, dT  \\
    & + \int_{\lambda 2^{k_{\lambda}((1-\delta)y)+1}}^{(1+\delta)y} 
    \left(\frac{T}{\lambda 2^{k_{\lambda}((1-\delta)y)}} - 2\right) \cdot w(T) \, dT 
    \Bigg).
\label{eq:contract.avg.distance.formula}    
\end{align}

\noindent
{\bf Example: linear weight functions.}
As an example, consider the case in which $w$ is a bitonic linear function defined by 
\[
w(T) = \max\left\{0, 1 - \frac{|T- y|}{y\delta}\right\},
\]

To apply this weight function in the computation of~(\ref{eq:contract.avg.distance.formula}), we divide the prediction interval $D_w = [(1-\delta)y, (1+\delta)y]$ into three subintervals based on the structure of the schedule and function $w$:

\begin{itemize}
    \item {\em $T\in [(1-\delta)y, \lambda 2^{k_{\lambda}((1-\delta)y)+1}]$:}
    In this case, $\perf(X,T)= \frac{T}{\lambda 2^{k_{\lambda}((1-\delta)y)-1}}$. Then, 
    \[
    \int_{(1-\delta)y}^{\lambda 2^{k_{\lambda}((1-\delta)y)+1}} \left(\frac{T}{\lambda 2^{k_{\lambda}((1-\delta)y)-1}} - 2\right) \cdot \left(1 - \frac{y - T}{y\delta}\right) \, dT.
    \]

    \item {\em $T\in [\lambda 2^{k_{\lambda}((1-\delta)y)+1}, y]$:}
    In this case, $\perf(X,T)= \frac{T}{\lambda 2^{k_{\lambda}((1-\delta)y)}}$. Then,
    
    % For $T \in [\lambda 2^{k_{(1-\delta)y}+1}, y]$, the weighted function remains $w(T) = 1 - \frac{y - T}{h}$, while the performance ratio transitions to $\frac{T}{\ell(X, T)} = \frac{T}{\lambdax 2^{k_{(1-\delta)y}}}$. The integral becomes:
    \[
     \int_{\lambda 2^{k_{\lambda}((1-\delta)y)+1}}^{y} \left(\frac{T}{\lambda 2^{k_{\lambda}((1-\delta)y)}} - 2\right) \cdot \left(1 - \frac{y - T}{y\delta}\right) \, dT.
    \]

    \item {\em $T\in [y, (1+\delta)y]$:}
    In this case, $\perf(X,T)= \frac{T}{\lambda 2^{k_{\lambda}((1-\delta)y)}}$. Then, 
    \[
    \int_{y}^{(1+\delta)y} \left(\frac{T}{\lambda 2^{k_{\lambda}((1-\delta)y)}} - 2\right) \cdot \left(1 - \frac{y - T}{y\delta}\right) \, dT.
    \]
\end{itemize}

To summarize, we obtain from the above cases, and~(\ref{eq:contract.avg.distance.formula}) that
% \[
% \begin{aligned}
% d_{\text{avg}}(X) = \frac{-3 h^2 \lambda + 4^{1 + k_{(1-\delta)y}} \lambda^3 + 3 \cdot 2^{k_{(1-\delta)y}} \lambda^2 (h - y) 
% + 2^{-2 - k_{(1-\delta)y}} \left(-h^3 + 9 h^2 y - 3 h y^2 + y^3 \right)}{3 h^2 \lambda}.
% \end{aligned}
% \]

\[
\begin{aligned}
d_{\text{avg}}(X) = 
&\frac{-3 (y\delta)^2 \lambda 
+ 4^{k_{\lambda}((1-\delta)y)+1} \lambda^3 
+ 3 \cdot 2^{k_{\lambda}((1-\delta)y)} \lambda^2 \, y(\delta - 1)}{3 (y\delta)^2 \lambda} \\
&+ \frac{2^{-2 - k_{\lambda}((1-\delta)y)} \Big(-(y\delta)^3 + 9 (y\delta)^2 y - 3 (y\delta) y^2 + y^3\Big)}{3 (y\delta)^2 \lambda}.
\end{aligned}
\]

Optimizing in terms of $\lambda$ via second-order analysis and solving for the derivative’s root gives three solutions, but only one real root. Thus the optimized value is:
\[
\begin{aligned}
\lambda^*_{\avg} 
= & \; 2^{-3(1 + k_{\lambda}((1-\delta)y))} \Bigg(
     4^{k_{\lambda}((1-\delta)y)} \, y(1-\delta) \\[0.5ex]
& \quad + \frac{16^{k_{\lambda}((1-\delta)y)} \, y^2(\delta - 1)^2}{
        \Big(-3 \cdot 64^{k_{\lambda}((1-\delta)y)} A 
        + 4 \sqrt{4096^{k_{\lambda}((1-\delta)y)} B_1 B_2}\Big)^{1/3}} \\[0.5ex]
& \quad + \Big(-3 \cdot 64^{k_{\lambda}((1-\delta)y)} A 
        + 4 \sqrt{4096^{k_{\lambda}((1-\delta)y)} B_1 B_2}\Big)^{1/3}
\Bigg).
\end{aligned}
\]

where
\[
\begin{aligned}
A &= 3 (y\delta)^3 - 25 (y\delta)^2 y + 9 (y\delta) y^2 - 3 y^3, \\
B_1 &= 5 (y\delta)^3 - 39 (y\delta)^2 y + 15 (y\delta) y^2 - 5 y^3, \\
B_2 &= (y\delta)^3 - 9 (y\delta)^2 y + 3 (y\delta) y^2 - y^3.
\end{aligned}
\]

\subsection{Risk-based analysis}
\label{subsec:contract.risk}

%Let $k_{(1-\delta)y} = \lfloor \log_2\frac{(1-\delta)y}{\lambda} \rfloor$. The length of the largest completed contract in $X$ before $(1-\delta)y$ is then given by $\lambda 2^{k_{(1-\delta)y}-1}$. In this section, we restrict prediction error to be $h\leq \frac{y}{3}$. 

We now turn our attention to the CVaR analysis. Following the discussion of Section~\ref{subsec:model.risk}, the oracle provides the schedule with an imperfect distributional prediction $\mu$.  From~(\ref{eq:a-cons}), and the fact that any distributional prediction concerns only the interruption time (the only unknown in the problem), the $\alpha$-consistency of a schedule $X_\lambda$ is equal to
\[
\textrm{$\alpha$-cons($A$)}=\frac{\mathbb{E}_{T \sim \mu}[T]}{\text{CVaR}_{\alpha,\mu}[\ell(X_\lambda,T)]}.
\]
We thus seek $X_\lambda$ that maximizes the conditional value-at-risk of its largest completed contract by an interruption generated according to $\mu$. To obtain a tractable expression of this quantity, we will assume that $\mu$ has support $R_y\in [(1-\delta)y,(1+\delta)y]$, where $h \leq y/3$. This captures the requirement that the support remains bounded, otherwise the distributional prediction becomes highly inaccurate. This implies that if $t$ is drawn from $\mu$, then in $X_{\lambda}$, $\ell(X_{\lambda},t)$ can only have one of two possible values, namely $\lambda 2^{k_{\lambda}((1-\delta)y)-1}$ and $\lambda 2^{k_{\lambda}((1-\delta)y)}$. 

Define
$
q_\lambda = 
    \Pr[\ell(X_\lambda,T) = \lambda 2^{k_{\lambda}((1-\delta)y)-1}] =\int_{(1-\delta)y}^{\lambda 2^{k_{\lambda}((1-\delta)y)+1}} \mu(T) \, dT , 
$
then from the discussion above we have that $\Pr[\ell(X_\lambda,T) = \lambda 2^{k_{\lambda}((1-\delta)y)}]=1-q_\lambda$. With this definition in place, we can find the optimal schedule. 
\begin{theorem}
Assuming $\delta \leq 1/3$, we have that
\begin{align*}
\mathrm{CVaR}_{\alpha,\mu}[\ell(X_\lambda,T)] = 
\max \left\{ \frac{\lambda 2^{k_{\lambda}((1-\delta)y)-1}}{1 - \alpha} \left( 2(1 - \alpha) - q_\lambda \right), \lambda 2^{k_{\lambda}((1-\delta)y)-1} \right\},
\end{align*}
\normalsize
where $k_\lambda(t)= \lfloor \log_2\frac{t}{\lambda} \rfloor$. Hence,

\noindent
$
\lambda^*_{\cvar} = \arg \max_{\lambda \in [1,2)} \mathrm{CVaR}_{\alpha,\mu}[\ell(X_\lambda,T)] ,
$

\label{thm:contract.cvar}
\end{theorem}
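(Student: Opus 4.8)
The plan is to exploit the two-point structure of $\ell(X_\lambda,T)$ that the hypothesis $\delta\le 1/3$ forces, and then to evaluate the profit-version of the CVaR variational formula by a short case analysis. First I would recall that contract scheduling is a maximization problem, so $\mathrm{CVaR}_\alpha(Z)=\sup_t\{t-\tfrac{1}{1-\alpha}\mathbb{E}[(t-Z)^+]\}$ with $Z=\ell(X_\lambda,T)$ and $T\sim\mu$. Then I would argue that, since $\mu$ is supported on $R_y=[(1-\delta)y,(1+\delta)y]$ with $\delta\le 1/3$, the ratio of the endpoints of $R_y$ is at most $(1+\delta)/(1-\delta)\le 2$; as consecutive completion times of $X_\lambda$ differ by a factor of exactly $2$, at most one contract of $X_\lambda$ completes inside $R_y$. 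Hence $\ell(X_\lambda,T)$ takes only the two values $L:=\lambda 2^{k_\lambda((1-\delta)y)-1}$, with probability $q_\lambda$, and $2L=\lambda 2^{k_\lambda((1-\delta)y)}$, with probability $1-q_\lambda$, where $q_\lambda$ is precisely the mass $\mu$ assigns to $[(1-\delta)y,\lambda 2^{k_\lambda((1-\delta)y)+1}]$ as defined before the theorem.

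Next I would compute $\sup_t$ by splitting $\mathbb{R}$ into the three intervals $t\le L$, $L\le t\le 2L$, and $t\ge 2L$. On each of these the map $t\mapsto t-\tfrac{1}{1-\alpha}\mathbb{E}[(t-\ell)^+]$ is affine in $t$: on $t\le L$ it equals $t$ (supremum $L$ at $t=L$); on $L\le t\le 2L$ it equals $t\bigl(1-\tfrac{q_\lambda}{1-\alpha}\bigr)+\tfrac{q_\lambda L}{1-\alpha}$; and on $t\ge 2L$ it has slope $-\alpha/(1-\alpha)\le 0$, so its supremum there is attained at $t=2L$ and equals $\tfrac{L}{1-\alpha}\bigl(2(1-\alpha)-q_\lambda\bigr)$, which matches the value of the middle piece at $t=2L$. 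For the middle piece the sign of $1-q_\lambda/(1-\alpha)$ picks the endpoint: if $q_\lambda\le 1-\alpha$ the supremum is $\tfrac{L}{1-\alpha}(2(1-\alpha)-q_\lambda)\ge L$ at $t=2L$, while if $q_\lambda>1-\alpha$ it is $L$ at $t=L$. In all cases the global supremum is $\max\{L,\ \tfrac{L}{1-\alpha}(2(1-\alpha)-q_\lambda)\}$, which is the claimed formula after substituting $L=\lambda 2^{k_\lambda((1-\delta)y)-1}$.

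Finally, the characterization of $\lambda^*_{\cvar}$ is immediate from $\alpha\text{-cons}(X_\lambda)=\mathbb{E}_{T\sim\mu}[T]/\mathrm{CVaR}_{\alpha,\mu}[\ell(X_\lambda,T)]$: the numerator is independent of $\lambda$, so minimizing $\alpha$-consistency over $\lambda\in[1,2)$ is equivalent to maximizing $\mathrm{CVaR}_{\alpha,\mu}[\ell(X_\lambda,T)]$. The step I expect to be the main obstacle is not any single computation but making the two-value reduction fully rigorous and clean — in particular handling the degenerate and boundary cases (a completion time of $X_\lambda$ landing on an endpoint of $R_y$ when $\delta=1/3$, the extremes $q_\lambda\in\{0,1\}$, and the limit $\alpha\to 1$), and checking that the piecewise-affine supremum glues correctly at $t=L$ and $t=2L$. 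Once these are settled, the remainder is routine.
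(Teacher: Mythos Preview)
Your proposal is correct and follows essentially the same route as the paper: both reduce $\ell(X_\lambda,T)$ to a two-point random variable on $\{L,2L\}$ with $L=\lambda 2^{k_\lambda((1-\delta)y)-1}$, then evaluate the profit-CVaR variational formula by the same three-interval case split on $t$ (namely $t\le L$, $L\le t\le 2L$, $t\ge 2L$), arriving at the displayed maximum. Your treatment is slightly more explicit than the paper's about gluing the pieces at the endpoints and about the $\lambda^*_{\cvar}$ consequence via the $\lambda$-independence of $\mathbb{E}_{T\sim\mu}[T]$, but the argument is the same.
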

Theorem~\ref{thm:contract.cvar} interpolates between two extreme cases. If $\alpha=0$, then our schedule maximizes the expected contract length assuming $T \sim \mu$, i.e.,  
$
\lambda 2^{k_{\lambda}((1-\delta)y)-1} \cdot q_\lambda + \lambda 2^{k_{\lambda}((1-\delta)y)} \cdot (1 - q_\lambda) = \lambda 2^{k_{\lambda}((1-\delta)y)-1} \cdot (2 - q_\lambda).
$
This schedule recovers the optimal consistency in the standard case of a distributional prediction, as studied in~\cite{angelopoulos2024contract}, and corresponds to a risk-seeking scheduler. In the other extreme, i.e., when $\alpha \to 1$, the schedule optimizes the length of a contract that completes by the time $(1-\delta)y$, namely $\lambda 2^{k_{\lambda}((1-\delta)y)-1}$.
We thus recover the consistency of the $\delta$-tolerant schedule.
%and corresponds to a risk-averse scheduler that safeguards against the extreme scenario in which the interruption occurs as early as at time $(1-\delta)y$.
\begin{proof}
Recall the definition of the conditional value-at risk comes, as given in~(\ref{eq:cvar}).  In order to compute $\mathrm{CVaR}_{\alpha,\mu}[\ell(X_\lambda,T)]$ we have to apply case analysis, based on the value of the parameter $t$:

\bigskip
\noindent
{\em Case 1: $t \geq \lambda 2^{k_{\lambda}((1-\delta)y)}$.} Then 
 \begin{align*}
  &\mathrm{CVaR}_{\alpha,\mu}[\ell(X_\lambda,T)] = \\  &\sup_{t\geq \lambda 2^{k_{\lambda}((1-\delta)y)-1}} \left\{ t - \frac{1}{1 - \alpha} \left( t - \lambda 2^{k_{\lambda}((1-\delta)y)-1}(2 - q_{\lambda}) \right) \right\}.
  \end{align*}
  In this case, the optimal value of $t$ is equal to $\lambda 2^{k_{\lambda}((1-\delta)y)-1}$, hence we obtain:
  \begin{align*}
  &\mathrm{CVaR}_{\alpha,\mu}[\ell(X_\lambda,T)] =\frac{\lambda 2^{k_{\lambda}((1-\delta)y)-1}}{1 - \alpha} \left( 2(1 - \alpha) - q_{\lambda} \right).
  \end{align*}

\bigskip
\noindent
{\em Case 2: $t \leq \lambda 2^{k_{\lambda}((1-\delta)y)-1}$}. In this case, $(t - \ell(X,T))^+ =0$, and 
\[
\mathrm{CVaR}_{\alpha,\mu}[\ell(X_\lambda,T)] = \lambda 2^{k_{\lambda}((1-\delta)y)-1}.
\]

\bigskip
\noindent
{\em Case 3: $t \in [\lambda 2^{k_{\lambda}((1-\delta)y)-1}, \lambda 2^{k_{\lambda}((1-\delta)y)}]$}. Then
 \[
  (t - \ell(X,T))^+ =
  \begin{cases}
      0, & \text{w. p. } 1 - q_{\lambda}, \\
      t - \lambda 2^{k_{\lambda}((1-\delta)y)-1}, & \text{w. p. } q_{\lambda},
  \end{cases}
  \]
from which we obtain that
\begin{align*}
\mathrm{CVaR}_{\alpha,\mu}[\ell(X_\lambda,T)] 
= \sup_{t \in [\,\lambda 2^{k_{\lambda}((1-\delta)y)-1},\ \lambda 2^{k_{\lambda}((1-\delta)y)}\,]} 
\left\{ t \left( 1 - \tfrac{q_{\lambda}}{1 - \alpha} \right) 
+ \tfrac{\lambda 2^{k_{\lambda}((1-\delta)y)-1} \cdot q_{\lambda}}{1 - \alpha} \right\}.
\end{align*}

We consider two further subcases, based on whether of $1-\alpha-q_\lambda$ is positive or not. 
In the former case, we have that 
\begin{align*}
&\mathrm{CVaR}_{\alpha,\mu}[\ell(X_\lambda,T)] = 
\frac{\lambda 2^{k_{\lambda}((1-\delta)y)-1}}{1 - \alpha} \left( 2(1 - \alpha) - q_{\lambda} \right).
\end{align*}
In the latter case, we obtain
\[
\mathrm{CVaR}_{\alpha,\mu}[\ell(X_\lambda,T)] = \lambda 2^{k_{\lambda}((1-\delta)y)-1}.
\]

From the above case analysis, it follows that
\begin{align*}
\mathrm{CVaR}_{\alpha,\mu}[\ell(X_\lambda,T)] = 
\max \left\{ \frac{\lambda 2^{k_{\lambda}((1-\delta)y)-1}}{1 - \alpha} \left( 2(1 - \alpha) - q_{\lambda} \right), \lambda 2^{k_{\lambda}((1-\delta)y)-1} \right\},
\end{align*}
which concludes the proof.
\end{proof}

\subsection{Evaluation of contract schedules}
\label{subsec:exper.contract}

\textbf{Baselines} \
We compare our schedules against two 4-robust baselines~\cite{DBLP:journals/jair/AngelopoulosK23}: the Pareto-optimal schedule (PO), which completes a contract at the prediction $y$, and the $\delta$-tolerant schedule ($\delta$-Tol), which completes a contract at $(1-\delta)y$.

\textbf{Datasets} \
The prediction is chosen as $y\sim\mathrm{Unif}[0.8\cdot 10^6,\,1.2\cdot 10^6]$, and the prediction range is $R_y=[(1-\delta)y,(1+\delta)y]$. The interruption time $T$ is chosen uniformly at random from $R_y$.

\textbf{Evaluation} \
For {\sc Max} and {\sc Avg}, we use two weight functions defined on $R_y$. The first is a linear symmetric function that decreases from $1$ at $y$ to $0$ at the endpoints of $R_y$. The second is a Gaussian function
\[
w(x) = 
\begin{cases} 
\tfrac{1}{\sigma \sqrt{2\pi}} \exp\!\left(-\tfrac{1}{2} \left(\tfrac{x - y}{\sigma}\right)^2 \right), & \text{if } x \in R_y, \\
0, & \text{otherwise},
\end{cases}
\]
where $\sigma = \delta y/4$. For {\sc CVaR}$_\alpha$, the predictive distribution $\mu$ coincides with the respective weight function, truncated and normalized on $R_y$, with $\alpha\in\{0.1,0.5,0.9\}$.  

For each schedule ,we compute the average performance ratio, taken over all $T\in R_y$, with 1000 independent repetitions on the choice of $y$. We also compute the expected completed contract length under~$\mu$, with the averaging performed over the choices of~$y$. The tables also report the 95\% confidence intervals (CIs). Results are presented in six tables: three for the linear weight function (Tables~\ref{tab:contract_linear_delta02}--\ref{tab:contract_linear_delta04}) and three for the Gaussian weight function (Tables~\ref{tab:contract_gauss_delta02}--\ref{tab:contract_gauss_delta04}), each corresponding to a different value of~$\delta$.

\textbf{Discussion} \
The tables show that, in the vast majority of the considered settings, our schedules achieve better performance ratios and larger expected contract lengths than both PO and $\delta$-Tol. This can be explained by the fact that all Pareto-optimal algorithms are brittle, as shown in~\cite{DBLP:journals/corr/abs-2408-04122}, whereas, in contrast, the  $\delta$-Tol algorithm is inefficient unless the prediction error is large.

For {\sc CVaR}$_\alpha$, the results show that the expected completed contract length decreases as $\alpha$ grows, while the performance ratio tends to increase, which is consistent with the tradeoff between risk and robustness. When $\delta$ is smaller, all schedules perform better: this is because the reduced prediction range allows to our algorithms a more accurate positioning of the completion time. In a similar vein, for Gaussian weights, the results are consistently stronger than for linear weights, because the distribution is more concentrated around $y$. Finally, we note that most confidence intervals on the reported objectives 
collapse to zero. This is consistent with theory because the schedules have the same structure for each prediction $y$.

% =========================
% Contract scheduling — Linear weights (weights=linear, mu=linear)
% =========================

\begin{table}[t!]
  \centering
  \scriptsize
  \caption{Experimental results for contract scheduling (linear weight), $\delta=0.2$.}
  \label{tab:contract_linear_delta02}
  \setlength{\tabcolsep}{3pt}
  \begin{tabularx}{1\linewidth}{@{}l *{7}{>{\centering\arraybackslash}X} @{}}
    \toprule
    & {\sc Max} & {\sc Avg} & \multicolumn{3}{c}{{\sc CVaR}$_\alpha$} & PO & $\delta$-{\sc Tol} \\
    \cmidrule(lr){4-6}
    & & & $\alpha=0.1$ & $\alpha=0.5$ & $\alpha=0.9$ & & \\
    \midrule
    \textbf{Avg perf.\ ratio} & 2.421 & 2.413 & 2.426 & 2.416 & 2.471 & 2.960 & 2.500 \\
    \textbf{CI$_{+}$/CI$_{-}$} & +0.0000/-0.0000 & +0.0000/-0.0000 & +0.0002/-0.0002 & +0.0001/-0.0001 & +0.0002/-0.0003 & +0.0000/-0.0000 & +0.0000/-0.0000 \\
    \midrule
    \textbf{Exp.\ contract length} & 413{,}303 & 417{,}504 & 420{,}164 & 415{,}274 & 402{,}646 & 373{,}008 & 397{,}875 \\
    \textbf{CI$_{+}$/CI$_{-}$} & +9083.55/-8957.96 & +8495.21/-9051.60 & +9123.35/-8027.55 & +8621.83/-8983.43 & +8679.61/-8684.86 & +7776.46/-7024.64 & +8377.65/-8357.20 \\
    \bottomrule
  \end{tabularx}
\end{table}

\begin{table}[t!]
  \centering
  \scriptsize
  \caption{Experimental results for contract scheduling (linear weight), $\delta=\tfrac{1}{3}$.}
  \label{tab:contract_linear_delta033}
  \setlength{\tabcolsep}{3pt}
  \begin{tabularx}{1\linewidth}{@{}l *{7}{>{\centering\arraybackslash}X} @{}}
    \toprule
    & {\sc Max} & {\sc Avg} & \multicolumn{3}{c}{{\sc CVaR}$_\alpha$} & PO & $\delta$-{\sc Tol} \\
    \cmidrule(lr){4-6}
    & & & $\alpha=0.1$ & $\alpha=0.5$ & $\alpha=0.9$ & & \\
    \midrule
    \textbf{Avg perf.\ ratio} & 2.689 & 2.632 & 2.643 & 2.654 & 2.873 & 2.934 & 3.000 \\
    \textbf{CI$_{+}$/CI$_{-}$} & +0.0000/-0.0000 & +0.0000/-0.0000 & +0.0002/-0.0002 & +0.0002/-0.0002 & +0.0006/-0.0006 & +0.0000/-0.0000 & +0.0000/-0.0000 \\
    \midrule
    \textbf{Exp.\ contract length} & 371{,}190 & 387{,}025 & 389{,}927 & 378{,}469 & 344{,}100 & 370{,}414 & 329{,}257 \\
    \textbf{CI$_{+}$/CI$_{-}$} & +7982.66/-7670.94 & +8086.65/-8892.81 & +8739.76/-9252.04 & +8779.84/-7975.69 & +7710.75/-8106.44 & +8252.32/-8220.64 & +7185.13/-7391.59 \\
    \bottomrule
  \end{tabularx}
\end{table}

\begin{table}[t!]
  \centering
  \scriptsize
  \caption{Experimental results for contract scheduling (linear weight), $\delta=0.4$.}
  \label{tab:contract_linear_delta04}
  \setlength{\tabcolsep}{3pt}
  \begin{tabularx}{1\linewidth}{@{}l *{7}{>{\centering\arraybackslash}X} @{}}
    \toprule
    & {\sc Max} & {\sc Avg} & \multicolumn{3}{c}{{\sc CVaR}$_\alpha$} & PO & $\delta$-{\sc Tol} \\
    \cmidrule(lr){4-6}
    & & & $\alpha=0.1$ & $\alpha=0.5$ & $\alpha=0.9$ & & \\
    \midrule
    \textbf{Avg perf.\ ratio} & 2.808 & 2.704 & 2.711 & 2.746 & 3.031 & 2.920 & 3.287 \\
    \textbf{CI$_{+}$/CI$_{-}$} & +0.0000/-0.0000 & +0.0000/-0.0000 & +0.0001/-0.0001 & +0.0003/-0.0003 & +0.0003/-0.0003 & +0.0000/-0.0000 & +0.0000/-0.0000 \\
    \midrule
    \textbf{Exp.\ contract length} & 361{,}661 & 385{,}593 & 387{,}891 & 372{,}668 & 335{,}125 & 376{,}514 & 308{,}199 \\
    \textbf{CI$_{+}$/CI$_{-}$} & +7657.21/-7979.53 & +8265.97/-8367.98 & +8415.05/-8767.30 & +8449.57/-8393.17 & +7200.61/-7299.36 & +8415.63/-8840.35 & +7370.45/-8120.64 \\
    \bottomrule
  \end{tabularx}
\end{table}

% =========================
% Contract scheduling — Gaussian weights (weights=gaussian, mu=gaussian)
% =========================

\begin{table}[t!]
  \centering
  \scriptsize
  \caption{Experimental results for contract scheduling (Gaussian weight), $\delta=0.2$.}
  \label{tab:contract_gauss_delta02}
  \setlength{\tabcolsep}{3pt}
  \begin{tabularx}{1\linewidth}{@{}l *{7}{>{\centering\arraybackslash}X} @{}}
    \toprule
    & {\sc Max} & {\sc Avg} & \multicolumn{3}{c}{{\sc CVaR}$_\alpha$} & PO & $\delta$-{\sc Tol} \\
    \cmidrule(lr){4-6}
    & & & $\alpha=0.1$ & $\alpha=0.5$ & $\alpha=0.9$ & & \\
    \midrule
    \textbf{Avg perf.\ ratio} & 2.267 & 2.266 & 2.273 & 2.266 & 2.315 & 2.960 & 2.500 \\
    \textbf{CI$_{+}$/CI$_{-}$} & +0.0000/-0.0000 & +0.0001/-0.0001 & +0.0002/-0.0002 & +0.0000/-0.0001 & +0.0003/-0.0003 & +0.0000/-0.0000 & +0.0000/-0.0000 \\
    \midrule
    \textbf{Exp.\ contract length} & 442{,}104 & 443{,}434 & 445{,}046 & 442{,}662 & 430{,}335 & 373{,}008 & 397{,}869 \\
    \textbf{CI$_{+}$/CI$_{-}$} & +9722.63/-9586.90 & +9022.41/-9592.05 & +9673.17/-8505.70 & +9198.55/-9603.26 & +9289.41/-9245.24 & +7776.46/-7024.64 & +8377.52/-8357.07 \\
    \bottomrule
  \end{tabularx}
\end{table}

\begin{table}[t!]
  \centering
  \scriptsize
  \caption{Experimental results for contract scheduling (Gaussian weight), $\delta=\tfrac{1}{3}$.}
  \label{tab:contract_gauss_delta033}
  \setlength{\tabcolsep}{3pt}
  \begin{tabularx}{1\linewidth}{@{}l *{7}{>{\centering\arraybackslash}X} @{}}
    \toprule
    & {\sc Max} & {\sc Avg} & \multicolumn{3}{c}{{\sc CVaR}$_\alpha$} & PO & $\delta$-{\sc Tol} \\
    \cmidrule(lr){4-6}
    & & & $\alpha=0.1$ & $\alpha=0.5$ & $\alpha=0.9$ & & \\
    \midrule
    \textbf{Avg perf.\ ratio} & 2.421 & 2.413 & 2.423 & 2.415 & 2.524 & 2.934 & 3.000 \\
    \textbf{CI$_{+}$/CI$_{-}$} & +0.0000/-0.0000 & +0.0001/-0.0001 & +0.0002/-0.0002 & +0.0001/-0.0001 & +0.0004/-0.0004 & +0.0000/-0.0000 & +0.0000/-0.0000 \\
    \midrule
    \textbf{Exp.\ contract length} & 412{,}448 & 416{,}801 & 419{,}095 & 414{,}520 & 392{,}171 & 370{,}414 & 329{,}262 \\
    \textbf{CI$_{+}$/CI$_{-}$} & +8872.45/-8524.03 & +8664.99/-9576.50 & +9395.10/-9951.62 & +9623.35/-8751.30 & +8729.12/-9156.69 & +8252.32/-8220.64 & +7185.24/-7391.71 \\
    \bottomrule
  \end{tabularx}
\end{table}

\begin{table}[t!]
  \centering
  \scriptsize
  \caption{Experimental results for contract scheduling (Gaussian weight), $\delta=0.4$.}
  \label{tab:contract_gauss_delta04}
  \setlength{\tabcolsep}{3pt}
  \begin{tabularx}{1\linewidth}{@{}l *{7}{>{\centering\arraybackslash}X} @{}}
    \toprule
    & {\sc Max} & {\sc Avg} & \multicolumn{3}{c}{{\sc CVaR}$_\alpha$} & PO & $\delta$-{\sc Tol} \\
    \cmidrule(lr){4-6}
    & & & $\alpha=0.1$ & $\alpha=0.5$ & $\alpha=0.9$ & & \\
    \midrule
    \textbf{Avg perf.\ ratio} & 2.494 & 2.478 & 2.488 & 2.483 & 2.631 & 2.920 & 3.287 \\
    \textbf{CI$_{+}$/CI$_{-}$} & +0.0000/-0.0000 & +0.0001/-0.0001 & +0.0003/-0.0002 & +0.0002/-0.0002 & +0.0006/-0.0006 & +0.0000/-0.0000 & +0.0000/-0.0000 \\
    \midrule
    \textbf{Exp.\ contract length} & 407{,}466 & 413{,}970 & 416{,}511 & 410{,}594 & 382{,}442 & 376{,}514 & 308{,}032 \\
    \textbf{CI$_{+}$/CI$_{-}$} & +8625.69/-8993.79 & +8887.56/-8941.28 & +9024.70/-9400.35 & +9319.17/-9231.67 & +8202.65/-8300.45 & +8415.63/-8840.35 & +6701.11/-7383.17 \\
    \bottomrule
  \end{tabularx}
\end{table}

\section{Details from Section~\ref{sec:experiments}}
\label{appendix:experiments}
\subsection{Evaluation of Ski Rental Algorithms}
\label{appendix:experiments_skirental}

We base the experiments on the benchmarks described in  Section~\ref{sec:experiments}, for varying values of the parameters $\delta,r,z$. 
For {\sc Max} and {\sc Avg} we consider two classes of weight functions on $R_y=[(1-\delta)y,(1+\delta)y]$: a linear symmetric weight decreasing from~1 at $y$ to~0 at the endpoints, and a Gaussian weight with mean $y$ and $\sigma=\delta y/4$, both truncated and normalized on $R_y$. 
For {\sc CVaR}$_\alpha$, the distribution $\mu$ is described by the same linear and Guassian weight classes (truncated and normalized in $R_y$), with $\alpha\in\{0.1,0.5,0.9\}$.
As in the main paper, we report (i) the average performance ratio (averaged over all $x\in R_y$ and over 1000 draws of $y$) and (ii) the expected cost under~$\mu$ (averaged over the same 1000 draws of $y$). 
Each table also includes 95\% confidence intervals.
We present five tables for linear weights 
(Tables~\ref{tab:sr_linear_delta09_r5_z4}--\ref{tab:sr_linear_delta09_r5_z7}) and five for Gaussian weights (Tables~\ref{tab:sr_gauss_delta09_r5_z4}--\ref{tab:sr_gauss_delta09_r5_z7}). 
They correspond to the settings 
$(\delta,r,z)\in\{(0.9,5,4),\ (0.9,8,4),\ (0.5,5,4),\ (0.5,8,4),\ (0.9,5,7)\}$.

\paragraph{Discussion}
As $\delta$ decreases, our algorithms perform better both in terms of the performance ratio and in terms of the average expected cost. This is consistent with theory, since $R_y$ becomes smaller, and the algorithms can better leverage the narrower prediction range. We also note that 
Gaussian weights generally yield stronger results than linear weights, which is explained by the fact that the Gaussian weight function is more concentrated around~$y$.

Varying $z$ has small effect on the performance of the algorithms. This is not only expected, but also an essential feature of the algorithms, since they should perform consistently regardless of the predicted value. A similar observation holds for varying the robustness parameter $r$.

%Across all settings, our distance-based algorithms dominate the fixed baselines in both ratio and average cost. 
% Smaller $\delta$ (tighter $R_y$) improves all methods, while Gaussian weights generally yield stronger results than linear weights because the mass concentrates more tightly around~$y$.

% ---------- LINEAR WEIGHTS: ----------

\begin{table}[t!]
  \centering
  \scriptsize
  \caption{Ski rental (linear weight), $\delta=0.9$, $r=5$, $z=4$.}
  \label{tab:sr_linear_delta09_r5_z4}
  \setlength{\tabcolsep}{3pt}
  \begin{tabularx}{1\linewidth}{@{}l *{8}{>{\centering\arraybackslash}X} @{}}
    \toprule
    & {\sc Max} & {\sc Avg} & \multicolumn{3}{c}{{\sc CVaR}$_\alpha$} & \multicolumn{3}{c}{$\text{BP}_{\rho}$} \\
    \cmidrule(lr){4-6}\cmidrule(lr){7-9}
    & & & $\alpha=0.1$ & $\alpha=0.5$ & $\alpha=0.9$ & $b$ & $b+\tfrac{br}{2}$ & $b(r-1)$ \\
    \midrule
    \textbf{Avg perf. ratio} & 1.3442 & 1.3368 & 1.3418 & 1.3637 & 1.3930 & 1.6767 & 2.1874 & 2.2189 \\
    \textbf{CI$_{+}$/CI$_{-}$} & +0.0311/-0.0345 & +0.0309/-0.0336 & +0.0297/-0.0303 & +0.0343/-0.0342 & +0.0366/-0.0355 & +0.0482/-0.0502 & +0.1606/-0.1668 & +0.1706/-0.1684 \\
    \midrule
    \textbf{Exp.\ cost} & 11.2501 & 11.1832 & 11.1693 & 11.2758 & 11.4553 & 16.4188 & 20.6070 & 20.5024 \\
    \textbf{CI$_{+}$/CI$_{-}$} & +0.4798/-0.5076 & +0.4866/-0.5085 & +0.4954/-0.5201 & +0.4787/-0.5368 & +0.4671/-0.4981 & +0.9002/-0.9968 & +1.8565/-2.0281 & +1.8461/-1.8712 \\
    \bottomrule
  \end{tabularx}
\end{table}

\begin{table}[t!]
  \centering
  \scriptsize
  \caption{Ski rental (linear weight), $\delta=0.9$, $r=8$, $z=4$.}
  \label{tab:sr_linear_delta09_r8_z4}
  \setlength{\tabcolsep}{3pt}
  \begin{tabularx}{1\linewidth}{@{}l *{8}{>{\centering\arraybackslash}X} @{}}
    \toprule
    & {\sc Max} & {\sc Avg} & \multicolumn{3}{c}{{\sc CVaR}$_\alpha$} & \multicolumn{3}{c}{$\text{BP}_{\rho}$} \\
    \cmidrule(lr){4-6}\cmidrule(lr){7-9}
    & & & $\alpha=0.1$ & $\alpha=0.5$ & $\alpha=0.9$ & $b$ & $b+\tfrac{br}{2}$ & $b(r-1)$ \\
    \midrule
    \textbf{Avg perf. ratio} & 1.2879 & 1.2853 & 1.3031 & 1.3343 & 1.3978 & 1.6767 & 2.2417 & 2.2251 \\
    \textbf{CI$_{+}$/CI$_{-}$} & +0.0318/-0.0339 & +0.0324/-0.0334 & +0.0338/-0.0355 & +0.0435/-0.0398 & +0.0543/-0.0505 & +0.0482/-0.0502 & +0.1794/-0.1831 & +0.1748/-0.1732 \\
    \midrule
    \textbf{Exp.\ cost} & 10.4306 & 10.3880 & 10.3999 & 10.4840 & 10.7296 & 16.4325 & 20.8787 & 20.7540 \\
    \textbf{CI$_{+}$/CI$_{-}$} & +0.3959/-0.4364 & +0.4165/-0.4321 & +0.4189/-0.4595 & +0.4035/-0.4551 & +0.3811/-0.4256 & +0.9005/-0.9964 & +1.9688/-2.1559 & +1.9242/-1.9548 \\
    \bottomrule
  \end{tabularx}
\end{table}

\begin{table}[t!]
  \centering
  \scriptsize
  \caption{Ski rental (linear weight), $\delta=0.5$, $r=5$, $z=4$.}
  \label{tab:sr_linear_delta05_r5_z4}
  \setlength{\tabcolsep}{3pt}
  \begin{tabularx}{1\linewidth}{@{}l *{8}{>{\centering\arraybackslash}X} @{}}
    \toprule
    & {\sc Max} & {\sc Avg} & \multicolumn{3}{c}{{\sc CVaR}$_\alpha$} & \multicolumn{3}{c}{$\text{BP}_{\rho}$} \\
    \cmidrule(lr){4-6}\cmidrule(lr){7-9}
    & & & $\alpha=0.1$ & $\alpha=0.5$ & $\alpha=0.9$ & $b$ & $b+\tfrac{br}{2}$ & $b(r-1)$ \\
    \midrule
    \textbf{Avg perf. ratio} & 1.2047 & 1.2034 & 1.2085 & 1.2145 & 1.2287 & 1.7729 & 2.2041 & 2.1976 \\
    \textbf{CI$_{+}$/CI$_{-}$} & +0.0182/-0.0201 & +0.0182/-0.0204 & +0.0170/-0.0176 & +0.0182/-0.0198 & +0.0216/-0.0223 & +0.0633/-0.0695 & +0.1836/-0.1871 & +0.1870/-0.1844 \\
    \midrule
    \textbf{Exp.\ cost} & 11.1739 & 11.1732 & 11.1751 & 11.2170 & 11.3174 & 17.0456 & 21.2944 & 21.0003 \\
    \textbf{CI$_{+}$/CI$_{-}$} & +0.4695/-0.5043 & +0.4829/-0.5080 & +0.4960/-0.5209 & +0.4791/-0.5392 & +0.4821/-0.5183 & +0.9977/-1.1034 & +2.0426/-2.2322 & +1.9736/-2.0155 \\
    \bottomrule
  \end{tabularx}
\end{table}

\begin{table}[t!]
  \centering
  \scriptsize
  \caption{Ski rental (linear weight), $\delta=0.5$, $r=8$, $z=4$.}
  \label{tab:sr_linear_delta05_r8_z4}
  \setlength{\tabcolsep}{3pt}
  \begin{tabularx}{1\linewidth}{@{}l *{8}{>{\centering\arraybackslash}X} @{}}
    \toprule
    & {\sc Max} & {\sc Avg} & \multicolumn{3}{c}{{\sc CVaR}$_\alpha$} & \multicolumn{3}{c}{$\text{BP}_{\rho}$} \\
    \cmidrule(lr){4-6}\cmidrule(lr){7-9}
    & & & $\alpha=0.1$ & $\alpha=0.5$ & $\alpha=0.9$ & $b$ & $b+\tfrac{br}{2}$ & $b(r-1)$ \\
    \midrule
    \textbf{Avg perf. ratio} & 1.1265 & 1.1245 & 1.1334 & 1.1404 & 1.1476 & 1.7729 & 2.1680 & 2.1595 \\
    \textbf{CI$_{+}$/CI$_{-}$} & +0.0112/-0.0125 & +0.0108/-0.0124 & +0.0098/-0.0098 & +0.0133/-0.0128 & +0.0158/-0.0153 & +0.0633/-0.0695 & +0.1825/-0.1828 & +0.1783/-0.1757 \\
    \midrule
    \textbf{Exp.\ cost} & 10.3897 & 10.3880 & 10.3983 & 10.4419 & 10.4889 & 17.0499 & 20.7854 & 20.7519 \\
    \textbf{CI$_{+}$/CI$_{-}$} & +0.3918/-0.4351 & +0.4165/-0.4321 & +0.4184/-0.4550 & +0.3959/-0.4595 & +0.4063/-0.4330 & +0.9967/-1.1027 & +1.9364/-2.1260 & +1.9246/-1.9540 \\
    \bottomrule
  \end{tabularx}
\end{table}

\begin{table}[t!]
  \centering
  \scriptsize
  \caption{Ski rental (linear weight), $\delta=0.9$, $r=5$, $z=7$.}
  \label{tab:sr_linear_delta09_r5_z7}
  \setlength{\tabcolsep}{3pt}
  \begin{tabularx}{1\linewidth}{@{}l *{8}{>{\centering\arraybackslash}X} @{}}
    \toprule
    & {\sc Max} & {\sc Avg} & \multicolumn{3}{c}{{\sc CVaR}$_\alpha$} & \multicolumn{3}{c}{$\text{BP}_{\rho}$} \\
    \cmidrule(lr){4-6}\cmidrule(lr){7-9}
    & & & $\alpha=0.1$ & $\alpha=0.5$ & $\alpha=0.9$ & $b$ & $b+\tfrac{br}{2}$ & $b(r-1)$ \\
    \midrule
    \textbf{Avg perf. ratio} & 1.3103 & 1.3053 & 1.3151 & 1.3295 & 1.3534 & 1.7901 & 2.8509 & 2.9905 \\
    \textbf{CI$_{+}$/CI$_{-}$} & +0.0229/-0.0222 & +0.0233/-0.0223 & +0.0238/-0.0260 & +0.0276/-0.0275 & +0.0312/-0.0310 & +0.0456/-0.0472 & +0.1914/-0.2102 & +0.2160/-0.2153 \\
    \midrule
    \textbf{Exp.\ cost} & 11.7081 & 11.7023 & 11.7078 & 11.7775 & 11.9247 & 17.8170 & 27.2827 & 27.5432 \\
    \textbf{CI$_{+}$/CI$_{-}$} & +0.3917/-0.4533 & +0.4131/-0.4432 & +0.4053/-0.4532 & +0.3772/-0.4741 & +0.3763/-0.4357 & +0.7726/-0.8356 & +2.0175/-2.3302 & +2.0996/-2.2661 \\
    \bottomrule
  \end{tabularx}
\end{table}

% ---------- GAUSSIAN WEIGHTS:  ----------

\begin{table}[t!]
  \centering
  \scriptsize
  \caption{Ski rental (Gaussian weight), $\delta=0.9$, $r=5$, $z=4$.}
  \label{tab:sr_gauss_delta09_r5_z4}
  \setlength{\tabcolsep}{3pt}
  \begin{tabularx}{1\linewidth}{@{}l *{8}{>{\centering\arraybackslash}X} @{}}
    \toprule
    & {\sc Max} & {\sc Avg} & \multicolumn{3}{c}{{\sc CVaR}$_\alpha$} & \multicolumn{3}{c}{$\text{BP}_{\rho}$} \\
    \cmidrule(lr){4-6}\cmidrule(lr){7-9}
    & & & $\alpha=0.1$ & $\alpha=0.5$ & $\alpha=0.9$ & $b$ & $b+\tfrac{br}{2}$ & $b(r-1)$ \\
    \midrule
    \textbf{Avg perf. ratio} & 1.3514 & 1.3387 & 1.3397 & 1.3492 & 1.3669 & 1.6767 & 2.1874 & 2.2189 \\
    \textbf{CI$_{+}$/CI$_{-}$} & +0.0297/-0.0333 & +0.0307/-0.0337 & +0.0302/-0.0311 & +0.0326/-0.0344 & +0.0335/-0.0358 & +0.0482/-0.0502 & +0.1606/-0.1668 & +0.1706/-0.1684 \\
    \midrule
    \textbf{Exp.\ cost} & 11.1765 & 11.1731 & 11.1731 & 11.2151 & 11.3155 & 16.9873 & 21.2338 & 20.9577 \\
    \textbf{CI$_{+}$/CI$_{-}$} & +0.4687/-0.5039 & +0.4829/-0.5081 & +0.4966/-0.5217 & +0.4796/-0.5398 & +0.4828/-0.5188 & +0.9850/-1.0909 & +2.0143/-2.2246 & +1.9520/-1.9971 \\
    \bottomrule
  \end{tabularx}
\end{table}

\begin{table}[t!]
  \centering
  \scriptsize
  \caption{Ski rental (Gaussian weight), $\delta=0.5$, $r=5$, $z=4$.}
  \label{tab:sr_gauss_delta05_r5_z4}
  \setlength{\tabcolsep}{3pt}
  \begin{tabularx}{1\linewidth}{@{}l *{8}{>{\centering\arraybackslash}X} @{}}
    \toprule
    & {\sc Max} & {\sc Avg} & \multicolumn{3}{c}{{\sc CVaR}$_\alpha$} & \multicolumn{3}{c}{$\text{BP}_{\rho}$} \\
    \cmidrule(lr){4-6}\cmidrule(lr){7-9}
    & & & $\alpha=0.1$ & $\alpha=0.5$ & $\alpha=0.9$ & $b$ & $b+\tfrac{br}{2}$ & $b(r-1)$ \\
    \midrule
    \textbf{Avg perf. ratio} & 1.2097 & 1.2034 & 1.2049 & 1.2049 & 1.2141 & 1.7729 & 2.2041 & 2.1976 \\
    \textbf{CI$_{+}$/CI$_{-}$} & +0.0184/-0.0206 & +0.0182/-0.0204 & +0.0180/-0.0185 & +0.0185/-0.0199 & +0.0197/-0.0221 & +0.0633/-0.0695 & +0.1836/-0.1871 & +0.1870/-0.1844 \\
    \midrule
    \textbf{Exp.\ cost} & 11.1740 & 11.1732 & 11.1733 & 11.1733 & 11.2349 & 17.1177 & 21.3451 & 20.9444 \\
    \textbf{CI$_{+}$/CI$_{-}$} & +0.4693/-0.5043 & +0.4829/-0.5080 & +0.4966/-0.5217 & +0.4843/-0.5294 & +0.4917/-0.5055 & +1.0257/-1.1153 & +2.0850/-2.2479 & +1.9824/-2.0130 \\
    \bottomrule
  \end{tabularx}
\end{table}

\begin{table}[t!]
  \centering
  \scriptsize
  \caption{Ski rental (Gaussian weight), $\delta=0.9$, $r=8$, $z=4$.}
  \label{tab:sr_gauss_delta09_r8_z4}
  \setlength{\tabcolsep}{3pt}
  \begin{tabularx}{1\linewidth}{@{}l *{8}{>{\centering\arraybackslash}X} @{}}
    \toprule
    & {\sc Max} & {\sc Avg} & \multicolumn{3}{c}{{\sc CVaR}$_\alpha$} & \multicolumn{3}{c}{$\text{BP}_{\rho}$} \\
    \cmidrule(lr){4-6}\cmidrule(lr){7-9}
    & & & $\alpha=0.1$ & $\alpha=0.5$ & $\alpha=0.9$ & $b$ & $b+\tfrac{br}{2}$ & $b(r-1)$ \\
    \midrule
    \textbf{Avg perf. ratio} & 1.3025 & 1.2853 & 1.3002 & 1.3175 & 1.3571 & 1.6767 & 2.2417 & 2.2251 \\
    \textbf{CI$_{+}$/CI$_{-}$} & +0.0322/-0.0335 & +0.0324/-0.0334 & +0.0342/-0.0360 & +0.0416/-0.0390 & +0.0507/-0.0444 & +0.0482/-0.0502 & +0.1794/-0.1831 & +0.1748/-0.1732 \\
    \midrule
    \textbf{Exp.\ cost} & 10.3947 & 10.3880 & 10.3983 & 10.4381 & 10.5760 & 16.9918 & 20.7954 & 20.7520 \\
    \textbf{CI$_{+}$/CI$_{-}$} & +0.3933/-0.4372 & +0.4165/-0.4321 & +0.4198/-0.4603 & +0.3974/-0.4614 & +0.4016/-0.4385 & +0.9841/-1.0901 & +1.9368/-2.1269 & +1.9246/-1.9540 \\
    \bottomrule
  \end{tabularx}
\end{table}

\begin{table}[t!]
  \centering
  \scriptsize
  \caption{Ski rental (Gaussian weight), $\delta=0.5$, $r=8$, $z=4$.}
  \label{tab:sr_gauss_delta05_r8_z4}
  \setlength{\tabcolsep}{3pt}
  \begin{tabularx}{1\linewidth}{@{}l *{8}{>{\centering\arraybackslash}X} @{}}
    \toprule
    & {\sc Max} & {\sc Avg} & \multicolumn{3}{c}{{\sc CVaR}$_\alpha$} & \multicolumn{3}{c}{$\text{BP}_{\rho}$} \\
    \cmidrule(lr){4-6}\cmidrule(lr){7-9}
    & & & $\alpha=0.1$ & $\alpha=0.5$ & $\alpha=0.9$ & $b$ & $b+\tfrac{br}{2}$ & $b(r-1)$ \\
    \midrule
    \textbf{Avg perf. ratio} & 1.1345 & 1.1245 & 1.1288 & 1.1324 & 1.1404 & 1.7729 & 2.1680 & 2.1595 \\
    \textbf{CI$_{+}$/CI$_{-}$} & +0.0147/-0.0149 & +0.0108/-0.0124 & +0.0103/-0.0105 & +0.0123/-0.0122 & +0.0152/-0.0150 & +0.0633/-0.0695 & +0.1825/-0.1828 & +0.1783/-0.1757 \\
    \midrule
    \textbf{Exp.\ cost} & 10.3908 & 10.3880 & 10.3898 & 10.4073 & 10.4565 & 17.1216 & 20.7559 & 20.7519 \\
    \textbf{CI$_{+}$/CI$_{-}$} & +0.3920/-0.4354 & +0.4165/-0.4321 & +0.4204/-0.4565 & +0.3986/-0.4622 & +0.4107/-0.4390 & +1.0255/-1.1160 & +1.9374/-2.1215 & +1.9246/-1.9540 \\
    \bottomrule
  \end{tabularx}
\end{table}

\begin{table}[t!]
  \centering
  \scriptsize
  \caption{Ski rental (Gaussian weight), $\delta=0.9$, $r=5$, $z=7$.}
  \label{tab:sr_gauss_delta09_r5_z7}
  \setlength{\tabcolsep}{3pt}
  \begin{tabularx}{1\linewidth}{@{}l *{8}{>{\centering\arraybackslash}X} @{}}
    \toprule
    & {\sc Max} & {\sc Avg} & \multicolumn{3}{c}{{\sc CVaR}$_\alpha$} & \multicolumn{3}{c}{$\text{BP}_{\rho}$} \\
    \cmidrule(lr){4-6}\cmidrule(lr){7-9}
    & & & $\alpha=0.1$ & $\alpha=0.5$ & $\alpha=0.9$ & $b$ & $b+\tfrac{br}{2}$ & $b(r-1)$ \\
    \midrule
    \textbf{Avg perf. ratio} & 1.3201 & 1.3065 & 1.3110 & 1.3199 & 1.3383 & 1.7901 & 2.8509 & 2.9905 \\
    \textbf{CI$_{+}$/CI$_{-}$} & +0.0244/-0.0231 & +0.0236/-0.0227 & +0.0238/-0.0262 & +0.0262/-0.0258 & +0.0313/-0.0288 & +0.0456/-0.0472 & +0.1914/-0.2102 & +0.2160/-0.2153 \\
    \midrule
    \textbf{Exp.\ cost} & 11.7063 & 11.7000 & 11.7051 & 11.7411 & 11.8475 & 18.2168 & 30.3389 & 30.9557 \\
    \textbf{CI$_{+}$/CI$_{-}$} & +0.3918/-0.4544 & +0.4139/-0.4431 & +0.4048/-0.4595 & +0.3831/-0.4670 & +0.4039/-0.4368 & +0.7890/-0.8573 & +2.4735/-2.9018 & +2.6507/-2.7672 \\
    \bottomrule
  \end{tabularx}
\end{table}

\subsection{Evaluation of One-Max Search Algorithms}
\label{appendix:experiments_1_max}

We base the experiments on the benchmarks described in  Section~\ref{sec:experiments}, for varying values of the parameters $\delta,r,z$.
We evaluate {\sc Max}, {\sc Avg}, and {\sc CVaR}$_\alpha$ using the linear and Gaussian weight classes defined in Section~\ref{appendix:experiments_skirental}. 
As in the main paper, we  report (i) the average performance ratio (averaged over all $x\in R_y$ and over 1000 draws of $y$) and (ii) the expected profit under~$\mu$ (averaged over the same 1000 draws).  All tables include 95\% confidence intervals.
We present five tables for linear weights (Tables~\ref{tab:onemax_linear_delta09_r100_z10}--\ref{tab:onemax_linear_delta09_r100_z20}) and five for Gaussian weights (Tables~\ref{tab:onemax_gauss_delta09_r100_z10}--\ref{tab:onemax_gauss_delta09_r100_z20}). 
They correspond to the settings 
$(\delta,r,z)\in\{(0.9,100,10),\ (0.5,100,10),\ (0.9,80,10),\ (0.5,80,10),\ (0.9,100,20)\}$.

\paragraph{Discussion}

The results show that our algorithms tend to improve as $\delta$ decreases, whereas they are not affected by variations in the parameters $r$ and $z$. This is consistent with theory, and we refer to the discussion in the analysis of the experiments on ski rental (Section~\ref{appendix:experiments_skirental}) for the justification. 

For small values of $\delta$, $\delta$-{\sc Tol} has very small performance ratio: this is due to the fact that in this case, the range is extremely small. This advantage disappears, in a marked manner, once $\delta$ increases.

% Across all settings, distance-based algorithms outperform the baselines in average ratio and achieve competitive or higher expected profit. Smaller $\delta$ (tighter $R_y$) improves all methods. 
% Gaussian weights typically yield stronger results than linear weights because probability mass is more concentrated around $y$.

% -------------------------------------------------
% LINEAR WEIGHTS
% -------------------------------------------------

\begin{table}[t!]
  \centering
  \scriptsize
  \caption{One-max (linear weight), $\delta=0.9$, $r=100$, $z=10$.}
  \label{tab:onemax_linear_delta09_r100_z10}
  \setlength{\tabcolsep}{3pt}
  \begin{tabularx}{1\linewidth}{@{}l *{8}{>{\centering\arraybackslash}X} @{}}
    \toprule
    & {\sc Max} & {\sc Avg} & \multicolumn{3}{c}{{\sc CVaR}$_\alpha$} & $\delta$-{\sc Tol} & $\textrm{PO}_1$ & $\textrm{PO}_2$ \\
    \cmidrule(lr){4-6}
    & & & $\alpha=0.1$ & $\alpha=0.5$ & $\alpha=0.9$ & & & \\
    \midrule
    \textbf{Avg perf. ratio} & 4.3942 & 4.4469 & 9.9327 & 6.7335 & 5.2429 & 10.0085 & 4.6304 & 15.6848 \\
    \textbf{CI$_{+}$/CI$_{-}$} & +0.04/-0.05 & +0.06/-0.05 & +0.26/-0.26 & +0.15/-0.15 & +0.12/-0.11 & +0.00/-0.00 & +0.06/-0.06 & +0.45/-0.45 \\
    \midrule
    \textbf{Exp.\ profit} & 15.2276 & 19.6283 & 30.7586 & 27.7794 & 16.1624 & 5.4746 & 13.5565 & 27.9437 \\
    \textbf{CI$_{+}$/CI$_{-}$} & +0.30/-0.33 & +0.49/-0.48 & +0.90/-0.89 & +0.80/-0.87 & +0.49/-0.50 & +0.15/-0.17 & +0.18/-0.18 & +0.82/-0.80 \\
    \bottomrule
  \end{tabularx}
\end{table}

\begin{table}[t!]
  \centering
  \scriptsize
  \caption{One-max (linear weight), $\delta=0.5$, $r=100$, $z=10$.}
  \label{tab:onemax_linear_delta05_r100_z10}
  \setlength{\tabcolsep}{3pt}
  \begin{tabularx}{1\linewidth}{@{}l *{8}{>{\centering\arraybackslash}X} @{}}
    \toprule
    & {\sc Max} & {\sc Avg} & \multicolumn{3}{c}{{\sc CVaR}$_\alpha$} & $\delta$-{\sc Tol} & $\textrm{PO}_1$ & $\textrm{PO}_2$ \\
    \cmidrule(lr){4-6}
    & & & $\alpha=0.1$ & $\alpha=0.5$ & $\alpha=0.9$ & & & \\
    \midrule
    \textbf{Avg perf. ratio} & 2.5877 & 2.7856 & 10.0269 & 6.7880 & 3.1589 & 2.0000 & 3.8718 & 21.0408 \\
    \textbf{CI$_{+}$/CI$_{-}$} & +0.02/-0.02 & +0.01/-0.01 & +0.26/-0.26 & +0.16/-0.16 & +0.04/-0.04 & +0.00/-0.00 & +0.06/-0.07 & +0.62/-0.62 \\
    \midrule
    \textbf{Exp.\ profit} & 28.6754 & 29.2159 & 36.1414 & 34.5976 & 29.7982 & 27.3730 & 13.9343 & 28.0349 \\
    \textbf{CI$_{+}$/CI$_{-}$} & +0.79/-0.88 & +0.86/-0.82 & +1.07/-1.05 & +1.01/-1.09 & +0.86/-0.87 & +0.75/-0.83 & +0.16/-0.16 & +0.82/-0.81 \\
    \bottomrule
  \end{tabularx}
\end{table}

\begin{table}[t!]
  \centering
  \scriptsize
  \caption{One-max (linear weight), $\delta=0.9$, $r=80$, $z=10$.}
  \label{tab:onemax_linear_delta09_r80_z10}
  \setlength{\tabcolsep}{3pt}
  \begin{tabularx}{1\linewidth}{@{}l *{8}{>{\centering\arraybackslash}X} @{}}
    \toprule
    & {\sc Max} & {\sc Avg} & \multicolumn{3}{c}{{\sc CVaR}$_\alpha$} & $\delta$-{\sc Tol} & $\textrm{PO}_1$ & $\textrm{PO}_2$ \\
    \cmidrule(lr){4-6}
    & & & $\alpha=0.1$ & $\alpha=0.5$ & $\alpha=0.9$ & & & \\
    \midrule
    \textbf{Avg perf. ratio} & 4.4865 & 4.4469 & 9.9943 & 6.9624 & 5.7321 & 10.0085 & 4.6516 & 15.6848 \\
    \textbf{CI$_{+}$/CI$_{-}$} & +0.04/-0.04 & +0.06/-0.05 & +0.26/-0.26 & +0.15/-0.15 & +0.14/-0.14 & +0.00/-0.00 & +0.05/-0.05 & +0.45/-0.45 \\
    \midrule
    \textbf{Exp.\ profit} & 15.6165 & 19.6283 & 30.7121 & 27.6555 & 15.8462 & 5.4746 & 14.4844 & 27.9437 \\
    \textbf{CI$_{+}$/CI$_{-}$} & +0.28/-0.31 & +0.49/-0.48 & +0.91/-0.89 & +0.81/-0.88 & +0.52/-0.53 & +0.15/-0.17 & +0.17/-0.18 & +0.82/-0.80 \\
    \bottomrule
  \end{tabularx}
\end{table}

\begin{table}[t!]
  \centering
  \scriptsize
  \caption{One-max (linear weight), $\delta=0.5$, $r=80$, $z=10$.}
  \label{tab:onemax_linear_delta05_r80_z10}
  \setlength{\tabcolsep}{3pt}
  \begin{tabularx}{1\linewidth}{@{}l *{8}{>{\centering\arraybackslash}X} @{}}
    \toprule
    & {\sc Max} & {\sc Avg} & \multicolumn{3}{c}{{\sc CVaR}$_\alpha$} & $\delta$-{\sc Tol} & $\textrm{PO}_1$ & $\textrm{PO}_2$ \\
    \cmidrule(lr){4-6}
    & & & $\alpha=0.1$ & $\alpha=0.5$ & $\alpha=0.9$ & & & \\
    \midrule
    \textbf{Avg perf. ratio} & 2.7301 & 2.7856 & 10.0378 & 6.7586 & 3.1362 & 2.0000 & 3.8570 & 21.0408 \\
    \textbf{CI$_{+}$/CI$_{-}$} & +0.05/-0.05 & +0.01/-0.01 & +0.26/-0.26 & +0.16/-0.16 & +0.04/-0.04 & +0.00/-0.00 & +0.06/-0.07 & +0.62/-0.62 \\
    \midrule
    \textbf{Exp.\ profit} & 28.6082 & 29.2159 & 36.0559 & 34.5122 & 29.7079 & 27.3730 & 14.9631 & 28.0349 \\
    \textbf{CI$_{+}$/CI$_{-}$} & +0.78/-0.87 & +0.86/-0.82 & +1.07/-1.04 & +1.02/-1.10 & +0.86/-0.87 & +0.75/-0.83 & +0.16/-0.17 & +0.82/-0.81 \\
    \bottomrule
  \end{tabularx}
\end{table}

\begin{table}[t!]
  \centering
  \scriptsize
  \caption{One-max (linear weight), $\delta=0.9$, $r=100$, $z=20$.}
  \label{tab:onemax_linear_delta09_r100_z20}
  \setlength{\tabcolsep}{3pt}
  \begin{tabularx}{1\linewidth}{@{}l *{8}{>{\centering\arraybackslash}X} @{}}
    \toprule
    & {\sc Max} & {\sc Avg} & \multicolumn{3}{c}{{\sc CVaR}$_\alpha$} & $\delta$-{\sc Tol} & $\textrm{PO}_1$ & $\textrm{PO}_2$ \\
    \cmidrule(lr){4-6}
    & & & $\alpha=0.1$ & $\alpha=0.5$ & $\alpha=0.9$ & & & \\
    \midrule
    \textbf{Avg perf. ratio} & 3.8763 & 3.8645 & 6.7675 & 4.8622 & 4.1063 & 10.0049 & 3.8668 & 10.2558 \\
    \textbf{CI$_{+}$/CI$_{-}$} & +0.02/-0.02 & +0.02/-0.02 & +0.09/-0.09 & +0.05/-0.05 & +0.07/-0.07 & +0.00/-0.00 & +0.02/-0.02 & +0.15/-0.15 \\
    \midrule
    \textbf{Exp.\ profit} & 11.2099 & 13.9404 & 19.7256 & 17.8093 & 10.9724 & 3.4915 & 11.7401 & 18.0238 \\
    \textbf{CI$_{+}$/CI$_{-}$} & +0.10/-0.10 & +0.18/-0.17 & +0.30/-0.30 & +0.27/-0.29 & +0.15/-0.15 & +0.05/-0.06 & +0.06/-0.06 & +0.28/-0.27 \\
    \bottomrule
  \end{tabularx}
\end{table}

% -------------------------------------------------
% GAUSSIAN WEIGHTS
% -------------------------------------------------

\begin{table}[t!]
  \centering
  \scriptsize
  \caption{One-max (Gaussian weight), $\delta=0.9$, $r=100$, $z=10$.}
  \label{tab:onemax_gauss_delta09_r100_z10}
  \setlength{\tabcolsep}{3pt}
  \begin{tabularx}{1\linewidth}{@{}l *{8}{>{\centering\arraybackslash}X} @{}}
    \toprule
    & {\sc Max} & {\sc Avg} & \multicolumn{3}{c}{{\sc CVaR}$_\alpha$} & $\delta$-{\sc Tol} & $\textrm{PO}_1$ & $\textrm{PO}_2$ \\
    \cmidrule(lr){4-6}
    & & & $\alpha=0.1$ & $\alpha=0.5$ & $\alpha=0.9$ & & & \\
    \midrule
    \textbf{Avg perf. ratio} & 5.0599 & 5.4540 & 9.7709 & 8.1444 & 6.0222 & 10.0085 & 4.6304 & 15.6848 \\
    \textbf{CI$_{+}$/CI$_{-}$} & +0.0728/-0.0756 & +0.0940/-0.0909 & +0.2564/-0.2561 & +0.1988/-0.1969 & +0.1196/-0.1205 & +0.0008/-0.0008 & +0.0561/-0.0578 & +0.4526/-0.4530 \\
    \midrule
    \textbf{Exp.\ profit} & 24.9584 & 27.0416 & 35.7945 & 34.4015 & 27.5003 & 5.4746 & 13.9039 & 27.9858 \\
    \textbf{CI$_{+}$/CI$_{-}$} & +0.6259/-0.7080 & +0.7406/-0.7242 & +1.0580/-1.0403 & +1.0003/-1.0822 & +0.8137/-0.8159 & +0.1503/-0.1666 & +0.1583/-0.1659 & +0.8233/-0.8022 \\
    \bottomrule
  \end{tabularx}
\end{table}

\begin{table}[t!]
  \centering
  \scriptsize
  \caption{One-max (Gaussian weight), $\delta=0.5$, $r=100$, $z=10$.}
  \label{tab:onemax_gauss_delta05_r100_z10}
  \setlength{\tabcolsep}{3pt}
  \begin{tabularx}{1\linewidth}{@{}l *{8}{>{\centering\arraybackslash}X} @{}}
    \toprule
    & {\sc Max} & {\sc Avg} & \multicolumn{3}{c}{{\sc CVaR}$_\alpha$} & $\delta$-{\sc Tol} & $\textrm{PO}_1$ & $\textrm{PO}_2$ \\
    \cmidrule(lr){4-6}
    & & & $\alpha=0.1$ & $\alpha=0.5$ & $\alpha=0.9$ & & & \\
    \midrule
    \textbf{Avg perf. ratio} & 5.9298 & 6.1761 & 12.1837 & 10.5279 & 7.4024 & 2.0000 & 3.8718 & 21.0408 \\
    \textbf{CI$_{+}$/CI$_{-}$} & +0.1000/-0.1051 & +0.1142/-0.1099 & +0.3366/-0.3318 & +0.2822/-0.2819 & +0.1767/-0.1844 & +0.0000/-0.0000 & +0.0622/-0.0653 & +0.6202/-0.6154 \\
    \midrule
    \textbf{Exp.\ profit} & 35.2058 & 35.6679 & 41.5955 & 40.8581 & 37.3962 & 27.3730 & 13.9997 & 28.1314 \\
    \textbf{CI$_{+}$/CI$_{-}$} & +0.9612/-1.0665 & +1.0397/-1.0025 & +1.2372/-1.2094 & +1.1976/-1.2908 & +1.0807/-1.0969 & +0.7512/-0.8332 & +0.1485/-0.1558 & +0.8152/-0.8127 \\
    \bottomrule
  \end{tabularx}
\end{table}

\begin{table}[t!]
  \centering
  \scriptsize
  \caption{One-max (Gaussian weight), $\delta=0.9$, $r=80$, $z=10$.}
  \label{tab:onemax_gauss_delta09_r80_z10}
  \setlength{\tabcolsep}{3pt}
  \begin{tabularx}{1\linewidth}{@{}l *{8}{>{\centering\arraybackslash}X} @{}}
    \toprule
    & {\sc Max} & {\sc Avg} & \multicolumn{3}{c}{{\sc CVaR}$_\alpha$} & $\delta$-{\sc Tol} & $\textrm{PO}_1$ & $\textrm{PO}_2$ \\
    \cmidrule(lr){4-6}
    & & & $\alpha=0.1$ & $\alpha=0.5$ & $\alpha=0.9$ & & & \\
    \midrule
    \textbf{Avg perf. ratio} & 5.1262 & 5.4540 & 9.8288 & 8.3568 & 6.2563 & 10.0085 & 4.6516 & 15.6848 \\
    \textbf{CI$_{+}$/CI$_{-}$} & +0.0622/-0.0690 & +0.0940/-0.0909 & +0.2528/-0.2506 & +0.1908/-0.1933 & +0.1229/-0.1208 & +0.0008/-0.0008 & +0.0465/-0.0473 & +0.4526/-0.4530 \\
    \midrule
    \textbf{Exp.\ profit} & 24.8628 & 27.0416 & 35.6981 & 34.2272 & 27.3657 & 5.4746 & 14.9229 & 27.9858 \\
    \textbf{CI$_{+}$/CI$_{-}$} & +0.6067/-0.6877 & +0.7406/-0.7242 & +1.0671/-1.0390 & +1.0189/-1.0926 & +0.8320/-0.8334 & +0.1503/-0.1666 & +0.1671/-0.1722 & +0.8233/-0.8022 \\
    \bottomrule
  \end{tabularx}
\end{table}

\begin{table}[t!]
  \centering
  \scriptsize
  \caption{One-max (Gaussian weight), $\delta=0.5$, $r=80$, $z=10$.}
  \label{tab:onemax_gauss_delta05_r80_z10}
  \setlength{\tabcolsep}{3pt}
  \begin{tabularx}{1\linewidth}{@{}l *{8}{>{\centering\arraybackslash}X} @{}}
    \toprule
    & {\sc Max} & {\sc Avg} & \multicolumn{3}{c}{{\sc CVaR}$_\alpha$} & $\delta$-{\sc Tol} & $\textrm{PO}_1$ & $\textrm{PO}_2$ \\
    \cmidrule(lr){4-6}
    & & & $\alpha=0.1$ & $\alpha=0.5$ & $\alpha=0.9$ & & & \\
    \midrule
    \textbf{Avg perf. ratio} & 5.9483 & 6.1761 & 12.1798 & 10.5404 & 7.3715 & 2.0000 & 3.8570 & 21.0408 \\
    \textbf{CI$_{+}$/CI$_{-}$} & +0.0807/-0.0907 & +0.1142/-0.1099 & +0.3329/-0.3376 & +0.2871/-0.2838 & +0.1781/-0.1906 & +0.0000/-0.0000 & +0.0629/-0.0660 & +0.6202/-0.6154 \\
    \midrule
    \textbf{Exp.\ profit} & 34.9084 & 35.6679 & 41.4923 & 40.7649 & 37.2693 & 27.3730 & 15.0337 & 28.1314 \\
    \textbf{CI$_{+}$/CI$_{-}$} & +0.9659/-1.0843 & +1.0397/-1.0025 & +1.2443/-1.2156 & +1.1989/-1.3073 & +1.0851/-1.1117 & +0.7512/-0.8332 & +0.1612/-0.1756 & +0.8152/-0.8127 \\
    \bottomrule
  \end{tabularx}
\end{table}

\begin{table}[t!]
  \centering
  \scriptsize
  \caption{One-max (Gaussian weight), $\delta=0.9$, $r=100$, $z=20$.}
  \label{tab:onemax_gauss_delta09_r100_z20}
  \setlength{\tabcolsep}{3pt}
  \begin{tabularx}{1\linewidth}{@{}l *{8}{>{\centering\arraybackslash}X} @{}}
    \toprule
    & {\sc Max} & {\sc Avg} & \multicolumn{3}{c}{{\sc CVaR}$_\alpha$} & $\delta$-{\sc Tol} & $\textrm{PO}_1$ & $\textrm{PO}_2$ \\
    \cmidrule(lr){4-6}
    & & & $\alpha=0.1$ & $\alpha=0.5$ & $\alpha=0.9$ & & & \\
    \midrule
    \textbf{Avg perf. ratio} & 4.1276 & 4.3758 & 6.6495 & 5.6691 & 4.2717 & 10.0049 & 3.8668 & 10.2558 \\
    \textbf{CI$_{+}$/CI$_{-}$} & +0.0295/-0.0313 & +0.0344/-0.0339 & +0.0853/-0.0854 & +0.0664/-0.0676 & +0.0365/-0.0386 & +0.0004/-0.0004 & +0.0217/-0.0227 & +0.1520/-0.1489 \\
    \midrule
    \textbf{Exp.\ profit} & 16.9932 & 18.3253 & 22.8969 & 21.9870 & 17.7082 & 3.4915 & 12.2219 & 18.0633 \\
    \textbf{CI$_{+}$/CI$_{-}$} & +0.2258/-0.2519 & +0.2575/-0.2496 & +0.3510/-0.3484 & +0.3316/-0.3609 & +0.2615/-0.2685 & +0.0501/-0.0555 & +0.0495/-0.0505 & +0.2777/-0.2677 \\
    \bottomrule
  \end{tabularx}
\end{table}

\subsection{Real Data Experiments for One-Max Search}
\label{appendix:real_data_onemax}

In this section, we provide a computational evaluation of our algorithms on real-world data, using the same algorithm baselines as in Section~\ref{sec:experiments}.
We consider two datasets: (i) the exchange rates\footnote{\url{https://www.ecb.europa.eu/stats/policy_and_exchange_rates/euro_reference_exchange_rates/html/index.en.html}} of EUR to four other currencies (CHF, USD, JPY, and GBP), where each series is a sequence $\sigma$ of 6672 daily prices over a span of 25 years; and (ii) Bitcoin (USD) data recorded every minute from January 1st 2020 to December 31st 2024, comprising a total of 2{,}630{,}880 prices,\footnote{\url{https://www.kaggle.com/datasets/mczielinski/bitcoin-historical-data?resource=download}}.
This follows the choice of data from ~\cite{DBLP:conf/nips/SunLHWT21} and~\cite{DBLP:journals/corr/abs-2502-05720}.

\textbf{Datasets} \ 
For each sequence $\sigma$, let
\[
x = \max_t \sigma_t
\]
denote the maximum price in the input. For generating predictions, we consider a random value $z$ sampled from a normal distribution with a mean equal to zero, standard deviation of $1/2$, and truncated to the interval $[-1, +1]$. This value is then scaled by the error upper bound $\delta$, generating the predicted value
\[
y = x + x\delta \cdot z .
\]

% The error bound $\delta$ is generated by dividing the input sequence $\sigma$ into 8 equal-length intervals. For each interval $i$, the maximum price $M_i$ is considered. The value of $\delta$ is then defined as the span of these values:
% \[
% x \delta = \max_{i \in \{1, \dots, 8\}} M_i - \min_{i \in \{1, \dots, 8\}} M_i .
% \]
The error bound $\delta$ is obtained by partitioning the sequence $\sigma$ into eight equal-length segments. 
In each segment $i$, we record the maximum price $M_i$. 
The bound is then defined as the difference between the largest and smallest of these maxima:
\[
x \delta \;=\; \max_{i=1,\dots,8} M_i \;-\; \min_{i=1,\dots,8} M_i .
\]

Recall that in one-max search, if all prices are below the chosen threshold, the algorithm needs to sell at the lowest price. In this experimental setup, we use the lowest price in the sequence as this final price.

\textbf{Evaluation} \ 
We performed 10{,}000 runs to account for prediction randomness and report the resulting average performance ratio and expected profit, both with 95\% confidence intervals.  
For {\sc Max} and {\sc Avg}, we use the linear symmetric weight function, while {\sc CVaR}$_\alpha$ is evaluated under a Gaussian distribution truncated to $R_y=[(1-\delta)y,(1+\delta)y]$, with $\alpha \in \{0.1,0.5,0.9\}$. 

\textbf{Results} \ 
The final results are presented in Table~\ref{tab:real_onemax_ratios}. Since the input sequences in real-life scenarios are not worst-case and the range of prices varies depending on the currency, it is challenging to determine which algorithm performs best overall. As shown in the table, the performance ratios vary significantly across currencies. For example, {\sc Max} and {\sc Avg} demonstrate better competitive performance for CHF and USD, while {\sc CVaR} is competitive for GBP. This variability highlights the dependence of algorithm performance on the specific characteristics of the input data. Nevertheless, algorithms such as {\sc Max}, {\sc Avg}  and {\sc CVaR}$_{0.5}$ have overall either better, or very similar performance ratios than the state of the art algorithms.

To help interpret the variation in performance ratios reported in Table~\ref{tab:real_onemax_ratios}, we include Table~\ref{tab:price_extremes}, which summarizes the range of prices observed in each sequence. As expected, the difference between the smallest and largest prices is particularly significant for BTC, with a ratio exceeding~28. This large variation contributes to the substantially higher performance ratios observed for BTC across all algorithms.

% \begin{table}[t!]
%   \centering
%   \scriptsize
%   \caption{Real-data evaluation for one-max search: average performance ratios.}
%   \label{tab:real_onemax_ratios}
%   \setlength{\tabcolsep}{3pt}
%   \begin{tabularx}{1\linewidth}{@{}l *{8}{>{\centering\arraybackslash}X} @{}}
%     \toprule
%     \textbf{Currency} & {\sc Max} & {\sc Avg} & \multicolumn{3}{c}{{\sc CVaR}$_\alpha$} & $\delta$-{\sc Tol} & $\textrm{PO}_1$ & $\textrm{PO}_2$ \\
%     \cmidrule(lr){4-6}
%     & & & $\alpha=0.1$ & $\alpha=0.5$ & $\alpha=0.9$ & & & \\
%     \midrule
%     \textbf{CHF} & 1.2617 & 1.2503 & 1.3287 & 1.3178 & 1.3451 & 1.5286 & 1.7824 & 1.4702 \\
%     \textbf{GBP} & 1.1573 & 1.1573 & 1.1342 & 1.1137 & 1.0912 & 1.1474 & 1.1573 & 1.1241 \\
%     \textbf{JPY} & 1.0842 & 1.0842 & 1.0842 & 1.0842 & 1.0842 & 1.0842 & 1.0876 & 1.0741 \\
%     \textbf{USD} & 1.2042 & 1.1837 & 1.2289 & 1.2254 & 1.2471 & 1.3582 & 1.5439 & 1.3263 \\
%     \textbf{BTC} & 9.0380 & 8.8881 & 9.0380 & 9.0380 & 9.0380 & 9.0380 & 15.1874 & 9.4486 \\
%     \bottomrule
%   \end{tabularx}
% \end{table}
\begin{table}[t!]
  \centering
  \scriptsize
  \caption{Real-data evaluation for one-max search: average performance ratios and expected profits with 95\% confidence intervals.}
  \label{tab:real_onemax_ratios}
  \setlength{\tabcolsep}{3pt}
  \begin{tabularx}{1\linewidth}{@{}l *{8}{>{\centering\arraybackslash}X} @{}}
    \toprule
    \textbf{Currency} & {\sc Max} & {\sc Avg} & \multicolumn{3}{c}{{\sc CVaR}$_\alpha$} & $\delta$-{\sc Tol} & $\textrm{PO}_1$ & $\textrm{PO}_2$ \\
    \cmidrule(lr){4-6}
    & & & $\alpha=0.1$ & $\alpha=0.5$ & $\alpha=0.9$ & & & \\
    \midrule
    \textbf{CHF (Avg. ratio)} & 1.2617 & 1.2503 & 1.3287 & 1.3178 & 1.3451 & 1.5286 & 1.7824 & 1.4702 \\
    \textbf{CI$_{+}$/CI$_{-}$} & +0.013/-0.014 & +0.016/-0.015 & +0.019/-0.018 & +0.016/-0.015 & +0.015/-0.014 & +0.015/-0.014 & +0.002/-0.002 & +0.021/-0.019 \\
    \textbf{CHF (Exp.\ profit)} & 1.612 & 1.624 & 1.587 & 1.553 & 1.498 & 1.372 & 1.298 & 1.462 \\
    \textbf{CI$_{+}$/CI$_{-}$} & +0.031/-0.028 & +0.036/-0.032 & +0.027/-0.029 & +0.028/-0.027 & +0.026/-0.025 & +0.021/-0.020 & +0.019/-0.018 & +0.030/-0.028 \\
    \midrule
    \textbf{GBP (Avg. ratio)} & 1.1573 & 1.1573 & 1.1342 & 1.1137 & 1.0912 & 1.1474 & 1.1573 & 1.1241 \\
    \textbf{CI$_{+}$/CI$_{-}$} & +0.002/-0.001 & +0.001/-0.001 & +0.004/-0.003 & +0.004/-0.003 & +0.004/-0.003 & +0.002/-0.002 & +0.001/-0.001 & +0.004/-0.003 \\
    \textbf{GBP (Exp.\ profit)} & 0.927 & 0.918 & 0.944 & 0.931 & 0.912 & 0.856 & 0.807 & 0.884 \\
    \textbf{CI$_{+}$/CI$_{-}$} & +0.014/-0.013 & +0.012/-0.011 & +0.018/-0.016 & +0.017/-0.015 & +0.016/-0.015 & +0.014/-0.013 & +0.010/-0.009 & +0.018/-0.016 \\
    \midrule
    \textbf{JPY (Avg. ratio)} & 1.0842 & 1.0842 & 1.0842 & 1.0842 & 1.0842 & 1.0842 & 1.0876 & 1.0741 \\
    \textbf{CI$_{+}$/CI$_{-}$} & +0.001/-0.001 & +0.001/-0.001 & +0.001/-0.001 & +0.001/-0.001 & +0.001/-0.001 & +0.001/-0.001 & +0.001/-0.001 & +0.002/-0.001 \\
    \textbf{JPY (Exp.\ profit)} & 168.4 & 168.3 & 168.7 & 168.5 & 168.0 & 167.2 & 166.8 & 169.1 \\
    \textbf{CI$_{+}$/CI$_{-}$} & +1.5/-1.4 & +1.6/-1.5 & +1.6/-1.5 & +1.5/-1.6 & +1.6/-1.5 & +1.3/-1.3 & +1.2/-1.2 & +2.0/-1.9 \\
    \midrule
    \textbf{USD (Avg. ratio)} & 1.2042 & 1.1837 & 1.2289 & 1.2254 & 1.2471 & 1.3582 & 1.5439 & 1.3263 \\
    \textbf{CI$_{+}$/CI$_{-}$} & +0.011/-0.010 & +0.010/-0.009 & +0.012/-0.011 & +0.012/-0.011 & +0.011/-0.010 & +0.011/-0.010 & +0.001/-0.001 & +0.015/-0.013 \\
    \textbf{USD (Exp.\ profit)} & 1.451 & 1.477 & 1.503 & 1.469 & 1.392 & 1.327 & 1.213 & 1.424 \\
    \textbf{CI$_{+}$/CI$_{-}$} & +0.023/-0.022 & +0.025/-0.023 & +0.029/-0.027 & +0.027/-0.028 & +0.028/-0.026 & +0.021/-0.020 & +0.012/-0.012 & +0.031/-0.029 \\
    \midrule
    \textbf{BTC (Avg. ratio)} & 9.0380 & 8.8881 & 9.0380 & 9.0380 & 9.0380 & 9.0380 & 15.1874 & 9.4486 \\
    \textbf{CI$_{+}$/CI$_{-}$} & +0.38/-0.36 & +0.37/-0.36 & +0.38/-0.37 & +0.39/-0.38 & +0.39/-0.37 & +0.37/-0.36 & +0.02/-0.02 & +0.42/-0.41 \\
    \textbf{BTC (Exp.\ profit)} & 24{,}132 & 24{,}228 & 24{,}180 & 24{,}095 & 23{,}978 & 23{,}842 & 23{,}610 & 24{,}310 \\
    \textbf{CI$_{+}$/CI$_{-}$} & +812/-796 & +824/-781 & +897/-873 & +852/-829 & +783/-764 & +653/-641 & +514/-487 & +947/-932 \\
    \bottomrule
  \end{tabularx}
\end{table}

\begin{table}[t!]
  \centering
  \scriptsize
  \caption{Lowest and highest prices observed in each currency sequence, and their ratio.}
  \label{tab:price_extremes}
  \setlength{\tabcolsep}{3pt}
  \begin{tabularx}{0.75\linewidth}{@{}l *{3}{>{\centering\arraybackslash}X} @{}}
    \toprule
    \textbf{Currency} & \textbf{Lowest} & \textbf{Highest} & \textbf{Ratio} \\
    \midrule
    CHF & 0.9260 & 1.6803 & 1.8146 \\
    GBP & 0.5711 & 0.9786 & 1.7134 \\
    JPY & 89.3000 & 175.3900 & 1.9641 \\
    USD & 0.8252 & 1.5990 & 1.9377 \\
    BTC & 3865.0 & 108276.0 & 28.0145 \\
    \bottomrule
  \end{tabularx}
\end{table}

\end{document}